\newtheorem{theorem}{Theorem}[section]
\newtheorem{lemma}[theorem]{Lemma}
\newtheorem{remark}[theorem]{Remark}
\newtheorem{definition}[theorem]{Definition}
\newtheorem{example}[theorem]{Example}
\newtheorem{assumption}[theorem]{Assumption}
\DeclareMathOperator{\E}{\mathbb{E}}
\DeclareMathOperator{\R}{\mathbb{R}}
\DeclareMathOperator{\Tr}{\text{Tr}}
\DeclareMathOperator{\T}{\text{T}}
\DeclareMathOperator{\diag}{\mathrm{diag}}
\DeclareMathOperator{\Vect}{\mathrm{Vec}}
\DeclareMathOperator{\HH}{\mathcal{H}}
\newcommand{\G}{{\mathcal{G}}}
\newcommand{\V}{{\mathcal{V}}}
\newcommand{\EE}{{\mathcal{E}}}
\DeclareMathOperator{\CT}{\text{H}}
\DeclareMathOperator{\N}{\mathnormal{n}}
\DeclareMathOperator*{\argmin}{arg\,min}
\DeclareMathOperator*{\argmax}{arg\,max}
\newcommand{\rhoo}{\rho_\text{ss}}
\newcommand{\rhoot}{\tilde{\rho}_\text{ss}}
\newcommand{\LK}{L_{_F}}
\newcommand{\Wc}{\varpi}
\newcommand{\wc}{w}
\newcommand{\Sp}{\hspace{0.05cm}}
\newcommand{\numRowsOfC}{m}
\newcommand{\tagarray}{%
	\mbox{}\refstepcounter{equation}%
	$(\theequation)$%
}
\newcommand{\BR}{ }
\newcommand{\BL}{ }
\newcommand{\HS}{\hspace{0.05cm}}
\newcommand{\stkout}[1]{\ifmmode\text{\sout{\ensuremath{#1}}}\else\sout{#1}\fi}
\def\BState{\State\hskip-\ALG@thistlm}
\title{\Huge 
	Performance  Improvement in {\BL Noisy} Linear  \\Consensus Networks with Time-Delay}
\author{Yaser Ghaedsharaf$^{\dagger}$, Milad Siami$^{\ddagger}$, Christoforos Somarakis$^{\dagger}$, and Nader Motee$^{\dagger}$ 
	 \thanks{This research is supported by NSF CAREER ECCS-1454022 and ONR YIP N00014-16-1-2645.}
	\thanks{$^{\dagger}$ 
	Y. Ghaedsharaf, C. Somarakis, and N. Motee are with the Department of Mechanical Engineering and Mechanics, Lehigh University, Bethlehem, PA 18015, USA {\tt\small \{ghaedsharaf, csomarak, motee\}@lehigh.edu}}
	\thanks{$^{\ddagger}$M. Siami is with the Institute for Data, Systems, and Society, Massachusetts Institute of
	Technology (MIT), Cambridge, MA 02139, USA.
	{\tt\small siami@mit.edu}}
}
\begin{document}
	
	\maketitle
	\thispagestyle{empty}
	\pagestyle{empty}
	
	\begin{abstract}
		We analyze performance of a class of time-delay first-order consensus networks from a graph topological perspective and present methods to improve it. The performance is measured by network's square of $\mathcal{H}_2$-norm and it is shown that it is a convex function of Laplacian eigenvalues and the coupling weights of the underlying graph of the network. First, we propose a tight convex, but simple, approximation of the performance measure in order to achieve lower complexity in our design problems by eliminating the need for eigen-decomposition. The effect of time-delay  reincarnates itself in the form of non-monotonicity, which results in nonintuitive behaviors of  the performance as a function of graph topology. Next, we present three methods to improve the performance by growing, re-weighting, or sparsifying the underlying graph of the network. It is shown that our suggested algorithms provide near-optimal solutions with lower complexity with respect to existing methods in literature. 
	\end{abstract}

	\section{INTRODUCTION}\label{sec1}
	\allowdisplaybreaks
	
	{\it Our objective} is to characterize ${\HH}_2$-norm performance of a noisy time-delay linear consensus network using the spectrum of the Laplacian matrix, quantify inherent fundamental limits on its best achievable performance, and eventually develop low time-complexity and efficient algorithms to improve the performance.
	
	\vspace{0.1cm}
	{\it Literature Review:} Measures for performance of consensus networks and the problem of designing such networks to achieve optimal performance have been extensively studied in the past decades. Performance of consensus networks in the absence of time-delay were studied in \cite{Siami16tac,Bamieh:2012,young2010robustness,Zelazo:2011}.
	Optimal design of averaging networks was studied in \cite{Xiao:2004}.
	Minimizing the total effective resistance of the graph was investigated in \cite{ghosh2008minimizing}. By using the fact that the total effective resistance of underlying graph is proportional to the $\HH_2$-norm squared of a first-order consensus networks, papers \cite{siami2016topology,summers2015topology,moghaddam2015interior} consider improving the performance measure by growing the underlying graph.
	Authors in \cite{Olfati:2004}  establish the relation between algebraic connectivity of the coupling graph of the network and performance of linear consensus networks in absence of time-delay. NP-hardness of the problem of adding a prespecified number of edges to a network via  maximizing its algebraic connectivity is proven in  \cite{mosk2008maximum}, while \cite{ghosh2008minimizing} suggested a heuristic using the Fiedler vector of the graph to address the problem. Despite extensive study of networks' performance in the absence of time-delay, limited attention has been given to performance analysis  of linear consensus networks in presence of time-delay.
	
	Stability analysis of first-order linear consensus networks with homogeneous time-delay and undirected couplings was studied in \cite{Olfati:2004}.   Necessary and sufficient conditions for stability of linear undirected network with non-uniform delay were reported in \cite{Munz:2010}. Scalable condition for robust stability of networks with heterogeneous stable dynamics using s-hulls was considered in \cite{lestas2006scalable,lestas2007scalable,lestas2007s}. The authors of  \cite{jonsson2007low} propose a low complexity stability criterion for a class of large-scale systems and a scalable robust stability criterion for interconnected systems with heterogeneous linear time-invariant subsystems is reported in \cite{jonsson2010scalable}. Stability analysis of consensus and oscillatory nonlinear networks as well as switching networks with heterogeneous time-delays were tackled in \cite{Papachristodoulou:2005}. The convex hull notion was utilized in \cite{kao2009characterization,jonsson2007low,jonsson2010scalable} to study robust stability of networked systems with normal interconnection matrices. Moreover, a unifying framework to analyze network synchronization using integral quadratic constraints is proposed in \cite{khong2016unifying}. In \cite{somarakis2015delay}, authors study convergence rate of averaging networks subject to time-delay.   The authors of \cite{Qiao:2013} aim at designing a robustly stable network with respect to time-delay, and in\cite{rafiee2010optimal}, the authors maximize algebraic connectivity of underlying  graph of a time-delay linear consensus network; nonetheless, importance and relevance of algebraic connectivity in performance evaluation of time-delay networks remains disputable.   Lastly,  \cite{hunt2012networkThesis, hod2010analytic,hunt2010network} analyze synchronization efficiency in a first-order consensus network with uniform or multiple delays. Although most of their work are limited to numerical results, some of them coincide with our work in derivation of the performance measure  in a special case, i.e., when the output matrix is a centering matrix.  
	
	{\BL The purpose of this manuscript is to analyze performance of time-delay linear consensus networks and propose design algorithms to achieve best attainable performance for such networks. We assume that all time-delays are identical and coupling (communication) graphs of networks are undirected (bidirectional). These assumptions allow us to quantify performance explicitly using closed-form formulae. Relaxing these assumptions to include networks with nonuniform time delays and/or directed coupling graphs are challenging and require separate investigations, which may not lend themselves to analytic performance evaluation and prevent us from devising low time-complexity algorithms \cite{dezfulian2018performance,moradian2017dynamic}. Since time-delay is intrinsic to all networked control systems, devising efficient and  scalable algorithms for analysis and design of networked systems with higher-order dynamics and time-delay will remain an active research area, for many years to come.
	
	The current manuscript extends results of  \cite{ghaedsharaf2016interplay,ghaedsharaf2016complexities,ghaedsharaf2017performance} and presents a consistent story on how to analyze and improve the performance of noisy linear consensus networks subject to time-delay. In \cite{ghaedsharaf2016interplay}, we studied properties of the first-order consensus network's $\HH_2$-norm in the presence of time-delay, and in \cite{ghaedsharaf2016complexities,ghaedsharaf2017performance}, we offered growing and sparsification  algorithms to enhance the $\HH_2$-norm of the network. This manuscript extends  results of \cite{ghaedsharaf2016interplay,ghaedsharaf2016complexities,ghaedsharaf2017performance}  to the general output matrix and provides detailed proofs and explanation for all theorems and lemmas. Furthermore, Theorems \ref{RobustTop}, \ref{RobustTopPath}, \ref{RobustTopRing}, which relate the performance measure to known graph topologies
	are new additions to this manuscripts. Algorithm  \ref{randGreedy}, which is superior to Algorithm \ref{simpGreedy}, is also our new contribution. This superiority is shown through Example \ref{simp-rand}. In addition, materials in section \ref{sec-IX}  are new, which provided a theoretical guideline on which design methods (reweighing, growing, or sparsification) should be utilized to enhance the performance.

	\vspace{0.1cm}
	Our main contributions include (i) investigation of functional  properties of the performance measure and characterization of fundamental limits on its best achievable values, (ii) efficient approximation of the performance measure for network synthesis, (iii) low time complexity  algorithms to design state feedback controllers for  performance enhancement of time-delay linear consensus networks. 
	In section \ref{sec-III}, we express the $\HH_2$-norm performance of a  time-delay linear consensus network in terms of its Laplacian spectrum. Furthermore, we prove that this performance measure is convex with respect to coupling weights and Laplacian  spectrum, and in addition, it is an increasing function of time-delay. 
	In section \ref{sec-IV}, we discuss topologies with optimal performance. Furthermore, we quantify a sharp lower bound on the best achievable performance for a  network with a fixed time-delay. In presence of time-delay, the $\HH_2$-norm performance of first-order consensus network is not monotone decreasing with respect to connectivity, which impose challenges in design of the optimal network as  increasing connectivity may deteriorate the performance. Then, we present methods to improve the performance measure. We categorize these procedures as growing, reweighting, and sparsification. Although the $\HH_2$-norm performance is a convex function of Laplacian eigenvalues, direct use of  this spectral function in our network design problems requires eigen-decomposition, which adds to time complexity of our design procedures. To overcome this, our key idea is to calculate an approximation function of the performance measure that spares us eigen-decomposition of the Laplacian matrix. In section \ref{sec-V}, we tackle the combinatorial problem of improving the non-monotone performance measure of the time-delay network by adding new interconnection links.  Our time-complexity analysis of our proposed algorithm to grow a time-delay network shows that it can be done in $\mathcal{O}(n^3+ \numRowsOfC n^2+kn^2)$ arithmetic operations, where $n$ is number of nodes, $\numRowsOfC$ is number of rows of the output matrix $C$  and $k$ is maximum number of new interconnections.
	Section \ref{sec-VI} discusses reweighting of the coupling weights as an approach to improve the performance measure. This design problem can be cast as a semidefinite programming (SDP) problem, which inherits time-complexity of existing SDP solvers. In the absence of time-delay, removing interconnections will deteriorate the $\HH_2$-norm performance due to monotonicity property of the performance measure. In section \ref{sec-VII}, we explain how one can sparsify the coupling graph of the time-delay network to improve the performance measure. In section \ref{sec-VIII}, we discuss sensitivity of the performance measure with respect to weight of couplings, where it helps us in knowing whether we should improve the performance by increasing connectivity through growing the network or we should sparsify the network without knowing the spectrum of the Laplacian matrix. We use several simulation case studies to show effectiveness of our proposed design algorithms.

	\section{Preliminaries and Definitions}\label{sec2}
	\subsection{Basic Definitions}
	Throughout the paper the following notations will be used. We denote transpose and conjugate transpose of matrix $A$ by $A^{\T}$ and $A^{\CT}$, respectively. Also, set of non-negative (positive) real numbers is indicated by $\mathbb{R}_{+}$ ($\mathbb{R}_{++}$). An undirected weighted graph $\mathcal{G}$ is denoted by the triple $\mathcal{G=(V,E,}w)$,  where $\mathcal{V}=\{v_1,v_2, \dots, v_{\N}\}$ is set of nodes (vertices) of the graph, $\EE$ is set of links (edges) of the graph,  and $w: \EE \rightarrow \mathbb{R}_{++}$ is the weight function that maps each link to a positive scalar. We let $L$ to be the Laplacian of the graph, defined by $L=\Delta-A$, where $\Delta$ is diagonal matrix of node degrees and $A$ is the adjacency matrix of the graph. The $n \times 1$ vector of all zeros and ones  are denoted by $0_{\N}$ and $\mathbf{1}_{\N}$, respectively, while $J_{\N}=\mathbf{1}_{\N}\mathbf{1}_{\N}^{\T}$ is the $n \times n$ matrix of all ones. Furthermore, the $n \times n$ centering matrix is denoted by $M_{\N}=I_{\N}-\frac{1}{\N}J_{\N}$. For an undirected graph with $\N$ nodes, Laplacian eigenvalues are real and  shown in an order sequence as $0=\lambda_1< \lambda_2
	\leq\dots\leq \lambda_{\N}$. We denote the complete unweighted graph by $\mathcal{K}_{\N}$.
	We indicate Moore-Penrose pseudo-inverse of a matrix $P$ by $P^{\dagger} = [p_{ij}^{\dagger}]$ and we define
	\begin{align}
	r_e(P) \coloneqq p_{ii}^{\dagger}+p_{jj}^{\dagger}-2 p_{ij}^{\dagger}
	\end{align}
	for every given link $e=\{i,j\}$. Accordingly, in a graph with Laplacian matrix $L$, the effective resistance between two ends of a given link $e=\{i,j\}$ is denoted by $r_e(L)$. For a given $n$-tuple $y$, operator $\diag(y)$ maps   $y$ to an $n\times n$ diagonal matrix whose main diagonal elements are elements of $y$.  For a matrix $P\in \mathbb{R}^{n \times m}$, the vectorization of $P$, denoted by $\Vect(P) \in \mathbb{R}^{nm}$, is a vector obtained by stacking up columns of matrix $P$ on top of one another. For a square matrix $X$, matrix functions $\cos (X)$ and $\sin(X)$ are defined as 
	\begin{eqnarray}
	\cos(X) = \sum_{k=0}^{\infty}\frac{(-1)^k X^{2k}}{(2k)!},~ \sin(X) = \sum_{k=0}^{\infty}\frac{(-1)^k X^{2k+1}}{(2k+1)!}.\label{sin-cos}
	\end{eqnarray}

	\begin{definition}\label{SchurDef}
		A function $g: \mathbb{R}^{\N}\rightarrow \mathbb{R}$ is Schur-convex  if for every doubly stochastic matrix $D \in \mathbb{R}^{\N \times \N}$ and all ${x \in \mathbb{R}^{\N}}$, we have 
		\begin{align*}
		g(Dx)\leq g(x).
		\end{align*}
	\end{definition}
	
	\subsection{Noisy Consensus Networks with Time-Delay}
	We consider the class of  linear dynamical networks that consist of multiple agents with scalar state variables $x_i$ and control inputs $u_i$ whose dynamics evolve in time according to 	
	\begin{equation*}
	\dot{x}_i(t)  =  u_i(t) +\xi_i(t) \label{TI-consensus-algorithm} 
	\end{equation*}
	for all $i=1,\ldots,n$, where initial condition $x_i(0)=x_i^0$ is given. It is assumed that 
	$x_i(t)= 0$ for all $t \in [-\tau, 0)$. The impact of an uncertain environment on each agent's dynamics is modeled by the exogenous noise input $\xi_i(t)$. 
	We assume that every agent experiences a time-delay in accessing, computing, or sharing its own state information with itself and other neighboring agents. It is assumed that all time-delays for all agents are identical and equal to a nonnegative number $\tau$. We apply the following feedback control law
	\begin{equation}
	u_i(t) ~=~\sum_{j=1}^{n} k_{ij} \big(x_j(t-\tau) - x_i(t-\tau)\big),\label{feedback-law}
	\end{equation}
	to every agent of this network. The resulting closed-loop network will be a first-order linear consensus network, whose dynamics can be written in the following compact form
	\begin{subequations}\label{eq:system}
		\begin{align}
		\dot x(t) & =   -L\, x(t-\tau)~+~\xi(t)\label{first-order}\\
		y(t)& =\,\, C \,x(t)\label{first-order-output}
		\end{align}
	\end{subequations}
	%
	with $x(t)= 0$ for all $t \in [-\tau, 0)$ and $x(0)=x^0$, where  $x^0 = [x^0_1,  \ldots,  x^0_n]^{\rm T}$ is the initial condition, $x = [x_1,  \ldots,  x_n]^{\rm T}$ is the state, $y = [y_1,  \ldots,  y_{\numRowsOfC}]^{\rm T}$ is the output, and $\xi = [\xi_1,  \ldots,  \xi_n]^{\rm T}$ is the exogenous noise input of the network. It is assumed that $\xi(t)$ is a vector of independent Gaussian white noise processes with zero mean and identity covariance, i.e.,
	\begin{align*}
	\E\Big[\xi(t_1)\xi^{\T}(t_2) \Big] = I_{\N} \delta(t_1-t_2),
	\end{align*}
	where $\delta(t)$ is the delta function. 
	The state matrix of the network is a graph Laplacian that is defined by $L=[l_{ij}]$, where 
	\begin{equation*}
	\displaystyle l_{ij} := \left\{\begin{array}{ccc}
	-k_{ij} & \textrm{if} & i \neq j \\
	k_{i1}+\ldots+k_{in}& \textrm{if} & i=j
	\end{array}\right..
	\end{equation*}

	\vspace{0.1cm}
	\begin{assumption}\label{C_ones} The vector of all ones  is in the null space of the output matrix, i.e., $C 1_{\N}=0$.
	\end{assumption}

	The underlying coupling graph of the consensus  network \eqref{first-order}-\eqref{first-order-output} is a graph $\G=(\V,\mathcal E, w)$ with node set $\V=\{1,\ldots,n\}$, edge set 
	\begin{equation*} 
	\EE=\Big\{ \{i,j\}~\big|~\forall~i,j \in \V,~k_{ij} \neq 0\Big\}, 
	\end{equation*}
	and weight function 
	\begin{equation*}
	w(e)=k_{ij}   
	\end{equation*}
	for all $e=\{i,j\} \in \EE$, and $w(e)=0$ if $e \notin \EE$. The Laplacian matrix of graph $\G$ is equal to $L$. 
	\vspace{0.2cm}
	
	\begin{assumption}\label{assump-simple}
		All feedback gains satisfy the following properties for all $i,j \in \V$:  \noindent (i)~non-negativity: $k_{ij} \geq 0$, 
		\noindent (ii)~symmetry: $k_{ij}=k_{ji}$,
		\noindent (iii)~simpleness: $k_{ii}= 0$.
	\end{assumption}
	
	\vspace{0.2cm}
	Property (ii) implies that the underlying graph $\G$ is undirected and property (iii) means that there is no self-loop in the network.
	
	\vspace{0.2cm}
	\begin{assumption}\label{assum-coupling-graph}
		The coupling graph $\G$ of the consensus network \eqref{first-order}-\eqref{first-order-output} is connected and time-invariant.
	\end{assumption}
	\vspace{0.1cm}
	
	This assumption implies that only the smallest Laplacian eigenvalues  is equal to zero, i.e., $\lambda_1=0$ and all other ones are strictly positive, i.e., $\lambda_i > 0$ for $i=2,...,n$.

	\subsection{Network Performance Measures}
	When there is no input noise, i.e., $\xi(t) \equiv 0_n$, it is already known \cite{Olfati:2004} that under the condition
	\begin{equation}\label{eq:stabilityOlfati}
	\lambda_{\N}~<~\frac{\pi}{2\tau} 
	\end{equation}
	as well as graph connectivity, states of all agents converge to average of all initial states; whereas in presence of input noise, the agents' states fluctuate around their average.

	In order to quantify the quality of noise propagation in dynamical network  \eqref{eq:system}, we adopt the following performance measure  
	\begin{equation}\label{eq:coherenceDef}
	\rhoo(L;\tau)
	=\lim_{t \rightarrow \infty} \E\Big[y^{\T}(t) y(t)\Big].
	\end{equation}
	
	It can be verified that transfer function of consensus network (\ref{eq:system}) is equal to the transfer function of the following system
	\begin{align}
	\label{eq:systemProjected}
	\begin{cases}
	\dot{\hat{x}}(t)&=-\big({L}+\frac{1}{\tau \N}J_{\N}\big)\hat{x}(t-\tau)+M_{\N}\xi(t)\\
	y(t)  & =C \hat{x}(t)
	\end{cases},
	\end{align}
	where $\hat{x}(t)=M_{\N}x(t)$ is projection of network's states on to the disagreement subspace, i.e., $\hat{x}_i(t)=x_i(t)-\bar{x}(t)$ in which 
	\[\bar{x}(t)=\frac{1}{n}\big(x_1(t)+\ldots+x_n(t)\big). \]
	Since for $\tau < \frac{\pi}{2\lambda_{\N}}$ the system (\ref{eq:systemProjected}) is exponentially stable and transfer function of system (\ref{eq:systemProjected}) is identical to transfer function of system (\ref{eq:system}), we infer that the single marginally stable mode of the consensus network (\ref{eq:system}) (which corresponds to $\lambda_1=0$) is not observable in the output $y(t)$ according to Assumption \ref{C_ones}, which results in boundedness and well-definedness of the performance measure \eqref{eq:coherenceDef}. This method has  been widely exploited in the literature\cite{siami2017centrality,ghaedsharaf2017eminence,Zelazo:2011,young2010robustness}.
	
	\vspace{0.1cm	}
	{ We now list three different and meaningful 
		coherency measures that has been lately used in the context of linear consensus network  \cite{Bamieh:2012, young2010robustness, Zelazo:2011}.
		
		\vspace{0.1cm}
		\noindent {\it  (i) Pairwise deviation.}   
		\begin{align*}
		\frac{1}{2n}\sum_{i,j=1}^n (x_i - x_j)^2 \, = \,&  x^T(t) B_{\mathcal{K}_{\N}}^{\T} \Sp \mathrm{diag} \left(\frac{1}{n},\ldots,\frac{1}{n} \right) \Sp B_{\mathcal{K}_{\N}} x(t) \\
		\, = \,& x^T(t) M_n x(t)
		\end{align*}
		where $B_{\mathcal{K}_{\N}}$ is the signed edge-to-vertex incidence matrix of the complete graph ${\mathcal{K}_{\N}}$. 
		
		\vspace{0.2cm}
		\noindent {\it  (ii) Deviation from average.} 
		
		\begin{equation*}
		\| x(t) - \bar{x}(t)\mathbf{1}_{n}\|_2^2 = \sum_{i=1}^n \Sp (x_i(t) - \bar{x}(t))^2= x^T(t) M_n x(t) 
		\end{equation*}

		\vspace{0.1cm}
		\noindent {\it  (iii) Norm of projection onto the stable subspace.}  When there is no noise, network \ref{first-order} is marginally stable and we only consider the dynamics on the stable subspace of $\R^n$ that is orthogonal to the subspace spanned by $\mathbf{1}_{n}$. For a given $Q \in  \R^{(n-1) \times n}$ whose rows form an orthonormal basis of the disagreement subspace, the norm of the projection of $x(t)$ onto the stable  subspace is a  coherency measure  {\cite{young2010robustness}}
		that is given by
		\begin{equation*}
		\| Q x(t) \|_2^2 = x^T(t) Q^T Q x(t)  
		\end{equation*}
		where $Q \mathbf{1}_{n} = 0$ and $Q^T Q = M_n$. 
	}
	
	It can be proven that 
	our utilized measure of performance (\ref{eq:coherenceDef}) is equal to the square of ${\HH}_2$-norm of the network from $\xi$ to $y$. Thus, we utilize the interpretation of energy of impulse response in order to calculate performance measure of the network \cite{Doyle89}, i.e., we have
	\begin{eqnarray}
	\rhoo(L;\tau) \,=\,\frac{1}{2\pi}\Tr\Big[\int_{-\infty}^{+\infty}{G^{\CT}(j\omega)G(j\omega) \: d\omega}\Big],\label{eq:coherenceDef3}
	\end{eqnarray}
	where $G(s)$ is the transfer function of  (\ref{eq:system}) from $\xi$ to $y$.
	
	\subsection{Problem Statement}
	Our main objective is to explore all possible ways to improve   performance of the time-delay first-order consensus networks \eqref{eq:system} with respect to performance measure \eqref{eq:coherenceDef}. In order to tackle this problem, first we need to quantify performance measure \eqref{eq:coherenceDef}  in terms of the spectrum of the underlying coupling graph of the network. Next, we need to characterize inherent  fundamental limits on the best achievable  performance and classify those networks that can  actually achieve this hard limit. There are only four possible ways to improve performance of network \eqref{eq:system} by manipulating its underlying coupling graph: growing, sparsification, and reweighting. Therefore, we need to investigate under what conditions, performance can be improved in each of these three possible scenarios. Improving performance in presence of time-delay is a challenging task due to the counter-intuitive effects of { connectivity}  on the performance.

	\section{Properties of the Performance in Presence of  Time-Delay}\label{sec-III}\allowdisplaybreaks[3]
	In the following Theorem, we derive an exact expression for the performance measure of the consensus network. 
	
	\vspace{0.2cm} 
	\begin{theorem}\label{coherenceMainTheorem}
		For dynamical network \eqref{eq:system}, the performance measure (\ref{eq:coherenceDef}) can be specified by
		\begin{eqnarray}\label{eq:coherenceValueGen}
		\rhoo(L;\tau)=\frac{1}{2}\Tr\Big[L_o L^{\dagger} \cos(\tau L)\big(M_{\N}-\sin(\tau L)\big)^{\dagger}\Big],
		\end{eqnarray}
		where $L_o = C^{\T}C$. In addition, when the output matrix is equal to the centering matrix, i.e.,  $C=M_n$, the performance measure can be quantified as an additively separable function of Laplacian eigenvalues; in other words, we have the following formula
		\begin{eqnarray}\label{eq:coherenceValue}
		\rhoo(L;\tau)=\sum_{i=2}^{\N} {f_\tau(\lambda_i)},
		\end{eqnarray}
		where 
		\begin{eqnarray}\label{fTau}
		f_\tau(\lambda_i)={\frac{1}{2\lambda_i}\frac{\cos(\lambda_i \tau)}{1-\sin(\lambda_i \tau)}}.
		\end{eqnarray}
	\end{theorem}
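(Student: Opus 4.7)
The plan is to diagonalize the dynamics in the Laplacian's eigenbasis, reduce the network to a collection of uncoupled scalar delay--SDEs, evaluate the stationary variance of each, and reassemble the trace expression. Concretely, I work with the projected system~\eqref{eq:systemProjected}, which is exponentially stable and shares the transfer function of~\eqref{eq:system}. Let $L = \sum_i \lambda_i u_i u_i^{\T}$ be the spectral decomposition with $u_1 \propto \mathbf{1}_n$, and introduce modal coordinates $z_i(t) := u_i^{\T}\hat{x}(t)$. Since $M_n u_1 = 0$, the mean mode $z_1$ vanishes identically; for $i \geq 2$, each $z_i$ satisfies the decoupled scalar delay--SDE $\dot{z}_i(t) = -\lambda_i z_i(t-\tau) + w_i(t)$, where $w_i := u_i^{\T}\xi$ are mutually independent standard white noises. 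Because the $z_i$'s are independent and mean-zero in steady state, and $y = \sum_{i \geq 2} z_i \cdot C u_i$, one obtains
\begin{equation*}
\rhoo(L;\tau) \,=\, \sum_{i=2}^{n} I(\lambda_i,\tau)\, u_i^{\T} L_o u_i,
\end{equation*}
where $I(\lambda,\tau)$ denotes the steady-state variance of the scalar delay--SDE.

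The crux is to show $I(\lambda,\tau) = f_\tau(\lambda)$. To this end, define the stationary autocovariance $R(t) := \E[z(s)z(s+t)]$, which is even in $t$ and obeys the delay equation $R'(t) = -\lambda R(t-\tau)$ for $t>0$ (differentiate under the expectation and use that future noise is independent of the past). It\^{o}'s formula applied to $z^2$, combined with the fact that $\E[z^2]$ is time-invariant in steady state, yields the boundary identity $R(\tau) = \tfrac{1}{2\lambda}$. On the interval $(0,\tau)$, evenness converts the delay into a reflection, $R(t-\tau) = R(\tau - t)$, so that $R'(t) = -\lambda R(\tau - t)$; a second differentiation produces the local ODE $R''(t) = -\lambda^2 R(t)$. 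Solving $R(t) = A\cos(\lambda t) + B \sin(\lambda t)$ under the conditions $R'(0^{+}) = -\lambda R(\tau) = -\tfrac{1}{2}$ and $R(\tau) = \tfrac{1}{2\lambda}$ gives $B = -\tfrac{1}{2\lambda}$ and $A = \tfrac{1 + \sin(\lambda\tau)}{2\lambda \cos(\lambda\tau)}$, hence $R(0) = A = \tfrac{\cos(\lambda\tau)}{2\lambda(1-\sin(\lambda\tau))} = f_\tau(\lambda)$. This is the main obstacle: the infinite-dimensional delay structure collapses to a solvable second-order ODE on $[0,\tau]$ precisely because of the symmetry $R(-t) = R(t)$, together with the Itô identity pinning the value at $t = \tau$.

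Finally, I reassemble. Using the spectral expansions $L^{\dagger} = \sum_{i \geq 2} \lambda_i^{-1} u_i u_i^{\T}$, $\cos(\tau L) = \sum_i \cos(\tau \lambda_i)\, u_i u_i^{\T}$, and $\bigl(M_n - \sin(\tau L)\bigr)^{\dagger} = \sum_{i \geq 2} (1 - \sin(\tau\lambda_i))^{-1}\, u_i u_i^{\T}$ (the last since $\sin(0) = 0$ makes the null space coincide with $\mathrm{span}(u_1)$), simultaneous diagonalization of all three factors yields $\tfrac{1}{2} L^{\dagger}\cos(\tau L)\bigl(M_n - \sin(\tau L)\bigr)^{\dagger} = \sum_{i \geq 2} f_\tau(\lambda_i)\, u_i u_i^{\T}$. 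Taking the trace against $L_o$ delivers~\eqref{eq:coherenceValueGen}. When $C = M_n$ one has $L_o = M_n$ and $u_i^{\T} M_n u_i = 1$ for $i \geq 2$, which immediately reduces the trace formula to the additively separable expression~\eqref{eq:coherenceValue}.
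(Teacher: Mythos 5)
Your proof is correct, and it reaches the scalar kernel $f_\tau(\lambda)$ by a genuinely different route than the paper. Both arguments begin the same way, by diagonalizing in the Laplacian eigenbasis and reducing to $n-1$ decoupled scalar delay systems weighted by $\Delta_i = u_i^{\T} L_o u_i$; the divergence is in how the stationary variance of the scalar system $\dot z = -\lambda z(t-\tau) + w$ is computed. The paper stays in the frequency domain: it writes $\rhoo$ as $\frac{1}{2\pi}\Tr\int G^{\CT}(j\omega)G(j\omega)\,d\omega$, reduces to the integral $\int_{-\infty}^{\infty}\bigl|j\omega+\lambda e^{-j\tau\omega}\bigr|^{-2}d\omega$, and evaluates it via Lemma \ref{H2normOfScalerDelayedSystem}, which in turn invokes the delay Lyapunov matrix machinery of Theorem \ref{DelayH2} and Lemma \ref{lemma2} (Kharitonov/Jarlebring). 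You instead work in the time domain with the stationary autocovariance $R(t)$: the It\^{o} identity pins $R(\tau)=\frac{1}{2\lambda}$, evenness reflects the delay term so that $R'(t)=-\lambda R(\tau-t)$ on $(0,\tau)$, and a second differentiation collapses the problem to $R''=-\lambda^2R$, solvable in closed form. This is essentially a self-contained, probabilistic derivation of the scalar delay Lyapunov function that the paper imports as a black box; your $R$ on $[0,\tau]$ plays exactly the role of $U(t)$ in Definition \ref{DelayLyap}, and your reflection-plus-second-derivative trick is the scalar shadow of the boundary-value ODE system in Lemma \ref{lemma2}. What the paper's route buys is modularity (it plugs into established $\mathcal{H}_2$ theory for delay systems and avoids stochastic calculus); what yours buys is transparency and economy — it works directly from the definition \eqref{eq:coherenceDef} as a stationary variance and needs no external lemmas. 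The only points you should make explicit for full rigor are the existence of the stationary regime for $\lambda\tau<\pi/2$ (which licenses both the time-invariance of $\E[z^2]$ and the evenness of $R$) and the interchange of differentiation and expectation in deriving $R'(t)=-\lambda R(t-\tau)$; both are standard under the stability assumption the paper already imposes.
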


	\begin{proof}
		In order to find the performance of network \eqref{eq:system}, we utilize {\BL equation \eqref{eq:coherenceDef3}}
		\begin{eqnarray*}
			\rhoo(L;\tau) &=& \frac{1}{2\pi}\Tr\Big[\int_{-\infty}^{+\infty}{G^{\CT}(j\omega)G(j\omega) \: d\omega}\Big],
		\end{eqnarray*}
		where $G(s)$ transfer function of both (\ref{eq:system}) and (\ref{eq:systemProjected}), i.e.,
		\begin{eqnarray}\label{eq:Gs}
		G(s)~=~ C \Bigg(sI_{\N}+e^{-\tau s}\Big(L+\frac{1}{\tau \N}J_{\N}\Big)\Bigg)^{-1}M_{\N}.
		\end{eqnarray}
		We consider spectral decomposition of Laplacian matrix $L$, which is,
		\begin{eqnarray*}
			L~=~Q \Lambda Q^{\T},
		\end{eqnarray*}
		where $Q=[q_1,q_2, \dots , q_{\N}] \in \mathbb{R}^{n\times n}$ is the orthonormal matrix of eigenvectors and $\Lambda=\diag(\lambda_1,\ldots,\lambda_{\N})$ is the diagonal matrix of eigenvalues. We recall that $\lambda_1=0$ for the reason that the graph is undirected and it has no self-loops.
		Therefore,
		\begin{eqnarray}
		M_{\N}&=&I_{\N}-Q \diag(1,0, \dots,0) Q^{\T}\nonumber\\\label{eq:Mn}
		&=&Q \diag(0,1, \dots,1) Q^{\T},
		\end{eqnarray}
		and
		\begin{eqnarray}
		L&=&Q \diag(0,\lambda_2, \dots, \lambda_{\N}) Q^{\T}. \label{eigen-decom}
		\end{eqnarray}
		Thus,
		\begin{eqnarray}
		L+\frac{1}{\tau \N}J_{\N} &=& Q \diag(\frac{1}{\tau},\lambda_2, \dots, \lambda_{\N}) Q^{\T}\label{eq:LJn},
		\end{eqnarray}
		and substituting (\ref{eq:Mn}) and (\ref{eq:LJn}) into (\ref{eq:Gs}),
		\begin{eqnarray*}
			G(s)~=~C Q \diag\Big(0,\frac{1}{s+\lambda_2 e^{-\tau s}},\dots,\frac{1}{s+\lambda_{\N} e^{-\tau s}}\Big)Q^{\T}.
		\end{eqnarray*}
		Hence, we have
		\begin{align}\label{eq:GHG}
		&\Tr\big[G^{\CT}(j \omega)G(j \omega)\big]\nonumber\\
		=&\Tr\Big[ C^{\T} C Q \diag\Big(0,\frac{1}{-j \omega+\lambda_2 e^{j \tau  \omega}},\dots,\frac{1}{-j \omega +\lambda_{\N} e^{j \tau \omega}}\Big)\nonumber\\
		\:\:&\diag\Big(0,\frac{d \omega}{j \omega+\lambda_2 e^{-j \tau  \omega}},\dots,\frac{1}{j \omega+\lambda_{\N} e^{-j \tau \omega}}\Big)
		Q^{\T}\Big]
		\end{align}
		and by substituting (\ref{eq:GHG}) in the following definition of $\rhoo(L;\tau),$
		\begin{eqnarray}
		& & \hspace{-1.25cm}\rhoo(L;\tau) = \frac{1}{2\pi}\Tr\Big[\int_{-\infty}^{+\infty}{G^{\CT}(j\omega)G(j\omega) \: d\omega}\Big] \nonumber\\
		&& = \frac{1}{2\pi} \sum_{i=2}^{\N}{\int_{-\infty}^{+\infty}\frac{\Delta_{i} \HS d\omega}{\big(j \omega+\lambda_i e^{-j \tau \omega}\big)\big(-j \omega+\lambda_i e^{j \tau \omega}\big)}},\label{eq:H2normCalc}
		\end{eqnarray}
		where $\Delta_i$ is the $i^{\mathrm{th}}$ diagonal element of the matrix  $Q^{\T}L_o Q$. By applying Lemma \ref{H2normOfScalerDelayedSystem} to (\ref{eq:H2normCalc}), we get 
		\begin{equation}
		\rhoo(L;\tau)=\sum_{i=2}^{\N} \frac{\Delta_i}{2\lambda_i}~ \frac{\cos(\lambda_i \tau)}{1-\sin(\lambda_i \tau)}.\label{perf-meas}
		\end{equation}
		From eigenvalue decomposition \eqref{eigen-decom}, definition \eqref{sin-cos}, simultaneous diagonalizability of $L$, $\cos(\tau L)$ and  $\big(M_{\N}-\sin(\tau L)\big)$, and notation \eqref{fTau}, we define the function of Laplacian matrix 
		\begin{eqnarray*}
			f_{\tau}(L) &=& Q \HS \mathrm{diag}(0,f_{\tau}(\lambda_2), \ldots, f_{\tau}(\lambda_n)) \HS Q^T \\
			& = & \frac{1}{2} L^{\dagger} \cos(\tau L)\Big(M_{\N}-\sin(\tau L)\Big)^{\dagger}
		\end{eqnarray*}
		Now, we can  rewrite equality \eqref{perf-meas} in the following compact matrix  operator form:
		\begin{eqnarray*}
			\rhoo(L;\tau) &=& \frac{1}{2}\Tr\Big[L_o L^{\dagger} \cos(\tau L)\Big(M_{\N}-\sin(\tau L)\Big)^{\dagger}\Big].
		\end{eqnarray*}
		In addition, when $C = M_n$, we have
		\begin{eqnarray*}
			\rhoo(L;\tau) &=& \frac{1}{2}\Tr\Big[L^{\dagger} \cos(\tau L)\Big(M_{\N}-\sin(\tau L)\Big)^{\dagger}\Big] 
		\end{eqnarray*}
		and $\Delta_i=1$ for $i \geq 2$, from which one can deduct \eqref{eq:coherenceValue}. 
	\end{proof}
	{
		\begin{remark}
			This result for the case that $C = M_n$ was derived in \cite{hunt2012networkThesis,hunt2012network} using complex analysis techniques. We obtained this independently with a state-space point view for a general output matrix $C$ that is orthogonal to vector of all ones. 
		\end{remark}
		\begin{remark}
			Simultaneous diagonalizability of $\cos(\tau L)$, $\sin(\tau L)$ and $L$ follows from definition of matrix sine and cosine through powers of $(\tau L)$ and the fact that $L$ is diagonalizable. Also, as we showed in the proof of the theorem above, $M_n$ and $L$ share same set of eigenvectors and thus they are simultaneous diagonalizable. Hence, $M_{\N}-\sin(\tau L)$ and $L$ are simultaneous diagonalizable, as well.
		\end{remark}
	}

	When there is no time-delay in the network, i.e., $\tau=0$, our result reduces to those of \cite{young2010robustness,Siami16tac,Patterson:2010}, in which  
	\begin{eqnarray*}
		\rhoo(L;0) =\frac{1}{2} \sum_{i=2}^{\N}  \lambda_i^{-1}.
	\end{eqnarray*}
	
	\vspace{0.2cm}
	
	\begin{theorem}\label{coherenceMainTheoremCor}
		The performance measure \eqref{eq:coherenceDef} for the consensus network (\ref{eq:system}) with a  fixed underlying graph is an increasing function of the time-delay, i.e., the following inequality holds
		\begin{eqnarray*}
			\rhoo(L;\tau_1)~<~\rhoo(L;\tau_2)
		\end{eqnarray*}
		for every $0 \leq \tau_1<\tau_2<\frac{\pi}{2\lambda_{\N}}$. 
	\end{theorem}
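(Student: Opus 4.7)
My plan is to work from the explicit eigenvalue representation of $\rhoo(L;\tau)$ derived in Theorem \ref{coherenceMainTheorem} and differentiate term by term. From equation \eqref{perf-meas} in the proof of that theorem, for a general output matrix $C$ satisfying Assumption \ref{C_ones} we have
\begin{equation*}
\rhoo(L;\tau) \;=\; \sum_{i=2}^{n} \frac{\Delta_i}{2\lambda_i}\,\frac{\cos(\lambda_i\tau)}{1-\sin(\lambda_i\tau)},
\end{equation*}
where $\Delta_i$ is the $i$-th diagonal entry of $Q^{\T} L_o Q = (CQ)^{\T}(CQ)$. Since $L_o = C^{\T}C \succeq 0$, each $\Delta_i = \|Cq_i\|_2^2 \geq 0$, and moreover Assumption \ref{C_ones} together with $C\neq 0$ forces at least one $\Delta_i$ with $i\geq 2$ to be strictly positive (otherwise $Cq_i=0$ for every $i$ would give $C=0$).

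Given this, the plan reduces to a scalar monotonicity check: I show that the map $g(\theta)=\cos\theta/(1-\sin\theta)$ is strictly increasing on $[0,\pi/2)$. A direct computation gives
\begin{equation*}
g'(\theta) \;=\; \frac{-\sin\theta(1-\sin\theta) + \cos^2\theta}{(1-\sin\theta)^2} \;=\; \frac{1-\sin\theta}{(1-\sin\theta)^2} \;=\; \frac{1}{1-\sin\theta},
\end{equation*}
which is strictly positive on $[0,\pi/2)$. Then, for each $i\geq 2$,
\begin{equation*}
\frac{d}{d\tau}\!\left[\frac{\Delta_i}{2\lambda_i}\,\frac{\cos(\lambda_i\tau)}{1-\sin(\lambda_i\tau)}\right] \;=\; \frac{\Delta_i}{2(1-\sin(\lambda_i\tau))} \;\geq\; 0,
\end{equation*}
with strict inequality for at least one index $i$.

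The stability bound $\tau < \pi/(2\lambda_n)$ plays exactly the role I need: it guarantees $\lambda_i\tau \in [0,\pi/2)$ for all $i\leq n$, so every term lies in the interval where $g'$ is well-defined and positive. Integrating in $\tau$ from $\tau_1$ to $\tau_2$ then yields $\rhoo(L;\tau_1) < \rhoo(L;\tau_2)$. I do not expect any serious obstacle here; the only subtle point is ensuring that the differentiated series can indeed be summed termwise, which is immediate because the sum is finite (it has $n-1$ terms). The crux of the argument is really the clean identity $g'(\theta)=1/(1-\sin\theta)$, which makes strict monotonicity transparent.
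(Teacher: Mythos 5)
Your proof is correct and takes essentially the same route as the paper: both differentiate the performance measure in $\tau$ and show the derivative is positive, the paper arriving at $\tfrac{1}{2}\Tr\bigl[C(I_{\N}-\sin(\tau L))^{-1}C^{\T}\bigr]$ in matrix-operator form, which is exactly your $\sum_{i\geq 2}\Delta_i/\bigl(2(1-\sin(\lambda_i\tau))\bigr)$ written in the eigenbasis. Your explicit justification that at least one $\Delta_i$ with $i\geq 2$ is strictly positive is in fact slightly more careful than the paper's remark that ``$C$ has nonzero components.''
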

	
	\vspace{0.2cm}
	\begin{proof}
		As a means to demonstrate that $\rhoo(L;\tau)$ is increasing in $\tau$, we show that in the stability region, first derivative of performance measure with respect to $\tau$ is positive. Following this  idea, we get
		\begin{eqnarray*}
			& & \hspace{-1.5cm} \frac{d}{d\tau}\rhoo(L;\tau) ~= ~\frac{1}{2}\Tr \Big[- L_o \sin(\tau L) \big(M_{\N}-\sin(\tau L)\big)^{\dagger}\\
			& & \hspace{1.3cm} +  L_o \cos^2(\tau L) \Big(\big(M_{\N}-\sin(\tau L)\big)^{\dagger}\Big)^2\Big],
		\end{eqnarray*}
		for all $\tau \in (0,\frac{\pi}{2 \lambda_{\N}})$.
		Due to simultaneous diagonalizability of $L,  \sin(\tau L), \cos(\tau L)$ and $\big(M_{\N}-\sin(\tau L)\big)$ their product is commutative, and therefore, we have
		\begin{eqnarray*}
			\frac{d}{d\tau}\rhoo(L;\tau) &=& \frac{1}{2}\Tr \big[L_o \big(M_{\N}-\sin(\tau L)\big)^{\dagger}\big]\\
			&=& \frac{1}{2}\Tr \big[C \big(I_{\N}-\sin(\tau L)\big)^{-1}C^{\T}\big],
		\end{eqnarray*}
		where the last equality follows from spectral properties of $L$ and orthogonality of rows of matrix $C$ with respect to $1_{\N}$. Since matrix $\big(I_{\N}-\sin(\tau L)\big)^{-1}$ is positive definite and $C$ has nonzero components, we have
		\begin{equation*}
		\frac{d}{d\tau}\rhoo(L;\tau)>0.
		\end{equation*}
	\end{proof}

	\begin{figure}[t]
		\begin{center}
			\psfrag{x}[c][c]{\small{$\lambda_i$}}        
			\psfrag{y}[c][c]{\small{$f_{\tau}(\lambda_i)$}} 
			\psfrag{z}[c][c]{\small{ }} 
			\psfrag{q}[c][c]{\small{$0$}}        
			\psfrag{w}[c][B]{\small{$\frac{z^*}{\tau}$}} 
			\psfrag{e}[c][B]{\normalsize{$\frac{\pi}{4\tau}$}}        
			\psfrag{r}[c][B]{\small{$\frac{\pi}{2\tau}$}} 
			\psfrag{a}[r][c]{\small{$0$}}        
			\psfrag{s}[r][c]{\small{$5{\tau}$}} 
			\psfrag{d}[r][c]{\normalsize{$10\tau$}}        
			\psfrag{f}[r][c]{\small{$15\tau$}}

			\includegraphics[trim=0 0 0 0.5cm,clip,scale=0.35]{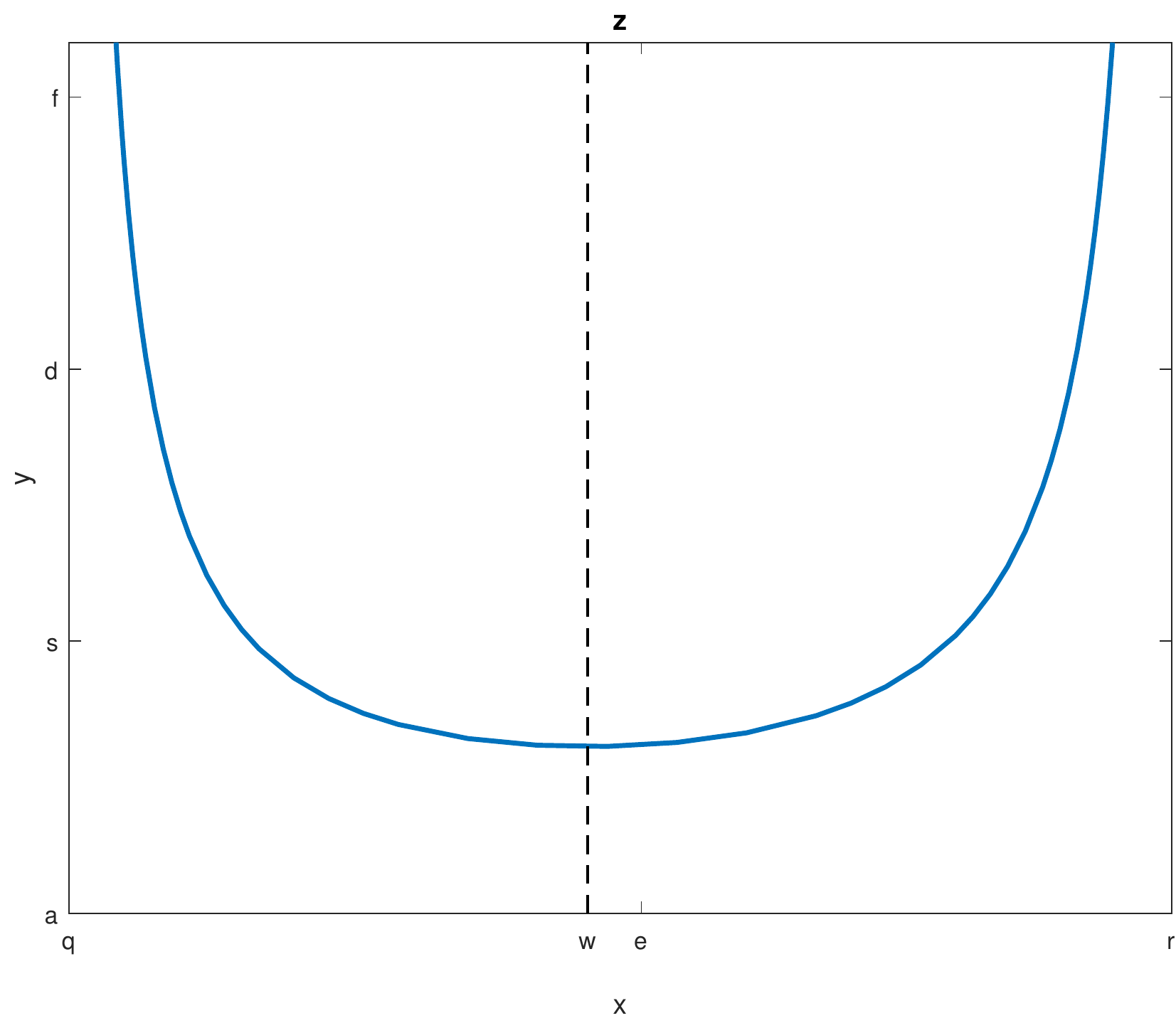}
		\end{center}

		\caption{This plot illustrates convexity property of $\rho_{ss}(L;\tau)$ by depicting $f_\tau(\lambda_i)$ as a function of Laplacian eigenvalues.}
		\label{fig_1}
	\end{figure}

	\begin{theorem}\label{coherenceConvexity}
		For a fixed time-delay, { if 
			$C \in \mathbb{R}^{(n-1)\times n}$ such that rows of $C$ span the disagreement subspace or $C = B_{\mathcal{K}_n}$, i.e. the signed edge-to-vertex incidence matrix of a complete graph}, the performance measure \eqref{eq:coherenceDef} for dynamical network (\ref{eq:system}) is a convex and Schur-convex function of Laplacian eigenvalues of its underlying graph. { In addition, the performance measure is a convex function of weight of links of the underlying graph $\mathcal{G}$.}
	\end{theorem}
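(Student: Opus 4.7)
My plan is to reduce the three assertions to a single scalar convexity statement about $f_\tau$, then dispatch the rest structurally. First, under either hypothesis on $C$, one has $L_o = C^{\T}C = c\,M_{\N}$ for some constant $c>0$: orthonormal rows spanning the disagreement subspace yield $C^{\T}C = M_{\N}$, while $C = B_{\mathcal{K}_{\N}}$ gives the Laplacian of the complete graph, i.e., $C^{\T}C = nM_{\N}$. In either case the scalars $\Delta_i$ appearing in the proof of Theorem~\ref{coherenceMainTheorem} satisfy $\Delta_i = c$ for $i\ge 2$, so that theorem collapses to $\rhoo(L;\tau) = c\sum_{i=2}^{\N} f_\tau(\lambda_i)$.

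The main (and essentially only) technical obstacle is to prove that $f_\tau$ is convex on $(0,\pi/(2\tau))$. Using the identity $\cos x/(1-\sin x) = (1+\sin x)/\cos x$, I would rewrite $f_\tau(\lambda) = \tfrac{1}{2\lambda}\cdot\tfrac{1+\sin(\lambda\tau)}{\cos(\lambda\tau)}$ and then show each factor is log-convex. Log-convexity of $1/(2\lambda)$ is immediate since $\tfrac{d^2}{d\lambda^2}(-\log(2\lambda)) = 1/\lambda^2 > 0$. For the trigonometric factor $h_\tau(\lambda) := (1+\sin(\lambda\tau))/\cos(\lambda\tau)$, the companion identity $\cos x/(1+\sin x) = (1-\sin x)/\cos x$ collapses the logarithmic derivative to $(\log h_\tau)'(\lambda) = \tau\sec(\lambda\tau)$, whence $(\log h_\tau)''(\lambda) = \tau^2\sec(\lambda\tau)\tan(\lambda\tau) > 0$ on the stability interval. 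Because a product of log-convex functions is log-convex and log-convex implies convex, $f_\tau$ is convex. It then follows at once that $\rhoo$ is a convex function of $(\lambda_2,\dots,\lambda_{\N})\in(0,\pi/(2\tau))^{\N-1}$, being a nonnegative linear combination of convex single-coordinate functions; and since the same sum is symmetric in the $\lambda_i$, the Hardy--Littlewood--P\'olya theorem that a symmetric convex function is Schur-convex yields the second assertion for free.

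For convexity in the edge weights, my plan is to invoke the Davis--Lewis theorem on spectral functions: if $g$ is convex on an interval $I$, then $X \mapsto \Tr[g(X)]$ is convex on symmetric matrices whose spectra lie in $I$. The unavoidable zero eigenvalue of $L(w)$ precludes direct application, but the $w$-independent shift $M(w) := L(w) + \tfrac{\alpha}{\N}J_{\N}$, for any fixed $\alpha > 0$, is affine in $w$, preserves $\lambda_2,\dots,\lambda_{\N}$, and replaces the zero eigenvalue by $\alpha$. Hence $\rhoo(L(w);\tau) = c\,\Tr[f_\tau(M(w))] - c\,f_\tau(\alpha)$, and Davis--Lewis applied to the convex $f_\tau$ gives convexity of the first term as the composition of a convex spectral function with an affine map, on the convex stability set $\{w:\lambda_{\N}(L(w))<\pi/(2\tau)\}$ (which is convex because $\lambda_{\N}(L(\cdot))$ is itself a convex function of $w$). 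The constant offset $-c\,f_\tau(\alpha)$ is harmless, completing the proof.
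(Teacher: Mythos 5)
Your proof is correct, and its core step is genuinely different from the paper's. For the central fact that $f_\tau$ is convex on $(0,\pi/(2\tau))$, the paper works with the reciprocal: it chains elementary trigonometric inequalities (starting from $\sin(\tau\lambda+\pi/4)>\sqrt{2}/2$) to show $\frac{d^2}{d\lambda^2}\bigl(1/f_\tau(\lambda)\bigr)<0$, and then converts concavity of $1/f_\tau$ into $f_\tau f_\tau''>0$, hence convexity via positivity of $f_\tau$. Your route instead factors $f_\tau(\lambda)=\frac{1}{2\lambda}\bigl(\sec(\lambda\tau)+\tan(\lambda\tau)\bigr)$ and proves each factor log-convex, exploiting the clean identity $(\log(\sec+\tan))'=\tau\sec$; this is shorter, avoids the paper's somewhat delicate sign-bookkeeping (and a typo in its expression for $1/f_\tau$), and in fact delivers strict convexity for free, which the paper needs later in Theorem~\ref{hardLimit}. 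The remaining steps coincide in substance: symmetry plus convexity gives Schur-convexity (your Hardy--Littlewood--P\'olya citation is the paper's Theorem~\ref{Schur_H}), and convexity in the weights comes from Davis's theorem in both arguments. Here you are actually more careful than the paper: the paper applies Davis's theorem directly to $\rhoo$ even though $f_\tau$ blows up at the zero eigenvalue, whereas your shift $L(w)+\frac{\alpha}{\N}J_{\N}$ makes the spectral-function argument legitimately applicable and also records why the stability region in $w$ is convex. One small caveat on your reduction $L_o=cM_{\N}$: the theorem's hypothesis says the rows of $C$ \emph{span} the disagreement subspace, and you silently upgrade this to orthonormal rows; for a general spanning $C$ the coefficients $\Delta_i=q_i^{\T}C^{\T}Cq_i$ depend on the eigenvectors and the claim is no longer a function of the eigenvalues alone. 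The paper's own proof makes the same implicit assumption without acknowledging it, so this is a gap in the statement rather than in your argument.
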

	\vspace{0.1cm}
	\begin{proof}
		For all $\lambda \in (0,\frac{\pi}{2\tau})$, the following inequality holds
		\begin{align*}
		\sin(\tau\lambda+\frac{\pi}{4})>\frac{\sqrt{2}}{2}.
		\end{align*}
		By expanding the left hand side of the  inequality above, we get
		\begin{align}\label{eq:380}
		\sin(\tau\lambda)+\cos(\tau\lambda)-1>0.
		\end{align}
		Moreover, by multiplying both sides of \eqref{eq:380} by $\tau\lambda$, we have
		\begin{align*}
		\tau \lambda \sin(\tau\lambda)-\tau\lambda+\tau\lambda \cos(\tau\lambda)>0.
		\end{align*}
		Since $\tau\lambda<2$, it follows that
		\begin{align*}
		\tau \lambda \sin(\tau\lambda)-\tau\lambda+2\cos(\tau\lambda)>0.
		\end{align*}
		Subsequently, we get the following inequality by multiplying both side by $(\sin(\tau\lambda)-1)$ and $\sec^3(\tau\lambda)$, which are respectively negative and positive
		\[
		(\sin(\tau\lambda)-1) \sec^3(\tau\lambda) (\tau\lambda (\sin(\tau\lambda)-1)+2 \cos(\tau\lambda))<0.
		\]
		The left hand side of the above inequality equals to ${\frac{d^2}{d\lambda^2}\Big(\lambda \big(\sec(\tau\lambda)+\tan(\tau\lambda)\big)\Big)}$; thus, we deduce that
		\begin{align}\label{ineqconcave}
		\frac{d^2}{d\lambda^2}\Bigg(\frac{1}{f_\tau(\lambda)}\Bigg)<0.
		\end{align}
		Moreover, from inequality \eqref{ineqconcave} it follows that 
		\begin{align*}
		f_\tau(\lambda)\frac{d^2}{d\lambda^2}f_\tau(\lambda)>0.
		\end{align*}
		Consequently, positiveness of $f_\tau(\lambda)$, results in strict convexity of $f_\tau(\lambda)$. Since the performance function  $\rho_{ss}(L;\tau)$ equals to sum of convex functions; it is a convex function of Laplacian eigenvalues. The Schur-convexity property of $\rho_{ss}(L;\tau)$ follows from its symmetry and Theorem \ref{Schur_H}.\\
		Applying Davis's Theorem \cite{Davis:1957,Borwein:2010}, since the network performance is a symmetric convex function of eigenvalues, it is also a convex function of Laplacian matrix. Therefore, performance measure is a convex function of weight of links of the coupling graph. 
	\end{proof}
	
	Convexity and Schur-convexity properties of the performance function helps us to find fundamental limits on the best achievable performance as well as upper bounds on the performance measure of the network without knowing the spectrum of Laplacian matrix of the underlying graph \cite{siami2014schur}. 
	
	\vspace{0.2cm}

	\section{Optimal and Robust Topologies w.r.t. Time-Delay}\label{sec-IV}

	The following result characterizes the optimal interconnection topology for a consensus network in presence of time-delay.
	\vspace{0.2cm}

	\begin{theorem}\label{hardLimit}
		For the first-order linear consensus network (\ref{eq:system}) with $\N$ nodes that is affected by time-delay $\tau$, the limit on the best achievable performance is given by
		\begin{eqnarray}\label{eq:funLimit1}
		{ \rhoo(L;\tau) \geq \frac{\tau \|C\|_F^2 }{2\big(1-\sin(z^*)\big)},}
		\end{eqnarray}
		where $z^*>0$ is the unique positive solution of
		$\cos(z)=z$. Furthermore, the optimal topology in terms of the performance measure is a complete graph with identical weight 
		\begin{equation}\label{optimalWeight}
		w^*(e)  = \frac{z^*}{n \tau}
		\end{equation}
		for every coupling link $e$. 
	\end{theorem}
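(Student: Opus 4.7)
The plan is to separate the argument into a spectral lower bound on $\rhoo(L;\tau)$ and a one-variable minimization of the scalar function $f_\tau$, and then exhibit a graph that attains equality. From the proof of Theorem \ref{coherenceMainTheorem}, in particular equation \eqref{perf-meas}, the performance measure admits the spectral form
\begin{equation*}
\rhoo(L;\tau) \,=\, \sum_{i=2}^{\N} \Delta_i \, f_\tau(\lambda_i),
\end{equation*}
where $\Delta_i = q_i^{\T} L_o q_i = \|C q_i\|_2^2 \geq 0$ and $\{q_i\}_{i=1}^{\N}$ is an orthonormal eigenbasis of $L$. Because Assumption \ref{C_ones} forces $C q_1 = 0$, one has $\Delta_1 = 0$ together with $\sum_{i=2}^{\N} \Delta_i = \Tr(L_o) = \|C\|_F^2$.

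Since the coefficients $\Delta_i$ are nonnegative with fixed total mass $\|C\|_F^2$, I would next apply the trivial inequality
\begin{equation*}
\rhoo(L;\tau) \,\geq\, \|C\|_F^2 \cdot \min_{\lambda \in (0,\, \pi/(2\tau))} f_\tau(\lambda),
\end{equation*}
reducing the problem to minimizing the scalar function $f_\tau$. Substituting $z = \tau \lambda$ in $f_\tau(\lambda) = \cos(z)/[2\lambda(1-\sin z)]$ and carrying out the quotient-rule derivative, the Pythagorean identity collapses the numerator of $f_\tau'(\lambda)$ to $(z - \cos z)(1 - \sin z)$ times a strictly positive factor. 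Hence $f_\tau$ has a unique interior critical point, characterized by the transcendental equation $z = \cos z$, whose unique positive root in $(0,\pi/2)$ is the constant $z^*$ in the statement. Evaluating $f_\tau$ at this point and using $\cos z^* = z^*$ yields $f_\tau(z^*/\tau) = \tau/[2(1-\sin z^*)]$, which combined with the previous bound gives the inequality \eqref{eq:funLimit1}.

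For tightness, I would verify that the uniformly weighted complete graph with weight $w^* = z^*/(\N \tau)$ attains equality. Its Laplacian equals $L = \N w^* M_{\N}$, whose $\N-1$ non-zero eigenvalues are all equal to $\N w^* = z^*/\tau$, so every term $f_\tau(\lambda_i)$ simultaneously realizes the minimum of $f_\tau$. The spectral sum therefore collapses to $\|C\|_F^2\, f_\tau(z^*/\tau) = \tau \|C\|_F^2/[2(1-\sin z^*)]$, matching \eqref{eq:funLimit1} and establishing optimality of this topology among all connected Laplacians with $\lambda_{\N} < \pi/(2\tau)$; the stability constraint is automatically respected since $z^* < \pi/2$.

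The main obstacle will be the derivative computation for $f_\tau$: identifying the clean factorization $(z - \cos z)(1-\sin z)$ of the numerator of $f_\tau'(\lambda)$ requires careful expansion of the quotient derivative together with the Pythagorean identity. Once this factorization is in hand, the strict monotonicity of $z \mapsto z - \cos z$ on $(0,\pi/2)$ makes the existence, uniqueness, and localization of $z^*$ routine, and the rest of the argument is essentially bookkeeping in the spectral sum.
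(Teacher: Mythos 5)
Your proposal is correct and follows essentially the same route as the paper: reduce to the spectral sum $\sum_{i\ge 2}\Delta_i f_\tau(\lambda_i)$, minimize the scalar function $f_\tau$ to find the critical point $\cos(\tau\lambda)=\tau\lambda$, and observe that only the uniformly weighted complete graph places all nonzero Laplacian eigenvalues at $z^*/\tau$. Your version is in fact slightly more careful on two points the paper glosses over --- the explicit factorization $(z-\cos z)(1-\sin z)$ of the derivative's numerator, and the use of $\Delta_i\ge 0$ with $\sum_{i\ge 2}\Delta_i=\|C\|_F^2$ to justify the lower bound --- but these are refinements of the same argument, not a different one.
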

	\vspace{0.2cm}
	\begin{proof}
		For a fixed time-delay, performance measure will be minimized if $f_{\tau} (\lambda_i)$ is minimized for all $\lambda_i$, where $i \in \big\{2, \dots, \N \big\}$. By strict convexity and twice differentiability of $f_{\tau} (\lambda_i)$, minimum of performance measure is attained if and only if
		\begin{eqnarray*}
			\lambda_i &=&\argmin_{~~~~~\lambda_i}f_\tau(\lambda_i)\nonumber\\
			&=&\left\{\lambda_i ~\Big|~ \frac{d}{d{\lambda_i}}f_\tau(\lambda_i)=0 \right\} \text{ ~for~ } i \in \big\{2, \dots, \N \big\}.
		\end{eqnarray*}
		Solving $\frac{d}{d{\lambda_i}}f_\tau(\lambda_i)=0$ results in the following equation
		\[ \cos(\lambda_i\tau)=\lambda_i\tau.\]
		Since $z^*$ is solution of the equation $\cos(z)=z$, we have
		\begin{equation}\label{eq:minfTau3}
		\lambda_i=\frac{z^*}{\tau} 
		\end{equation}
		for all $i \in \big\{2, \dots, \N \big\}$. In addition, substituting $\lambda_i$ from the previous equation in \eqref{perf-meas}, we obtain 
		\begin{align*}
		\rhoo(L;\tau) \geq \frac{\tau \|C\|_F^2 }{2\big(1-\sin(z^*)\big)}\sum_{i=2}^{\N}\Delta_i.
		\end{align*}
		Consequently, since $\sum_{i=2}^{\N}\Delta_i = \Tr[Q^{\T}C^{\T}C Q]$,
		the fundamental limit given by inequality  (\ref{eq:funLimit1}) can be deduced. Moreover, considering equality of all eigenvalues of $L$, the underlying graph of the network with the optimal performance is a complete graph with equal link weights.
		Besides, since non-zero eigenvalues of a complete graph with uniform link weights $w^*(e)$, for all links $e$, are
		\begin{equation}\label{eq:minfTau4}
		\lambda_i = \N w^*(e)  
		\end{equation}
		for all $i \in \big\{2, \dots, \N \big\}$, substituting $\lambda_i$ from equation (\ref{eq:minfTau4}) into the equation (\ref{eq:minfTau3}) yields identity  \eqref{optimalWeight}. 
	\end{proof}
	
	{ The lower bound of the coherency for the case that $C= M_{\N}$ was found using numerical analysis in \cite{hunt2012networkThesis,hod2010analytic}. We found the fundamental limit for general output matrix $C$ and studied uniqueness of the limit using convex analysis.}
	
	\vspace{0.2cm}
	When $\tau=0$, it is known that the best achievable performance for linear consensus networks with weighted underlying graphs can be made arbitrarily small \cite{Siami16tac}. This is consistent with the result of Theorem \ref{hardLimit}, since the best achievable performance over all possible  network topologies approaches zero as time-delay goes to zero.
	It is noteworthy that for the first-order linear consensus networks (\ref{eq:system}), the best attainable performance grows linearly with time-delay. Also, under assumption of fixed delay, best achievable performance increases linearly with network size, i.e., its in the order of $\mathcal{O}(n)$. Furthermore, weight of the links in the network with optimal performance is inversely proportional to network size. In the following theorems, we classify graph topologies of robust consensus networks with respect to  time-delay increments. { We also note that if rows of the output matrix $C$ span the disagreement subspace, then the optimal topology would be unique.} 
	\vspace{0.2cm}
	
	\begin{theorem}\label{RobustTop}
		Suppose that ${L}_1$ and ${L}_2$ are Laplacian matrices of coupling graphs of two consensus networks governed by (\ref{eq:system}). If $\lambda_n(L_1)>\lambda_n(L_2)$ and { rows of output matrix $C$ span the disagreement subspace}, then there exists a threshold $\tau^*>0$ such that for all $\tau>\tau^*$ the following ordering holds
		\[
		\rhoo(L_2;\tau) < \rhoo(L_1;\tau).
		\]
		Moreover, the value of $\tau^*$ depends on $L_1$ and $L_2$ and the ouput matrix.
	\end{theorem}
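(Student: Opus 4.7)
The plan is to exploit the fact that $\rhoo(L_1;\tau)$ blows up as $\tau$ approaches the stability boundary of $L_1$, while $\rhoo(L_2;\tau)$ remains finite there. Since $\lambda_n(L_1)>\lambda_n(L_2)$, the stability threshold $\tau_1^{\max}:=\pi/(2\lambda_n(L_1))$ for the first network is strictly smaller than $\tau_2^{\max}:=\pi/(2\lambda_n(L_2))$ for the second, and this asymmetry is the entire engine of the theorem.

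First I would invoke the spectral formula \eqref{perf-meas}, writing $\rhoo(L_j;\tau)=\sum_{i=2}^{n}\Delta_i^{(j)}\,f_\tau(\lambda_i(L_j))$ with coefficients $\Delta_i^{(j)}=\|Cq_i^{(j)}\|_2^2$, where $\{q_i^{(j)}\}$ is an orthonormal eigenbasis of $L_j$. The hypothesis that the rows of $C$ span the disagreement subspace is equivalent to $\mathrm{Null}(C)=\mathrm{span}(\mathbf{1}_n)$; since every eigenvector $q_i^{(j)}$ with $i\geq 2$ is orthogonal to $\mathbf{1}_n$, it cannot lie in the null space of $C$, so $\Delta_i^{(j)}>0$ for all such $i$. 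In particular, the coefficient multiplying the highest-frequency mode of $L_1$ is strictly positive.

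Next I would study the asymptotic behaviour of $f_\tau(\lambda_n(L_1))$ as $\tau\nearrow\tau_1^{\max}$. A short Taylor expansion about $\pi/2$ gives $\cos(\lambda_n(L_1)\tau)/(1-\sin(\lambda_n(L_1)\tau))\to+\infty$, since the numerator vanishes linearly while the denominator vanishes quadratically. Because $\Delta_n^{(1)}>0$, this forces $\rhoo(L_1;\tau)\to+\infty$. On the other hand, Theorem~\ref{coherenceMainTheorem} makes $\rhoo(L_2;\tau)$ continuous on $[0,\tau_2^{\max})$, and since $\tau_1^{\max}<\tau_2^{\max}$ it is therefore bounded on the compact interval $[0,\tau_1^{\max}]$. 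Continuity of the difference $\rhoo(L_1;\tau)-\rhoo(L_2;\tau)$ on $(0,\tau_1^{\max})$, combined with its divergence to $+\infty$, then yields some $\tau^*\in(0,\tau_1^{\max})$ beyond which the strict ordering holds.

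The principal obstacle is exactly the observability assumption on $C$. If $C$ were chosen adversarially so that $Cq_n^{(1)}=0$, then the divergent summand would be absent from $\rhoo(L_1;\tau)$ and no conclusion could be drawn; the hypothesis that the rows of $C$ span the disagreement subspace is precisely what rules this pathology out. The dependence of $\tau^*$ on $L_1$, $L_2$, and $C$ enters through the magnitudes of the $\Delta_i^{(1)}$ and $\Delta_i^{(2)}$, and no closed-form expression for $\tau^*$ seems attainable without additional structure.
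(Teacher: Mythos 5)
Your proposal is correct and follows the same route as the paper's proof: both arguments rest on the divergence of $\rhoo(L_1;\tau)$ as $\tau$ approaches $\pi/(2\lambda_n(L_1))$ together with the boundedness of $\rhoo(L_2;\tau)$ on that interval, and your observation that the spanning assumption on $C$ forces $\Delta_n^{(1)}>0$ is exactly the role that hypothesis plays. The only difference is that the paper goes one step further and, using the monotonicity of $\rhoo(L_2;\cdot)$ in $\tau$, exhibits an explicit admissible range $\bigl[\tfrac{\pi\hat{p}}{2\hat{p}\lambda_n(L_1)+1},\tfrac{\pi}{2\lambda_n(L_1)}\bigr)$ for $\tau^*$ with $\hat{p}=\rhoo(L_2;\tfrac{\pi}{2\lambda_n(L_1)})$, whereas your argument is purely existential.
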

	
	\begin{proof}
		As $\tau$ approaches $\frac{\pi}{2\lambda_n(L_1)}$  from left, $\rhoo(L_1;\tau)$ increases unboundedly. A proof follows from boundedness of $\rhoo(L_2;\tau)$ for $\tau \in \big[0 ,\frac{\pi}{2\lambda_n(L_1)}\big)$ and unboundedness of $\frac{\pi}{2\lambda_n(L_1)}$ in the same interval. 
		The smallest $\tau^*$ is solution of a nonlinear equation, although in the following we show that any 
		\begin{eqnarray}\label{condPerf}
		\tau \in \left[\frac{\pi \hat{p}}{2 \hat{p} \lambda_n(L_1)+1},\frac{\pi}{2\lambda_n(L_1)}\right)
		\end{eqnarray}
		can serve as a $\tau^*$, where $\hat{p} = \rhoo(L_2;\frac{\pi}{2\lambda_n(L_1)})$.
		Based on definition of $\rhoo$, we have
		\[
		\rhoo(L_1;\tau) \geq \frac{\Delta_n}{2\lambda_n(L_1)}\frac{\cos\big(\tau \lambda_n(L_1)\big)}{1-\sin\big(\tau \lambda_n(L_1)\big)}
		\]
		for all $\tau \in \big[0 ,\frac{\pi}{2\lambda_n(L_1)}\big)$.
		Furthermore, since $\frac{1}{2\lambda_n(L_1)}>\frac{\tau}{\pi}$, we have
		\[
		\rhoo(L_1;\tau) \geq \frac{\Delta_n \tau}{\pi}\frac{\pi/2}{\pi/2-\tau\lambda_n(L_1)}.
		\]
		Using inequality above, if condition \eqref{condPerf} hold, then,
		\[
		\rhoo(L_1;\tau) \geq \rhoo\Big(L_2;\frac{\pi}{2\lambda_n(L_1)}\Big).
		\]
		Thus, our desired result follows from the result of Theorem \ref{coherenceMainTheoremCor}.
	\end{proof}
	
	\vspace{0.2cm}
	\begin{theorem}\label{RobustTopPath}
		Suppose that two linear consensus networks with dynamics  (\ref{eq:system}) and unweighted underlying graphs are given: the graph topology of one of them is path, denoted by $\mathcal{P}$,  and the other one has an arbitrary non-path graph topology, shown by $\mathcal{G}$.  { If rows of output matrix $C$ span the disagreement subspace}, then there exists a $\tau^*_\mathcal{G} >0$ such that for all $\tau>\tau^*_\mathcal{G}$ the following ordering holds
		\begin{eqnarray*}
			\rhoo(L_\mathcal{G};\tau)>\rhoo(L_\mathcal{P};\tau).
		\end{eqnarray*}
	\end{theorem}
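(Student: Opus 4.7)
The plan is to reduce the statement to Theorem \ref{RobustTop} via the spectral comparison $\lambda_n(L_\mathcal{G}) > \lambda_n(L_\mathcal{P})$. Once that inequality is in hand, Theorem \ref{RobustTop} applied with $L_1 = L_\mathcal{G}$ and $L_2 = L_\mathcal{P}$ immediately furnishes a threshold $\tau^*_\mathcal{G}$ beyond which $\rhoo(L_\mathcal{G};\tau) > \rhoo(L_\mathcal{P};\tau)$, which is exactly what is claimed.

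First, I would record the closed form of the unweighted path Laplacian spectrum, $\lambda_k(L_\mathcal{P}) = 2 - 2\cos\bigl((k-1)\pi/n\bigr)$ for $k = 1,\ldots, n$. In particular, $\lambda_n(L_\mathcal{P}) = 2 + 2\cos(\pi/n) < 4$ for every $n \geq 2$, and this largest eigenvalue is simple with an explicit DCT-like eigenvector whose entries take distinct values at all vertices.

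Next, I would establish the strict spectral inequality by splitting on whether $\mathcal{G}$ is a tree. If $\mathcal{G}$ is a tree different from a path, then some vertex must have degree at least three, and the classical bound $\lambda_n(L) \geq \Delta(\mathcal{G}) + 1$ yields $\lambda_n(L_\mathcal{G}) \geq 4 > \lambda_n(L_\mathcal{P})$. If $\mathcal{G}$ contains a cycle, I would pick any spanning tree $T$ of $\mathcal{G}$: when $T \neq \mathcal{P}_n$ the previous tree argument combined with edge-monotonicity of $\lambda_n$ closes the case, while when $T = \mathcal{P}_n$ the graph $\mathcal{G}$ is obtained from $\mathcal{P}_n$ by adding at least one chord $e$; since the top eigenvector of $L_\mathcal{P}$ differs at any two distinct vertices, adding the rank-one positive semidefinite edge-Laplacian of $e$ strictly increases $\lambda_n$ by a Rayleigh-quotient comparison.

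The main obstacle is this chord subcase, together with the potentially degenerate small-$n$ situation (e.g., $n = 3$ with $\mathcal{G} = K_3$) in which $\lambda_n(L_\mathcal{G}) = \lambda_n(L_\mathcal{P})$ and Theorem \ref{RobustTop} does not apply verbatim. For such a degenerate configuration I would fall back on the eigenvalue representation \eqref{perf-meas}: both performance measures blow up as $\tau \to \pi/(2\lambda_n)^-$, but since $\mathcal{P}_n$ has a simple largest eigenvalue while $\mathcal{G}$ must attain the common value with multiplicity at least two, the coefficient of the divergent term $f_\tau(\lambda_n)$ is strictly larger for $\mathcal{G}$, and the remaining contributions from smaller eigenvalues stay uniformly bounded, so $\rhoo(L_\mathcal{G};\tau) > \rhoo(L_\mathcal{P};\tau)$ for $\tau$ sufficiently close to the common stability boundary.
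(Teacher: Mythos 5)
Your overall route---establish $\lambda_n(L_\mathcal{G})>\lambda_n(L_\mathcal{P})$ and invoke Theorem \ref{RobustTop}---is exactly the paper's, and your case analysis is in fact more honest than the paper's (which simply asserts $d_{\max}(\mathcal{G})\ge 3$ for every non-path $\mathcal{G}$ with $n>3$, a claim that cycles falsify). But there is a genuine gap in your chord subcase. The top eigenvector of the unweighted path is $v_k=\cos\bigl((2k-1)(n-1)\pi/(2n)\bigr)$, and for odd $n$ it satisfies the reflection symmetry $v_k=v_{n+1-k}$ (for $P_3$ it is proportional to $(1,-2,1)^{\T}$), so its entries are \emph{not} pairwise distinct; the Rayleigh-quotient strictness argument fails precisely for the endpoint chord $\{1,n\}$, and indeed $\lambda_n(L_{C_n})=\lambda_n(L_{P_n})=2+2\cos(\pi/n)$ for every odd $n$. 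The degenerate case you relegate to ``small $n$'' is therefore the entire infinite family of odd cycles. (Conversely, your case analysis can be compressed: any connected non-path graph other than a cycle has a vertex of degree at least $3$, hence $\lambda_n\ge 4$ by Lemma \ref{specRad}, and even cycles have $\lambda_n=4>2+2\cos(\pi/n)$; the odd cycle is the unique surviving case.)

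Your fallback for that surviving case---compare multiplicities of the divergent eigenvalue---is the same device the paper uses for $n=3$, but as written it does not close the gap under the stated hypothesis. In representation \eqref{perf-meas} the divergent terms carry weights $\Delta_i=\|Cq_i\|^2$ computed in the \emph{respective} eigenbases of $L_\mathcal{P}$ and $L_\mathcal{G}$; the assumption that the rows of $C$ span the disagreement subspace only guarantees these weights are positive, not that the sum of two of them (over the cycle's two-dimensional top eigenspace) exceeds the single weight attached to the path's simple top eigenvalue. The multiplicity comparison is airtight when $C=M_n$ or $C$ has orthonormal rows, where all $\Delta_i$ coincide, but for a general spanning $C$ you need a further argument (or a strengthened hypothesis) to order the divergence coefficients. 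The paper's own treatment of its $n=3$ case makes the same silent leap, so this is a point to flag explicitly rather than absorb into ``the required result follows.''
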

	\begin{proof}
		For a path graph on $\N$ nodes, we have 
		\begin{eqnarray}\label{lambdaNpath}
		\lambda_n(L_\mathcal{P}) =  2- 2\cos\left(\frac{n-1}{n}\pi\right) < 4.
		\end{eqnarray}
		Whereas, for any  non-path topology $\mathcal{G}$ and  $n>3$, we have
		$$d_{\mbox{max}}(\mathcal{G}) \geq 3.$$
		Thus, for any non-path $\mathcal{G}$ with more than 3 nodes, by Lemma \ref{specRad} and \eqref{lambdaNpath} we get
		\begin{eqnarray*}
			\lambda_n(L_\mathcal{P})< 4 <\lambda_n(L_\mathcal{G}) .
		\end{eqnarray*}
		For graphs with more than 3 nodes a proof follows from Theorem \ref{RobustTop}.
		We note that for $n=3$ there exists only two topologies, namely path graph and complete graph. Even though, the largest eigenvalue of both of these graphs are equal to 3, the multiplicity of this eigenvalue for the complete graph is 2, and the required result follows.  
	\end{proof}
	
	\vspace{0.2cm}
	\begin{theorem}\label{RobustTopRing}
		Suppose that two linear consensus networks with dynamics  (\ref{eq:system}) and unweighted underlying graphs are given: one with ring topology $\mathcal{R}$ and the other one with an arbitrary non-ring topology $\mathcal{G}$ that has at least one loop. If  $C=M_n$, then there exists a $\tau^*_\mathcal{G} > 0$ such that for all $\tau>\tau^*_\mathcal{G}$ the following inequality holds
		\begin{eqnarray*}
			\rhoo(L_\mathcal{G};\tau)>\rhoo(L_\mathcal{R};\tau).
		\end{eqnarray*}
	\end{theorem}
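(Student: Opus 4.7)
The plan is to mirror the proof of Theorem \ref{RobustTopPath}: compare spectral radii of the two Laplacians, and then invoke Theorem \ref{RobustTop} whenever possible. The observation driving the argument is that the ring graph is extremal in $\lambda_n$ among connected graphs containing a cycle, so in most cases the strict ordering $\lambda_n(L_\mathcal{G}) > \lambda_n(L_\mathcal{R})$ holds. The main obstacle is the small case $n=4$, where several non-ring graphs share the value $\lambda_n = 4$ with $C_4$ and a purely spectral-radius argument is insufficient.

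First I would record the Laplacian spectrum of the ring, $\lambda_k(L_\mathcal{R}) = 2 - 2\cos(2\pi(k-1)/n)$, so that $\lambda_n(L_\mathcal{R})\leq 4$ with equality iff $n$ is even. Next I would note that any connected non-ring graph $\mathcal{G}$ containing a cycle must satisfy $d_{\max}(\mathcal{G})\geq 3$ (otherwise $\mathcal{G}$ would be 2-regular and connected, hence a cycle). The case $n=3$ is vacuous since $K_3$ is itself the ring, so throughout $n\geq 4$. Applying Lemma \ref{specRad} gives $\lambda_n(L_\mathcal{G}) \geq d_{\max}(\mathcal{G})+1 \geq 4$, with strict inequality whenever $d_{\max}(\mathcal{G}) < n-1$.

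I would then split into cases. \textbf{(a)} If $n$ is odd, $\lambda_n(L_\mathcal{R}) < 4 \leq \lambda_n(L_\mathcal{G})$, and Theorem \ref{RobustTop} produces $\tau^*_\mathcal{G}$ directly. \textbf{(b)} If $n$ is even and $n\geq 6$, then for any admissible $\mathcal{G}$ we have $d_{\max}(\mathcal{G}) \leq n-2$, so the inequality in Lemma \ref{specRad} is strict and $\lambda_n(L_\mathcal{G}) > 4 = \lambda_n(L_\mathcal{R})$; Theorem \ref{RobustTop} again applies. \textbf{(c)} If $n=4$, only three non-ring connected graphs with a cycle exist: the paw, $K_4\setminus e$, and $K_4$, with Laplacian spectra $\{0,1,3,4\}$, $\{0,2,4,4\}$, and $\{0,4,4,4\}$ respectively, to be compared with $\{0,2,2,4\}$ of $C_4$.

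The three subcases of (c) are treated as follows. For the paw, the non-zero spectra $(1,3,4)$ and $(2,2,4)$ have equal sum, and $(4,3,1)$ majorizes $(4,2,2)$. Since $C=M_n$ puts us under formula \eqref{eq:coherenceValue} and $f_\tau$ is strictly convex by Theorem \ref{coherenceConvexity}, the symmetric convex function $\rhoo$ is Schur-convex, and strict majorization yields $\rhoo(L_{\mathrm{paw}};\tau) > \rhoo(L_\mathcal{R};\tau)$ for every admissible $\tau$, so any $\tau^*_\mathcal{G}>0$ works. For $K_4\setminus e$ and $K_4$, the eigenvalue $4$ appears with multiplicities $2$ and $3$, versus multiplicity $1$ in $C_4$. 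Because $f_\tau(4) \to +\infty$ as $\tau \to \pi/8^-$ while every other $f_\tau(\lambda_i)$ remains bounded on the shared stability interval $[0,\pi/8)$, the difference $\rhoo(L_\mathcal{G};\tau) - \rhoo(L_\mathcal{R};\tau)$ diverges to $+\infty$, and a threshold $\tau^*_\mathcal{G}< \pi/8$ exists. The hardest step is precisely this $n=4$ analysis, since Theorem \ref{RobustTop} is not available when spectral radii coincide, forcing a hybrid argument combining Schur-convexity (for equal-sum majorization) with the divergence rate of $f_\tau$ at the stability boundary (for higher-multiplicity cases).
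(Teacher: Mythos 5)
Your overall strategy coincides with the paper's: compare $\lambda_n(L_\mathcal{G})$ with $\lambda_n(L_\mathcal{R})$ using Lemma \ref{specRad}, invoke Theorem \ref{RobustTop} whenever the spectral radii are strictly ordered, and treat $n=4$ by hand. Your $n=4$ analysis is in fact more complete than the paper's: the paper only observes that the second-largest Laplacian eigenvalue of $C_4$ is $2$ while it is at least $3$ for the other non-tree graphs on four nodes, and then asserts the conclusion; you supply the argument that actually closes this case, namely strict majorization of $(4,2,2)$ by $(4,3,1)$ together with strict convexity of $f_\tau$ for the paw (whose top eigenvalue has the same multiplicity as the ring's, so the divergent terms cancel), and the higher multiplicity of the eigenvalue $4$ driving $\rhoo(L_\mathcal{G};\tau)-\rhoo(L_\mathcal{R};\tau)\to+\infty$ as $\tau\to(\pi/8)^-$ for $K_4\setminus e$ and $K_4$. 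One small repair is needed in your case (b): the claim that every admissible $\mathcal{G}$ on even $n\geq 6$ nodes has $d_{\max}(\mathcal{G})\leq n-2$ is false --- a star with one extra edge joining two leaves is connected, contains a cycle, is not a ring, and has $d_{\max}=n-1$. The conclusion survives without it: if $d_{\max}=n-1$ then Lemma \ref{specRad} gives $\lambda_n(L_\mathcal{G})=n\geq 6>4$, while if $3\leq d_{\max}\leq n-2$ the inequality in the lemma is strict and $\lambda_n(L_\mathcal{G})>d_{\max}+1\geq 4$; equivalently, equality $\lambda_n=d_{\max}+1=4$ would force $d_{\max}=3=n-1$, i.e.\ $n=4$. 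With that one-line fix your proof is correct and is essentially the paper's argument, carried out with the $n=4$ details the paper leaves implicit.
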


	\begin{proof}
		For ring graphs with $n > 4$ nodes, we have
		
		\[ \lambda_n(L_\mathcal{R}) \leq  2- 2\cos(\pi) = 4.\]
		
		Whereas, for other non-ring, non-tree graph topologies $\mathcal{G}$ with more than 4 nodes, Lemma \ref{specRad} yields
		
		\[\lambda_n(L_\mathcal{R})\leq 4 <\lambda_n(L_\mathcal{G}).\]
		
		For $n>4$, the proof follows from Theorem \ref{RobustTop}. For $n= 4$, our claim holds true since for all non-tree topologies, we have $\lambda_n =4$, but the second largest eigenvalue for a ring graph is $2$ and for the rest of non-tree topologies, it is greater than or equal to $3$. From this the desired result follows.
	\end{proof}
	
	
	The results of Theorems \ref{RobustTop} and \ref{RobustTopRing} are in contrast with the common intuition of performance in non-delayed first-order linear consensus networks, where enhancing connectivity in the sense of $L_1 \preceq L_2$ always results in performance improvement, i.e., $\rhoo(L_2;0) \leq \rhoo(L_1;0)$  and path graph has the worst performance among all unweighted graphs. In conclusion, in  time-delay linear consensus networks, higher connectivity does not necessarily imply better $\HH_2$-norm performance \cite{ghaedsharaf2016interplay}. 
	
	{\BR In the subsequent sections, we exploit properties of the performance measure, such as convexity, to formulate algorithms to enhance the performance. Furthermore, we discuss efficacy of these algorithms by comparing them with fundamental limits that we derived in this section.}

	\section{Improving Performance by Adding New Feedback Loops}\label{sec-V}
	
	In the following three sections, we consider the problem of performance improvement in a time-delay linear consensus network. There are only four possible ways to achieve this objective via manipulating the underlying graph of the network: (i) adding new interconnection links or growing, (ii) adjusting weight of existing links, and (iii) eliminating existing links or sparsification. Other design objectives, such as rewiring, can be equivalently executed in several consecutive design steps involving reweighing, growing, and sparsification. Therefore, we focus our attention on these three core design schemes.
	
	In this section, we consider the problem of growing consensus network (\ref{eq:system}), where it is allowed to establish new interconnections links in the network. It is assumed that some of the Laplacian eigenvalues are located on the left side of the dashed line in Figure \ref{fig_1}, i.e., $\lambda_i < \frac{z^*}{\tau}$ for some $i \in \big\{2, \ldots,n\big\}$. In this case, enhancing the connectivity can improve $\HH_2$-norm performance  of the network. 
	
	Suppose that a set of candidate links $\EE_c$ and a corresponding weight function $\Wc:\EE_c \rightarrow \mathbb{R}_{+}$ are given. Adding a new link between two agents is equivalent to closing a new feedback loop around these two agents according to our earlier interpretation \eqref{feedback-law}. Therefore,  weight of a candidate link plays role of a feedback gain in the overall closed-loop system and it cannot be chosen arbitrary; its value is opted by considering all existing constraints. Based on this elucidation, it is reasonable to consider the following modified form of network (\ref{eq:system}) for our design purpose
	\begin{eqnarray*}
		\label{eq:systemFB} 
		\dot{x}(t) & = & -Lx(t-\tau)+u(t)+\xi(t)\\
		u(t) & =& -\LK x(t-\tau)\\
		y(t) & =& { C}~x(t)
	\end{eqnarray*}
	that can be rewritten in the following closed-loop form 
	\begin{subequations}\label{eq:systemFBC}
		\begin{eqnarray}
			\dot{x}(t) & =& -(L+\LK)x(t-\tau)+\xi(t)\\
			y(t) & =&  { C}~x(t)
		\end{eqnarray}
	\end{subequations}
	where $\LK$ is the Laplacian matrix of the feedback gain and can be represented by  
	\begin{align*}
	\LK = \sum_{e \in \EE_s} \Wc(e) b_e b_e^{\T},
	\end{align*}
	in which  $b_e$ is the corresponding column to edge $e$ in the node-to-edge incidence matrix of the underlying graph of the network. Our design objective is to improve performance of the noisy network in presence of time-delay by designing a sparse Laplacian feedback gain $\LK$ with at most $k$ links  having predetermined weight, i.e., our goal is to solve the following optimization problem 
	
	\vspace{.25cm}
	\hspace{-.5cm}\begin{tabularx}{0.504\textwidth}{l X r}
		$\underset{\EE_s}{\mbox{Minimize}}$   & $\rhoo(L+\LK;\tau)$ & \tagarray \label{optProb1}\\
		\mbox{subject to:} & $\displaystyle\LK = \sum_{e \in \EE_s} \Wc(e) b_e b_e^{\T}$, & \tagarray \label{cond1}\\
		& $0 \preceq L+\LK \prec \frac{\pi}{2\tau} I_{\N}$, & \tagarray \label{cond2}\\
		& $|\EE_s|\leq k \hspace{0.5cm} \textrm{for all} \hspace{0.5cm}\EE_s {
			\subseteq} \EE_c$, & \tagarray \label{cond3}
	\end{tabularx}
	
	\vspace{0.2cm}
	\noindent Condition (\ref{cond2}) ensures stability of the closed-loop network \eqref{eq:systemFBC}. Since the problem given by (\ref{optProb1})-(\ref{cond3}) is combinatorial, the exact solution must be found by an exhaustive search and appraising $\rhoo(L+\LK;\tau)$ for all possible $\sum_{i=1}^k \binom{|\EE_c|}{i}$ cases. In real-world problems, when the size of candidate set is prohibitively large, we need efficient methods to tackle the problem. Furthermore, when there is no time-delay, the $\mathcal{H}_2$-norm performance of the network will improve no matter how we choose and add the new candidate links   \cite{siami2016topology}. However, in presence of time-delay, adding new links may deteriorate performance or even destabilize the closed-loop network, which is why growing a time-delayed network  is a more delicate task.   
	
	\subsection{Cost Function Approximation and SDP Relaxation} 
	
	We can derive a convex relaxation of our problem by letting constants $\Wc(e)$ to become decision variables, shown by $\wc(e)$, and replacing the constraint (\ref{cond3}) by 
	\begin{equation}
	\sum_{e\in \EE_c} \wc(e) ~\leq~ W_k \label{relax-1}
	\end{equation}
	where 
	\begin{equation} W_k = \max\limits_{\substack{\mid \EE_s \mid = k \\ {\EE}_s {
				\subseteq} {\EE}_c }} ~\sum_{e\in {\EE}_s} \Wc(e).\label{relax-2}\end{equation}
	
{\BL \noindent Thus, our design optimization problem is to solve

		\vspace{.25cm}
		\hspace{-.5cm}\begin{tabularx}{0.504\textwidth}{l X r}
			$\underset{ \{ \wc(e)\Sp|\Sp \forall e \in \EE_c \}}{\mbox{Minimize}}$   & $\rhoo(L+\LK;\tau)$ & \tagarray \label{optProb1_dup}\\
	~		\mbox{subject to:}& \eqref{cond1}, \eqref{cond2}, \eqref{relax-1}, \eqref{relax-2} &
		\end{tabularx}
}
		
		{{\BL In spite of smoothness of the cost function in (\ref{optProb1_dup}), the structure of the cost function is not appealing since we cannot cast it as an SDP with linear objective function and constraints or solve it using existing and standard solvers or toolboxes.} Moreover, if we want to write a solver for the problem using the conventional methods, e.g., interior-point or subgradient methods, we have to find eigenvalues and eigenvectors of $L+\LK$ for each step of minimizing the performance function; which significantly increases complexity in terms of both time and details of solver. Therefore, we need an alternative way to remove eigen-decomposition from our solution. We overcome this obstacle by introducing a tight approximation of (\ref{eq:coherenceValueGen}) that has a small relative error with respect to our performance measure.
			
			\begin{lemma}\label{lemma-H-2-approx}
				{\BL For a stable linear consensus network \eqref{eq:systemFBC} with output matrix $C$ with property $C\mathbf{1}_n = \mathbf{0}_n$}, the spectral function 
				\begin{small}\begin{eqnarray*}
						\rhoot(L;\tau) =  \frac{1}{2} \Tr\big[L_oL^{\dagger} + \frac{4\tau}{\pi}L_o\big(\frac{\pi}{2}M_{\N}-\tau L \big)^{\dagger}\!+c_1 \tau^2 L_o L+\frac{c_0}{2} \tau L_o \big], 
					\end{eqnarray*}
				\end{small}
				approximates (\ref{eq:coherenceValueGen}) with guaranteed error bound
				\begin{equation*}
				0\leq\frac{\rhoo(L;\tau) -\rhoot(L;\tau)}{\rhoo(L;\tau)}~\leq~ { 2 \times 10^{-4}},
				\end{equation*}
				where $c_0 = 0.18733$ and ${c_1 =-0.01}$ are
				constants to minimize mean the squared error numerically. 
			\end{lemma}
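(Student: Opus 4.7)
The plan is to reduce the matrix inequality to a one-dimensional real analysis problem via spectral decomposition, then establish the two-sided relative error bound on a single scalar function of $z=\tau\lambda$. First, I would diagonalize $L=Q\Lambda Q^{\T}$ and note that every matrix appearing inside the traces ($L^{\dagger}$, $\cos(\tau L)$, $\sin(\tau L)$, $(\frac{\pi}{2}M_{\N}-\tau L)^{\dagger}$, $L$, $I_{\N}$) is simultaneously diagonalizable with $L$. Because of Assumption \ref{C_ones}, the weights $\Delta_i := (Q^{\T}L_oQ)_{ii}=\|Cq_i\|_2^2\geq 0$ satisfy $\Delta_1=0$, so both $\rhoo(L;\tau)$ and $\rhoot(L;\tau)$ collapse to sums over the nonzero modes $i\geq 2$ with non-negative weights.

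Next, I would factor $\frac{1}{2\lambda_i}$ out of each per-mode term and make the substitution $z=\tau\lambda_i\in(0,\pi/2)$. This yields the $\tau$-independent scalar functions
\begin{equation*}
\phi(z) \,=\, \sec z + \tan z, \qquad \tilde\phi(z)\,=\, 1 + \frac{4z}{\pi(\pi/2-z)} + c_1 z^{2} + \frac{c_0}{2} z,
\end{equation*}
for which $2\lambda_i f_\tau(\lambda_i)=\phi(z)$ and $2\lambda_i \tilde f_\tau(\lambda_i)=\tilde\phi(z)$. Since $\Delta_i\geq 0$ and each per-mode value is positive, the elementary inequality $\min_i (a_i/b_i) \leq (\sum_i \Delta_i a_i)/(\sum_i \Delta_i b_i)\leq \max_i (a_i/b_i)$ (with $a_i=\phi(z_i)-\tilde\phi(z_i)$, $b_i=\phi(z_i)$) shows that it suffices to prove
\begin{equation*}
0 \,\leq\, \frac{\phi(z)-\tilde\phi(z)}{\phi(z)} \,\leq\, 2\times 10^{-4} \qquad \text{for all } z\in(0,\pi/2).
\end{equation*}

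For the boundary behavior I would verify that $\phi(0)=\tilde\phi(0)=1$ and that, as $z\to \pi/2^{-}$, both $\phi(z)$ and the singular part of $\tilde\phi(z)$ blow up at exactly the same rate $2/(\pi/2-z)$ (this is why the rational term uses $\pi/2-z$ in the denominator and the coefficient $4/\pi$ in front). Consequently, $(\pi/2-z)(\phi(z)-\tilde\phi(z))$ extends to a continuous function on $[0,\pi/2]$ whose values are small, while $(\pi/2-z)\phi(z)$ is bounded below by a positive constant. This reduces the global relative-error bound to bounding a specific smooth function on a compact interval.

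The main obstacle is the last, sharp numerical step: proving the uniform bound $2\times 10^{-4}$ with the particular constants $c_0=0.18733$ and $c_1=-0.01$. Because these were chosen to minimize a squared error numerically rather than to satisfy a Taylor-matching identity, no obvious algebraic cancellation is available, and the inequality must be verified by calculus on the scalar remainder $r(z)=\phi(z)-\tilde\phi(z)$: locate its critical points via $r'(z)=0$ on $(0,\pi/2)$, evaluate $r/\phi$ there, and combine with the boundary values. In practice, because $r$ and $\phi$ are analytic with explicit closed forms, this last step can be carried out either with interval arithmetic or by partitioning $(0,\pi/2)$ into subintervals and bounding $r'$ by elementary estimates, ensuring both the non-negativity $r(z)\geq 0$ and the tight upper bound simultaneously.
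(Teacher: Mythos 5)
Your proposal follows essentially the same route as the paper: diagonalize $L$, reduce both $\rhoo$ and $\rhoot$ to weighted sums over the nonzero modes with non-negative weights $\Delta_i=\|Cq_i\|_2^2$, rescale by $z=\tau\lambda_i$, and reduce the matrix-level relative error bound to the relative error of a single scalar approximation on $(0,\pi/2)$. Where you go beyond the paper is at the final step: the paper simply asserts the $2\times 10^{-4}$ bound by pointing to the plotted relative error in Figure~\ref{approxFig}, whereas you correctly identify this numerical verification as the crux and sketch a defensible way to certify it (boundary value matching at $z=0$, cancellation of the $2/(\pi/2-z)$ singularities as $z\to\pi/2^{-}$, then interval arithmetic or subinterval estimates on the smooth remainder). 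Your explicit use of the mediant inequality to pass from per-mode to aggregate relative error is also a step the paper leaves implicit (it only surfaces later, in the proof of Theorem~\ref{optimalityDegree}). One small bookkeeping note: the coefficient of the constant term differs by a factor of two between the lemma's matrix expression ($\frac{c_0}{2}\tau L_o$) and the paper's own scalar derivation ($c_0\tau L_o$); your $\tilde\phi$ is consistent with the lemma statement, and since $c_0$ is a numerically fitted constant this does not affect the argument, but it is worth being aware that the paper is internally inconsistent on this point.
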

			\begin{proof}
				{  
					We recall that 
					\begin{align*}
					\rhoo(L;\tau) =~ \sum_{i=2}^{\N}\frac{{ \Delta_i}}{2\lambda_i}\frac{\cos(\tau \lambda_i)}{1-\sin(\tau \lambda_i)},
					\end{align*}
					by multiplying the nominator and denominator by $\tau$ we get
					\begin{align}
					\rhoo(L;\tau) =~& \tau \sum_{i=2}^{\N}\frac{{ \Delta_i}}{2\tau \lambda_i}\frac{\cos(\tau \lambda_i)}{1-\sin(\tau \lambda_i)} \nonumber \\
					=~& \tau \sum_{i=2}^{\N}{ \Delta_i}f_{1}(\tau \lambda_i), \label{rhooF1}
					\end{align}
					where $f_{1} (x) = \frac{1}{2x}\frac{\cos(x)}{1-\sin(x)}$ with domain $x  \in (0,\pi/2)$ based on definition of $f_{\tau}$ in (\ref{fTau}).
					As a means to find a proper approximate performance function, we look for an approximation of $f_{1}$ and we denote it by $\tilde{f}$. Since $f_{1}$ has two vertical asymptotes inside its effective domain, we want to have bounded $\|\tilde{f} -f_1\|_{\infty}$ over effective domain of these functions. To that end, we utilize
\begin{align}
\tilde{f}(x) = \frac{1}{2} \Big( \frac{1}{x} + \frac{4}{\pi} \frac{1}{\pi/2 -x} + c_0 +c_1 x\Big ), \label{approx-fcn}
\end{align}
					as approximation of $f_1$ where $c_0 = 0.18733$ and ${c_1 =-0.01}$ are constants to minimize mean squared error numerically.
					We define our performance approximate function by substituting $\tilde{f}$ for $f_1$ in \eqref{rhooF1}. Thus, from relative error of $\tilde{f}$ with respect to $f_1$ given in Figure \ref{approxFig}, it yields the desired approximation bound. Consequently, we can write the approximation function in the following form
					\begin{equation*}
					\begin{aligned}
					&\rhoot(L;\tau) =  \tau \sum_{i=2}^{\N}{ \Delta_i}\tilde{f}(\tau \lambda_i) \label{rhooTildeF}\\
					&=  \tau \sum_{i=2}^{\N} \frac{{ \Delta_i}}{2} \Big( \frac{1}{\tau \lambda_i} + \frac{4}{\pi} \frac{1}{\pi/2 - \tau \lambda_i} + c_0 +c_1 \tau \lambda_i \Big )\\
					&=  \frac{1}{2} \Tr\big[L_oL^{\dagger} + \frac{4\tau}{\pi}L_o\big(\frac{\pi}{2}M_{\N}-\tau L \big)^{\dagger}\!+c_1 \tau^2 L_o L+\frac{c_0}{2} \tau L_o \big].
					\end{aligned}
					\end{equation*}
				}
			\end{proof}
			{
				\begin{remark}
					Performance of first-order consensus network with $C = M_{\N}$ was previously approximated in \cite{hod2010analytic,hunt2011impact} by using heuristic methods. We have utilized a systematic method to provide an approximate function \eqref{approx-fcn} and we refer to \cite{pranic2013recurrence} for more  details. 
					
				\end{remark}}
				
				\begin{figure}[t]
					\centering
					\psfrag{a}[c][c]{{$\lambda_i$}}        
					\psfrag{b}[c][c]{$\frac{f_1(\lambda_i)- \tilde{f}(\lambda_i)}{f_1(\lambda_i)}$}\includegraphics[width=0.4\textwidth,trim={0.6cm 0 0.6cm 0},clip]{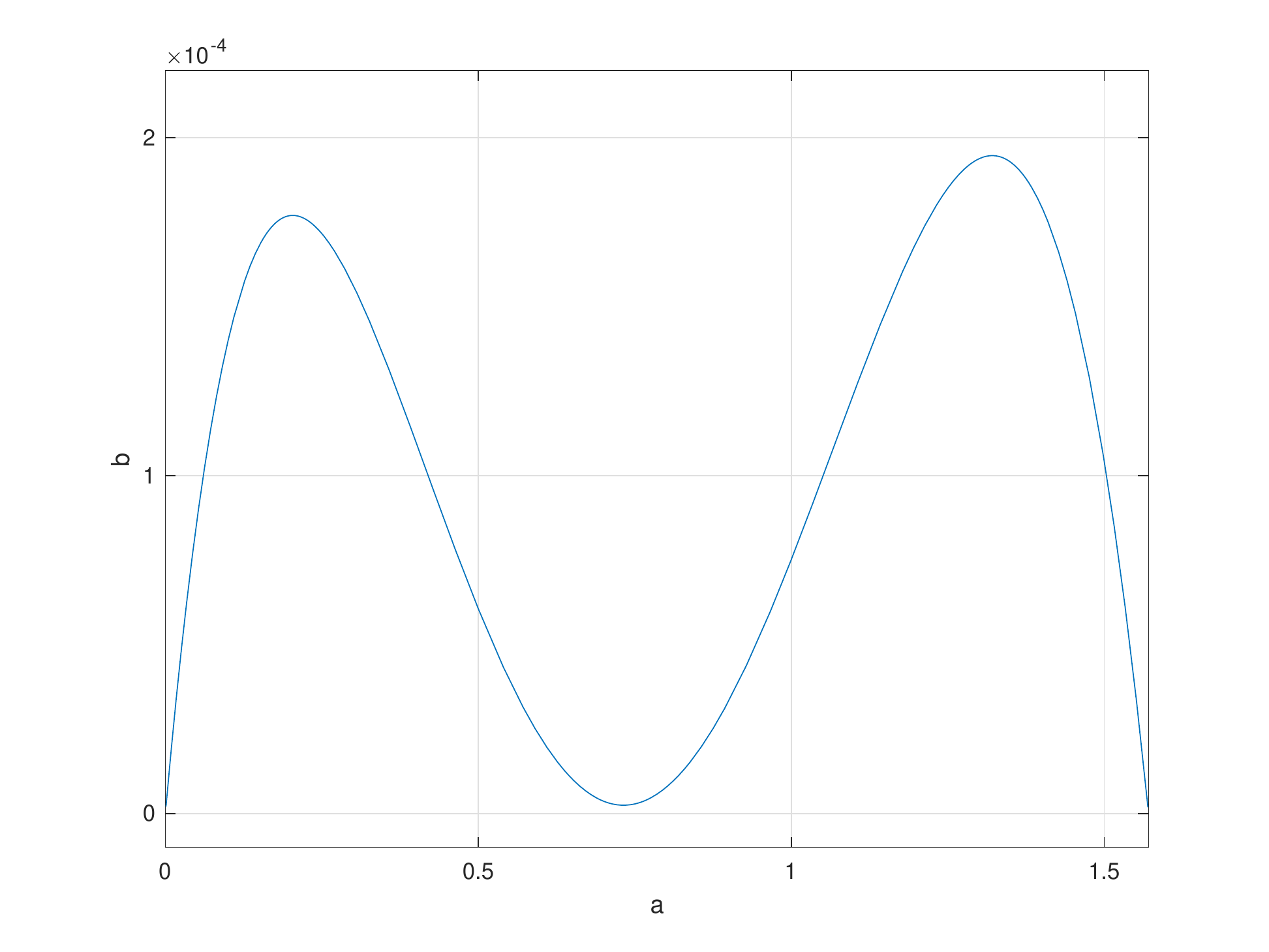}
					\caption{\hspace{0cm}The relative error of $\tilde{f}$ with respect to $f_1$.}
					\label{approxFig}
					\vspace{-.8cm}
				\end{figure}

				{ It is straightforward to show that $\rhoot$ is convex function of eigenvalues and weights of the links for any $C$ such that $C \mathbf{1}_{\N} = \mathbf{0}_{\N}$.} Replacing $\rhoo$ by $\rhoot$ and combinatorial constraint \eqref{cond3} by its relaxed form \eqref{relax-1},  we can relax \eqref{optProb1_dup} to the following optimization problem
				
				\hspace{-.25cm}\begin{tabularx}{0.504\textwidth}{l X r}
					$\underset{ \{\wc(e)\Sp|\Sp \forall e \in \EE_c \}}{\mbox{Minimize}}$   & $\rhoot(L+\LK;\tau)$ & \tagarray \label{optProb1_dup2}\\
~					\mbox{subject to:}& \eqref{cond1}, \eqref{cond2}, \eqref{relax-1}, \eqref{relax-2}.&
				\end{tabularx}}
				
				In addition, neglecting the constant term in $\rhoot$, the optimization problem \eqref{optProb1_dup2} is equivalent to the following SDP
				\vspace{.25cm}
				
				\begin{tabularx}{0.475\textwidth}{l X r}
					\hspace{-.6cm}$\underset{ \{\wc(e)\Sp|\Sp \forall e \in \EE_c \}}{\mbox{Minimize}}$    & $\Tr \big[L_o X_1+\frac{4}{\pi}L_o X_2+c_1 \tau L_o \LK \big]$& \tagarray \label{optProb1SDP}\\
					\hspace{-.7cm} \mbox{subject to:} & $\displaystyle\LK = \sum_{e \in {\EE}_c} \wc(e) b_e b_e^{\T}$ & \tagarray \label{cond1s}\\
					& $\displaystyle\sum_{e\in {\EE}_c} \wc(e) \leq W_k$ & \tagarray \label{cond3s}\\
					& $\begin{bmatrix}
					X_1 & I\\
					I & \tau(L+\LK) +\frac{1}{n}J \end{bmatrix} \succeq 0 $ & \tagarray \label{cond5s}\\
					& $\begin{bmatrix}
					X_2 & I\\
					I & \frac{\pi}{2}I_n-\tau( L +\LK) \end{bmatrix} \succeq 0$. & \tagarray \label{cond6s}\\
				\end{tabularx}

				\vspace{0.2cm}
				
				
				The following Theorem investigates the efficacy of using $\rhoot$ as in optimization problem \eqref{optProb1SDP}-\eqref{cond6s}.
				
				\begin{theorem}\label{optimalityDegree}
					{\BL Let $L_F^*$ be  the  solution of the optimization problem \eqref{optProb1_dup} and $\hat{L}_F$ be the solution of the minimization problem \eqref{optProb1_dup2} where $C = M_n$ or $C$ is an orthonormal matrix $Q$ such that $Q\mathbf{1}_{\N} = \mathbf{0}_{\N}$.} Then, we have
					\begin{eqnarray*}
						\rhoo(L+L_F^*;\tau) \leq \rhoo(L+\hat{L}_F;\tau) \leq  (1+\epsilon) \rhoo(L+L_F^*;\tau),
					\end{eqnarray*}
					where ${ \epsilon = 2 \times 10^{-4}}$.
				\end{theorem}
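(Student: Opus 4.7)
The plan is to exploit that problems \eqref{optProb1_dup} and \eqref{optProb1_dup2} share exactly the same feasible set (both are constrained by \eqref{cond1}, \eqref{cond2}, \eqref{relax-1}, \eqref{relax-2}) and differ only in their objectives, and then sandwich $\rhoo$ and $\rhoot$ via the uniform relative bound from Lemma \ref{lemma-H-2-approx}. Since the conditions on $C$ stated in Theorem \ref{optimalityDegree} (namely $C = M_n$ or $C$ with orthonormal rows annihilating $\mathbf{1}_n$) fall within the hypothesis $C\mathbf{1}_n = \mathbf{0}_n$ of Lemma \ref{lemma-H-2-approx}, that lemma applies at every feasible Laplacian.

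First I would establish the left inequality, which is immediate: $L_F^*$ is a minimizer of $\rhoo(L+\LK;\tau)$ over the common feasible set, so in particular
\[
\rhoo(L+L_F^*;\tau) \;\leq\; \rhoo(L+\hat{L}_F;\tau).
\]
For the right inequality, I would chain through $\rhoot$. Rewrite the bound in Lemma \ref{lemma-H-2-approx} as $(1-\epsilon_0)\rhoo \leq \rhoot \leq \rhoo$ with $\epsilon_0 = 2\times 10^{-4}$. Applied at $L+\hat{L}_F$ this gives $\rhoo(L+\hat{L}_F;\tau) \leq \rhoot(L+\hat{L}_F;\tau)/(1-\epsilon_0)$. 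Then use optimality of $\hat{L}_F$ for the relaxed objective to obtain $\rhoot(L+\hat{L}_F;\tau) \leq \rhoot(L+L_F^*;\tau)$, and finally apply the upper half of the same bound at $L+L_F^*$ to get $\rhoot(L+L_F^*;\tau) \leq \rhoo(L+L_F^*;\tau)$. Composing these three steps yields
\[
\rhoo(L+\hat{L}_F;\tau) \;\leq\; \frac{1}{1-\epsilon_0}\,\rhoo(L+L_F^*;\tau),
\]
and a Taylor expansion (equivalently $(1-\epsilon_0)^{-1} \leq 1+\epsilon$ with $\epsilon = \epsilon_0/(1-\epsilon_0) \leq 2\times 10^{-4}$ to first order) delivers the claimed multiplicative constant.

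The only subtlety I see is bookkeeping of the relative-error constant: strictly the factor is $1/(1-\epsilon_0)$ rather than $1+\epsilon_0$, and these agree only to leading order. This is not really an obstacle — one simply notes that since $\epsilon_0$ is tiny the difference is absorbed into the stated $\epsilon = 2\times 10^{-4}$ — but it is the place where the argument is not completely mechanical. Everything else (feasibility of $L_F^*$ in the relaxed problem and of $\hat{L}_F$ in the original, well-posedness of both minimizers thanks to the strict stability constraint \eqref{cond2}, applicability of Lemma \ref{lemma-H-2-approx} under the two choices of $C$) is routine, so I do not anticipate any serious technical difficulty beyond verifying this constant.
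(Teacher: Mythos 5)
Your proposal is correct and follows essentially the same route as the paper's proof: sandwich $\rhoot$ between $\rhoo$ and a relative-error perturbation of it via Lemma \ref{lemma-H-2-approx}, then chain the two optimality inequalities $\rhoot(L+\hat{L}_F;\tau)\leq\rhoot(L+L_F^*;\tau)$ and $\rhoo(L+L_F^*;\tau)\leq\rhoo(L+\hat{L}_F;\tau)$ over the common feasible set. In fact your bookkeeping is slightly more careful than the paper's, which writes the sandwich as $\rhoo\leq\rhoot\leq(1+\epsilon_1)\rhoo$ even though the lemma's stated bound $0\leq(\rhoo-\rhoot)/\rhoo\leq\epsilon_1$ actually gives $(1-\epsilon_1)\rhoo\leq\rhoot\leq\rhoo$ and hence the factor $1/(1-\epsilon_1)$ you obtain, which agrees with $1+\epsilon$ only to leading order.
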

				
				\vspace{0.1cm}
				\begin{proof}
					First, we look at relative error of $f_1$ with $\tilde{f}$. For ${\epsilon_1 = 2 \times 10^{-4}}$ we have
					\begin{eqnarray*}
						0 \leq\frac{f_1({\lambda}_i)-\tilde{f}({\lambda}_i)}{{f_1}({\lambda}_i)} \leq \epsilon_1.
					\end{eqnarray*}
					In addition, rearranging and doing summation over all $\lambda_i$'s yields
					\begin{eqnarray}\label{53}
					\rhoo({L};\tau) \leq{\rhoot({L};\tau)} \leq (1+\epsilon_1)\rhoo({L};\tau).
					\end{eqnarray}
					Moreover, since $L+\hat{L}_F$ and ${L+L_F^*}$ are minimizers of $\rhoot$ and $\rhoo$, repectively, we observe that
					\begin{eqnarray}\label{54}
					\rhoot(L+\hat{L}_F;\tau) \leq \rhoot({L+L_F^*};\tau),\\\label{55}
					\rhoo({L+L_F^*};\tau) \leq \rhoo(L+\hat{L}_F;\tau).
					\end{eqnarray}
					Inequalities \eqref{53} and \eqref{54} yield that
					\begin{eqnarray}\label{56}
					\rhoo(L+\hat{L}_F;\tau) \leq {(1+\epsilon_1)}\rhoo({L+L_F^*};\tau).
					\end{eqnarray}
					Lastly, from \eqref{55} and \eqref{56} we obtain
					\begin{eqnarray*}
						\rhoo(L+L_F^*;\tau) \leq \rhoo(L+\hat{L}_F;\tau) \leq  (1+\epsilon_1) \rhoo(L+L_F^*;\tau),
					\end{eqnarray*}
					for $\epsilon_1 = 2 \times 10^{-4}$.
				\end{proof}

				\subsection{Greedy Algorithms} 
				In spite of the fact that the SDP relaxation of our problem can be solved using conventional SDP solvers, it cannot be utilized to improve performance of a moderately sized network (more than 20000 candidate edges) as it would require a large amount of memory, which is not practically plausible. To address this issue, and in light of Theorem \ref{optimalityDegree}, we propose  greedy algorithms to tackle the optimal control problem given in \eqref{optProb1}-\eqref{cond3} for moderately sized networks. 
				{
					An undesirable naive procedure for one step of greedy algorithm is to choose the optimal link by evaluating the performance measure after adding the candidate links to the network one at a time, which involves computing the pseudo-inverse of the Laplacian matrix  for each candidate link.} A positive aspect of using $\rhoot$ as performance function is that it spares us the complexity of using eigen-decompsition for  Laplacian matrix. The following Theorem highlights an additional positive aspect of utilizing $\rhoot$ instead of $\rhoo$ that enables us to calculate a useful explicit rank-one update rule.  
				
				\begin{theorem}\label{rankOneUpdate}
					Let $L_e$ be the rank one weighted Laplacian matrix of a graph with only a single edge $e$ between nodes $i$ and $j$ nodes with a given  weight $\Wc(e)$. Then,
					\begin{eqnarray}\label{increment}
					\rhoot(L+L_e;\tau) = \rhoot(L;\tau) + c(e),
					\end{eqnarray}
					where 
					\begin{eqnarray}\label{purturbationVal}
					& & \hspace{-1.8cm} c(e) := -\frac{r_e(LL_o^{\dagger}L)}{2\Wc(e)^{-1}+2 r_e(L)} + c_1 \tau ^2 \Wc(e) \nonumber\\
					& & \hspace{1cm} -\frac{2 \tau}{\pi}\frac{r_e\big((\frac{\pi}{2}M_{\N}-\tau L)L_o^{\dagger}(\frac{\pi}{2}M_{\N}-\tau L)\big)}{-(\Wc(e)\tau)^{-1}+r_e(\frac{\pi}{2}M_{\N}-\tau L)} .
					\end{eqnarray}
				\end{theorem}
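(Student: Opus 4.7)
The plan is to split the increment $\rhoot(L+L_e;\tau) - \rhoot(L;\tau)$ according to the four summands in the definition of $\rhoot$. The $\tfrac{c_0 \tau}{4}\Tr[L_o]$ piece is independent of $L$ and contributes nothing. The linear piece $\tfrac{c_1\tau^2}{2}\Tr[L_o L]$ changes by $\tfrac{c_1\tau^2}{2}\Tr[L_o L_e] = \tfrac{c_1 \tau^2 \Wc(e)}{2}\,b_e^{\T} L_o b_e$, which under the normalization $b_e^{\T} L_o b_e = 2$ (satisfied, e.g., when $L_o = M_{\N}$) recovers the middle term $c_1\tau^2\Wc(e)$ of \eqref{purturbationVal}.

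For the two pseudo-inverse summands, the main tool is the Sherman--Morrison identity; the obstacle is that both $L$ and $B := \tfrac{\pi}{2}M_{\N} - \tau L$ are singular, each carrying $\mathbf{1}$ in its null space. I would handle this by restricting every operator to the disagreement subspace $\mathbf{1}^{\perp}$, on which $L$ is positive definite (connectivity) and $B$ is positive definite (stability $\tau\lambda_{\N}<\pi/2$), and in which $b_e$ already lives. Sherman--Morrison on $\mathbf{1}^{\perp}$ then yields
\begin{equation*}
(L+L_e)^{\dagger} = L^{\dagger} - \frac{L^{\dagger} b_e b_e^{\T} L^{\dagger}}{\Wc(e)^{-1}+r_e(L)},\quad
(B-\tau L_e)^{\dagger} = B^{\dagger} + \frac{B^{\dagger} b_e b_e^{\T} B^{\dagger}}{(\tau\Wc(e))^{-1}-r_e(B)},
\end{equation*}
where I have used $b_e^{\T} L^{\dagger} b_e = r_e(L)$ and $b_e^{\T} B^{\dagger} b_e = r_e(B)$. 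Tracing these against $L_o$ and multiplying by the coefficients $\tfrac{1}{2}$ and $\tfrac{2\tau}{\pi}$ appearing in $\rhoot$ produces the first and third summands of \eqref{purturbationVal}.

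The step requiring the most care is converting the quadratic forms $b_e^{\T} L^{\dagger} L_o L^{\dagger} b_e$ and $b_e^{\T} B^{\dagger} L_o B^{\dagger} b_e$ into the compact effective-resistance expressions $r_e(L L_o^{\dagger} L)$ and $r_e(B L_o^{\dagger} B)$ that appear in the statement. Because $b_e\in\mathbf{1}^{\perp}$ and, under the natural assumption that $L_o$ restricted to $\mathbf{1}^{\perp}$ is positive definite (e.g., when rows of $C$ span the disagreement subspace), the operators $L L_o^{\dagger} L$ and $B L_o^{\dagger} B$ are invertible on $\mathbf{1}^{\perp}$ with inverses $L^{-1} L_o L^{-1}$ and $B^{-1} L_o B^{-1}$. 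Consequently their Moore--Penrose pseudo-inverses coincide, as linear maps on $\mathbf{1}^{\perp}$, with $L^{\dagger} L_o L^{\dagger}$ and $B^{\dagger} L_o B^{\dagger}$; since $b_e$ lies in this subspace, the desired identities $b_e^{\T} L^{\dagger} L_o L^{\dagger} b_e = r_e(L L_o^{\dagger} L)$ and $b_e^{\T} B^{\dagger} L_o B^{\dagger} b_e = r_e(B L_o^{\dagger} B)$ follow directly from the definition of $r_e$. Summing the three nonzero contributions then reproduces $c(e)$ exactly as stated in \eqref{purturbationVal}, yielding the desired rank-one update \eqref{increment}.
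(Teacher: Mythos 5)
Your proposal is correct and follows essentially the same route as the paper's proof: split $\rhoot$ into its summands, apply the Sherman--Morrison rank-one update to the two pseudo-inverse terms, and identify the resulting quadratic forms with effective resistances via cyclic permutation of the trace. You are in fact somewhat more careful than the paper on two points it glosses over --- justifying the Sherman--Morrison step for the singular matrices $L$ and $\tfrac{\pi}{2}M_{\N}-\tau L$ by restricting to the disagreement subspace, and observing that the middle term $c_1\tau^2\Wc(e)$ relies on the normalization $b_e^{\T}L_o b_e=2$, which is automatic for $L_o=M_{\N}$ but not for a general output matrix.
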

				\begin{proof}
					Rearranging \eqref{increment} yields
					\begin{align}\label{rearranged}
					\hspace{-0.4cm} c(e) = \rhoot(L+L_e;\tau) - \rhoot(L;\tau) \nonumber\\
					= \frac{1}{2}\Tr[L_o(L+L_e)^{\dagger}-L_o L^{\dagger}]+\frac{\tau}{2}\Tr[c_1\tau L_o(L+L_e)-c_1 \tau L_o L]  \nonumber\\
					+ \frac{2\tau}{\pi}\Tr[L_o(\frac{\pi}{2}M_{\N}-\tau (L+L_e))^{\dagger} \big]-\Tr[L_o(\frac{\pi}{2}M_{\N}-\tau L)^{\dagger}].
					\end{align}
					Since $L_e$ is a rank-one Laplacian matrix of a graph with a single edge between node $i$ and $j$ with weight $\Wc(e)$, we have
					$${L_e = \Wc(e)(\chi_i-\chi_j)(\chi_i-\chi_j)^{\T}},$$
					and further utilizing the Sherman-Morrison formula \cite{hager1989updating} for rank-one update we have
					\begin{align*}
					\Tr\big[L_o(L+L_e)^{\dagger}&-L_o L^{\dagger}\big]
					=\Tr\Big[L_o L^{\dagger}\\
					&-\frac{\Wc(e) L_o L^{\dagger}(\chi_i-\chi_j)(\chi_i-\chi_j)^{\T}L^{\dagger}}{1+\Wc(e) (\chi_i-\chi_j)^{\T}L^{\dagger}(\chi_i-\chi_j)}-L_o L^{\dagger}\Big]
					\end{align*}
					In addition, using the cyclic permutation property of the trace operator, it yields that
					\begin{align*}
					\Tr\big[L_o(L+L_e)^{\dagger}-L_o L^{\dagger}\big]\!\!\!&=\!\!-\!\frac{\Wc(e) (\chi_i-\chi_j)^{\T}L^{\dagger}L_o L^{\dagger}(\chi_i-\chi_j)}{1+\Wc(e) (\chi_i-\chi_j)^{\T}L^{\dagger}(\chi_i-\chi_j)}\\
					&=-\frac{r_e(LL_o^{\dagger}L)}{\Wc(e)^{-1}+r_e(L)}.
					\end{align*}
					Similarly, applying the Sherman-Morrison formula and the cyclic permutation for the trace to other terms of  \eqref{rearranged}, equation \eqref{purturbationVal} can be obtained.
				\end{proof}

				\vspace{0.1 in}
				
				If we let $\tau = 0$ and { $L_o = M_n$} in \eqref{increment}, we obtain
				\begin{eqnarray*}
					\rhoot(L+L_e;0) = \rhoot(L;0)  - \frac{r_e(L^2)}{2\Wc(e)^{-1}+2 r_e(L)}, 
				\end{eqnarray*}
				which is the contribution of a new edge on the $\mathcal{H}_2$ performance when there is no time-delay \cite{summers2015topology}, \cite{siami2016topology}.
				
				Although $\HH_2$-performance measure of a consensus  network \eqref{eq:system} is not monotone in general with respect to adding new interconnection links to the coupling graph of the network \cite{ghaedsharaf2016interplay}, we can guarantee monotonicity of the $\HH_2$-norm by imposing an upper bound on time-delay. More precisely, let us denote by $\Delta_d$ the maximum possible node degree among all the graphs over the set of all candidate augmented graphs; these are graphs that are obtained by adding $k$ edges from candidate set ${\EE}_c$ to the original graph for all possible choices.
				\begin{lemma}
					In linear consensus  network \eqref{eq:system}, if time-delay satisfies 
					\begin{eqnarray*} 
						\tau \hspace{0.05cm}<~ \frac{z^*}{2 \Delta_d},
					\end{eqnarray*}
					then performance measures $\rho_{ss}$ and $\tilde{\rho}_{ss}$ given in Theorem \ref{coherenceMainTheorem} and Lemma \ref{lemma-H-2-approx}, respectively, will be monotone functions of Laplacian matrix of the network, where $z^*$ is the positive solution of $\cos(z)=z$.
				\end{lemma}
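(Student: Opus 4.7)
The plan is to exploit the fact that when $\tau$ is small enough, every admissible Laplacian has its entire non-zero spectrum inside the strictly decreasing branch of $f_\tau$, so enlarging $L$ in the Loewner ordering can only decrease both $\rho_{ss}$ and $\tilde{\rho}_{ss}$. First I would bound the spectrum via Gershgorin: for any Laplacian in the candidate family, $\lambda_n(L)\leq 2\Delta_d$, so the hypothesis $\tau<z^{*}/(2\Delta_d)$ forces $\tau\lambda_i(L)<z^{*}$ for every $i=2,\ldots,n$, i.e., the rescaled non-zero spectrum of every candidate Laplacian is confined to $(0,z^{*})$.

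Second I would analyze the scalar ingredient. By Theorem~\ref{coherenceConvexity} $f_\tau$ is strictly convex on $(0,\pi/(2\tau))$, and by Theorem~\ref{hardLimit} its unique minimizer is $z^{*}/\tau$; therefore $f_\tau'(\lambda)<0$ on $(0,z^{*}/\tau)$. An identical argument applies to the surrogate profile $\tilde{f}$ built in Lemma~\ref{lemma-H-2-approx}: it is convex by construction, and its minimizer is pinned within $\mathcal{O}(10^{-4})$ of $z^{*}/\tau$ by the relative-error bound, so $\tilde{f}$ is likewise strictly decreasing on the same interval (possibly after a negligible shrinkage of the bound, which can be absorbed into $z^{*}$). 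Combined with the spectral confinement, this yields the key scalar monotonicity that drives everything else.

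Third I would transfer monotonicity to the matrix level. For $C=M_n$, both $\rho_{ss}(L;\tau)=\sum_{i\geq 2}f_\tau(\lambda_i(L))$ and $\tilde{\rho}_{ss}(L;\tau)=\sum_{i\geq 2}\tilde{f}(\tau\lambda_i(L))$ are symmetric separable sums, so for two admissible Laplacians $L_1\preceq L_2$ Weyl's monotonicity gives $\lambda_i(L_1)\leq\lambda_i(L_2)$, and the decreasing property just established immediately yields $\rho_{ss}(L_1;\tau)\geq\rho_{ss}(L_2;\tau)$ and the corresponding inequality for $\tilde{\rho}_{ss}$. For a general output matrix $C$ satisfying Assumption~\ref{C_ones}, I would instead differentiate along a PSD Laplacian increment $tL_e$ and use the rank-one identity of Theorem~\ref{rankOneUpdate} (together with its $\rho_{ss}$ analogue obtained from \eqref{eq:coherenceValueGen} via Sherman--Morrison) to show that the increment $c(e)$ is non-positive on the admissible range, then integrate along the straight segment $L_1+t(L_2-L_1)$, which stays admissible because $\tau\lambda_n<z^{*}$ is preserved on convex combinations of admissible Laplacians.

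The principal obstacle is precisely this last step: the spectral representation for general $C$ carries eigenvector-dependent weights $\Delta_i=q_i^{\mathrm T}L_o q_i$, so pure Weyl interlacing does not close the argument and one must verify that the trace-derivative is sign-definite uniformly in the eigenvector geometry. I expect the Sherman--Morrison identity underpinning Theorem~\ref{rankOneUpdate} to reduce this to a pointwise inequality on $(0,z^{*}/\tau)$, which follows from the scalar analysis already carried out.
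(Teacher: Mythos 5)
The paper states this lemma \emph{without any proof}, so there is no argument of record to compare yours against; I can only assess the proposal on its own merits. For the separable case $C=M_n$ (more generally, whenever $C^{\T}C=M_{\N}$), your argument is correct and is almost certainly what the authors intended: Gershgorin gives $\lambda_n(L)\leq 2d_{\max}\leq 2\Delta_d$ for every admissible Laplacian (note $\Delta_d$ must be read as the maximum \emph{weighted} degree for this step), so the hypothesis forces $\tau\lambda_i<z^*$ for all $i\geq 2$; strict convexity of $f_\tau$ (Theorem \ref{coherenceConvexity}) together with the stationarity condition $\cos(\tau\lambda)=\tau\lambda$ from the proof of Theorem \ref{hardLimit} shows $f_\tau$ is strictly decreasing on $(0,z^*/\tau)$; and Weyl monotonicity transfers this to the separable sum \eqref{eq:coherenceValue}. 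One caveat on the $\rhoot$ branch: a uniform bound on the relative error $|f_1-\tilde f|/f_1$ does not by itself locate the minimizer of $\tilde f$, so "pinned within $\mathcal{O}(10^{-4})$ of $z^*$" is not a valid inference; since $\tilde f$ is explicit you should instead verify $\tilde f'<0$ on $(0,z^*)$ directly (it is numerically borderline at the right endpoint, so your "negligible shrinkage" caveat is genuinely needed, not merely cosmetic).

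The genuine gap is exactly where you flag it, and your proposed repair does not close it. For general $L_o=C^{\T}C$, monotonicity of $\Tr[L_o f_\tau(L)]$ under every PSD Laplacian increment and every admissible $L_o$ is equivalent to $f_\tau(L_1)\succeq f_\tau(L_2)$ whenever $L_1\preceq L_2$, i.e.\ to \emph{operator} anti-monotonicity of $f_\tau$ on $(0,z^*/\tau)$ --- a far stronger property than scalar monotonicity, and one that does not follow from any pointwise inequality. Concretely, the directional derivative of $\rhoot$ along $vv^{\T}$ with $v=\chi_i-\chi_j$ is $-\tfrac{1}{2}v^{\T}L^{\dagger}L_oL^{\dagger}v+\tfrac{2\tau^2}{\pi}v^{\T}PL_oPv+\tfrac{c_1\tau^2}{2}v^{\T}L_ov$ with $P=\big(\tfrac{\pi}{2}M_{\N}-\tau L\big)^{\dagger}$; the first and third terms are nonpositive and the second is nonnegative, and choosing $L_o=ww^{\T}$ with $w\perp\mathbf{1}_{\N}$, $w\perp L^{\dagger}v$, $w\perp v$ but $w^{\T}Pv\neq 0$ (possible for $n\geq 4$, since $v$, $L^{\dagger}v$, $Pv$ are generically independent in the disagreement subspace) makes the derivative strictly positive. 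So the Sherman--Morrison identity of Theorem \ref{rankOneUpdate} does not reduce the general-$C$ case to a scalar inequality; the eigenvector-dependent cross terms are a real obstruction, and the lemma should be read as a claim about the separable case \eqref{eq:coherenceValue}, where your first argument already suffices.
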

				
				If the performance measure is not monotone, one needs to verify whether adding new interconnection links destabilizes the network.
				
				\begin{theorem}\label{thm-stable}
					Adding a new link $e$ with weight $\Wc(e)$ to network \eqref{eq:systemFBC} will retain stability of the network if and only if 
					\begin{eqnarray}\label{stability new edge}
					\Wc(e) ~<~w_s(L;e), 
					\end{eqnarray}
					where $w_s(L;e)=\big(\tau \: r_e\big(\frac{\pi}{2}M_{\N}-\tau L\big)\big)^{-1}$. 
				\end{theorem}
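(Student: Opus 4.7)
The plan is to translate the stability requirement into a positive-definiteness statement on the disagreement subspace, then apply a rank-one perturbation argument to isolate $\Wc(e)$. Stability of the closed-loop network (\ref{eq:systemFBC}) with augmented Laplacian $L+L_e$ amounts to condition (\ref{eq:stabilityOlfati}), namely $\lambda_n(L+L_e) < \frac{\pi}{2\tau}$. Since both $L$ and $L_e$ are symmetric PSD matrices whose common null direction is $\mathbf{1}_n$, this spectral bound is equivalent to the matrix inequality
\begin{equation*}
\tfrac{\pi}{2} M_n - \tau(L + L_e) \succ 0 \quad \text{restricted to the disagreement subspace } \mathbf{1}_n^{\perp}.
\end{equation*}

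First, I would let $A := \tfrac{\pi}{2}M_n - \tau L$ and note that the original network is assumed stable, so $A$ is positive definite on $\mathbf{1}_n^{\perp}$ and hence $A^{\dagger}$ is well defined and equal to the inverse of $A$ restricted to this subspace. Using $L_e = \Wc(e)\, b_e b_e^{\T}$ with $b_e = \chi_i - \chi_j$ (the same parametrization used in Theorem \ref{rankOneUpdate}), the stability requirement becomes
\begin{equation*}
A - \tau \Wc(e)\, b_e b_e^{\T} \succ 0 \text{ on } \mathbf{1}_n^{\perp}.
\end{equation*}
Since $b_e \perp \mathbf{1}_n$, the rank-one perturbation lives entirely inside $\mathbf{1}_n^{\perp}$, so I can apply a standard rank-one downdate criterion (consequence of Sherman--Morrison, or equivalently congruence with $A^{1/2}$ on the disagreement subspace): the downdated matrix stays positive definite if and only if $\tau \Wc(e)\, b_e^{\T} A^{\dagger} b_e < 1$.

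Second, I would recognize the quadratic form as the effective resistance across edge $e$ in the matrix $A$: by the definition $r_e(P) = p_{ii}^{\dagger} + p_{jj}^{\dagger} - 2 p_{ij}^{\dagger} = b_e^{\T} P^{\dagger} b_e$, we get
\begin{equation*}
b_e^{\T} A^{\dagger} b_e \;=\; r_e\!\left(\tfrac{\pi}{2}M_n - \tau L\right).
\end{equation*}
Rearranging $\tau \Wc(e)\, r_e(\tfrac{\pi}{2}M_n - \tau L) < 1$ yields exactly the threshold $w_s(L;e) = \big(\tau\, r_e(\tfrac{\pi}{2}M_n - \tau L)\big)^{-1}$, and the equivalence in both directions establishes the ``if and only if'' statement.

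The main delicate step is justifying the use of the rank-one downdate criterion with the Moore--Penrose pseudo-inverse rather than a genuine inverse; this is why I would emphasize that the perturbation direction $b_e$ lies in $\mathbf{1}_n^{\perp}$, on which $A$ acts as an honest positive definite operator. Aside from this, the argument is mostly bookkeeping: the hard analytical content lies in the fact that stability is captured by a single rank-one inequality, which is a direct payoff of the explicit stability margin (\ref{eq:stabilityOlfati}).
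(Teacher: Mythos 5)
Your proof is correct, but it takes a genuinely different route from the paper's. The paper argues indirectly through the performance formula: it observes that in the rank-one update $c(e)$ of Theorem \ref{rankOneUpdate} the denominator $-(\Wc(e)\tau)^{-1}+r_e(\frac{\pi}{2}M_{\N}-\tau L)$ vanishes exactly as $\Wc(e)$ reaches $w_s(L;e)$, so the (approximate) $\HH_2$ cost blows up, and it infers instability from that blow-up and stability from boundedness. You instead work directly with the spectral stability criterion \eqref{eq:stabilityOlfati}, rewrite $\lambda_n(L+L_e)<\frac{\pi}{2\tau}$ as positive definiteness of $\frac{\pi}{2}M_{\N}-\tau(L+L_e)$ on $\mathbf{1}_n^{\perp}$, and apply the rank-one downdate criterion $A-\tau\Wc(e)b_eb_e^{\T}\succ 0 \iff \tau\Wc(e)\,b_e^{\T}A^{\dagger}b_e<1$, identifying $b_e^{\T}A^{\dagger}b_e$ with $r_e(\frac{\pi}{2}M_{\N}-\tau L)$. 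Both arguments pivot on the same quantity, but yours is the more rigorous one: the paper's ``bounded performance $\Rightarrow$ stable'' direction is logically delicate, since the closed-form performance expression is only valid under a prior stability assumption, whereas your equivalence is an honest if-and-only-if at the level of the spectrum. What the paper's version buys in exchange is the algorithmic observation that the threshold is a byproduct of quantities already maintained by the greedy update, a point worth keeping even if one adopts your cleaner proof. Your care about using $A^{\dagger}$ only on $\mathbf{1}_n^{\perp}$ (where $b_e$ lives and $A$ is genuinely invertible) is exactly the right thing to flag; the only cosmetic addition I would make is to note explicitly that adding an edge cannot disconnect the graph, so the $\lambda_2>0$ part of the stability condition is automatic.
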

				
				\begin{proof}
					We first show that if condition \eqref{stability new edge} does not hold, the network becomes unstable. When $\Wc(e)$ approaches $\big(\tau \:\: r_e\big(\frac{\pi}{2}M_{\N}-\tau L\big)\big)^{-1}$ from left, the denominator of the last term in the contribution of new edge to the performance \eqref{purturbationVal} approaches infinity. Consequently, due to boundedness of the approximation function's error with respect to the performance function, as $\wc(e)$ approaches $\big(\tau \: r_e\big(\frac{\pi}{2}M_{\N}-\tau L\big)\big)^{-1}$ from left, the performance function goes to infinity and therefore the network goes to the verge of instability. On the other hand, if condition \eqref{stability new edge} hold, the contribution of a new edge to the performance will be bounded and therefore the performance will stay bounded and thus the system will remain stable.
				\end{proof}
				
				According to Theorem \ref{rankOneUpdate}, the process of calculating the update rule  \eqref{increment} also provides us with the value of quantity $r_e\big(\frac{\pi}{2}M_{\N}-\tau L\big)$ in each step. Therefore, the computational cost of verifying condition \eqref{stability new edge} is negligible.

				In order to set up our Simple Greedy algorithm, we quantify contribution of adding a new edge $e_i \in {\EE}_c \setminus {\EE}_s$ to the performance of the network by 
				\begin{align}
				\hspace{-0.4cm} h_{e_i}({\EE}_s) := &\rhoot \Big(L+ \sum_{e\in {\EE}_s}{\ \Wc}(e) b_{e} b_{e}^{\T} ; \tau \Big)\nonumber\\
				-&\rhoot \Big(L+ \sum_{e\in {\EE}_s\cup \{e_i\}}{ \Wc}(e) b_{e} b_{e}^{\T} ; \tau\Big).\label{eq:h_e}
				\end{align}
				In each step of the algorithm, the edge with maximum contribution to the performance is chosen and added to the coupling graph of the network. All steps of our method are summarized in Algorithm 1.  According to Theorem \ref{thm-stable}, the augmented time-delay linear consensus network from Algorithm 1 is stable and has at most $k$ new links.
				
				\begin{algorithm}[t]
					\caption{Network Growing via Simple Greedy}\label{simpGreedy}
					\begin{algorithmic}[1]
						\BState Initialize: 
						\State \hspace{1.2 cm} ${\EE}_s = \varnothing$
						\State \hspace{1.2 cm} $\LK = 0$ 
						\BState \textbf{for} $i=1\mbox{ to }k$ do:
						\State \hspace{1.2 cm} $e_i = \displaystyle \argmax_{\substack{e \in {\EE}_c \setminus {\EE}_s \\  \Wc(e) < w_s(L+\LK;e)}} h_e({\EE}_s)$
						\vspace{0.2cm}
						\State \hspace{1.2 cm}  \textbf{if} $h_{e_i}({\EE}_s)\leq 0$:
						\State \hspace{2.4 cm}  \textbf{break}
						\State \hspace{1.2 cm}  ${\EE}_s \gets {\EE}_s \cup \{e_i\}$
						\State  \hspace{1.2 cm}  $\LK = \LK + \Wc(e_i)b_{e_i}b_{e_i}^{\T}$
						\BState return ${\EE}_s$
					\end{algorithmic}
				\end{algorithm}

				\begin{remark}
					Except some special cases where value of $h_e({\EE}_s)$ is identical for couple of links, where we should pick them randomly, the rest if the algorithm is deterministic. 
				\end{remark}
				\begin{theorem}
					For linear consensus network \eqref{eq:systemFBC} with a given candidate link { $e \in {\EE}_c \setminus {\EE}_s$}, the performance improvement by adding $e$ to the network { regardless of weight of $e$}, is upper-bounded by
					{ \[ h_e({\EE}_s) ~\leq~ {\frac{r_e(L_s L_o^{\dagger}L_s)}{2 r_e(L_s)}- \frac{c_1\tau}{ r_e\left(\frac{\pi}{2}M_{\N}- \tau L_s\right)}},\]
						where $c_1=-0.01$ is a constant defined in Lemma \ref{lemma-H-2-approx} and ${L_s = L+ \sum_{e\in {\EE}_s\cup \{e_i\}}\Wc(e) b_{e} b_{e}^{\T}}$}. 
				\end{theorem}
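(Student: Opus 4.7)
My plan is to apply the rank-one update formula from Theorem \ref{rankOneUpdate} to expand $h_e(\mathcal{E}_s) = -c(e)$ (which follows directly from the definitions in \eqref{eq:h_e} and \eqref{increment}) as a sum of three terms parameterized by the weight $\varpi(e)$, and then bound each term separately. With $L_s$ denoting the current (pre-augmentation) augmented Laplacian and $A := \tfrac{\pi}{2}M_n - \tau L_s$, these three terms are (i) the pseudo-inverse rank-one contribution $\tfrac{r_e(L_s L_o^{\dagger}L_s)}{2\varpi(e)^{-1}+2 r_e(L_s)}$, (ii) the linear-in-weight term $-c_1\tau^2 \varpi(e)$, and (iii) the term $\tfrac{2\tau}{\pi}\tfrac{r_e(A L_o^{\dagger}A)}{-(\varpi(e)\tau)^{-1}+r_e(A)}$ that becomes singular as one approaches the stability boundary.

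For term (i), I would simply drop the positive quantity $2\varpi(e)^{-1}$ from the denominator, producing the upper bound $\tfrac{r_e(L_s L_o^{\dagger}L_s)}{2 r_e(L_s)}$. For term (ii), I would invoke Theorem \ref{thm-stable}, which forces $\varpi(e) < (\tau r_e(A))^{-1}$ if the augmented network is to remain stable; combined with $c_1 = -0.01 < 0$, this yields $-c_1\tau^2\varpi(e) < -c_1\tau/r_e(A)$. Summing these two bounds already reproduces the two terms appearing on the right-hand side of the claimed inequality, so the proof concludes provided term (iii) is also shown to be non-positive and can be discarded.

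To establish non-positivity of term (iii), I would argue that its numerator is a non-negative PSD quadratic form while its denominator is strictly negative under the same stability constraint. The non-negativity rests on the observation that, by Assumption \ref{C_ones} together with $M_n\mathbf{1}=0$ and $L_s\mathbf{1}=0$, the three matrices $L_o$, $L_s$, and $A$ share the null vector $\mathbf{1}$; hence on the disagreement subspace one can identify $r_e(AL_o^{\dagger}A) = (\chi_i-\chi_j)^{\T} A^{\dagger}L_o A^{\dagger}(\chi_i-\chi_j)\geq 0$, since $A^{\dagger}L_o A^{\dagger}$ is positive semidefinite. The denominator's strict negativity is a direct rearrangement of $\varpi(e) < (\tau r_e(A))^{-1}$, giving $-(\varpi(e)\tau)^{-1}+r_e(A)<0$.

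The main obstacle, I expect, is justifying the identity $r_e(AL_o^{\dagger}A) = (\chi_i-\chi_j)^{\T} A^{\dagger}L_o A^{\dagger}(\chi_i-\chi_j)$ (and its analogue for $L_s$), because the paper's $r_e(\cdot)$ is defined via the Moore--Penrose pseudo-inverse of its argument, and the equality $(AL_o^{\dagger}A)^{\dagger} = A^{\dagger}L_o A^{\dagger}$ is only valid under the shared-nullspace structure noted above. This convention is already implicit in the proof of Theorem \ref{rankOneUpdate}, so it can be cited rather than re-derived; once it is in place, the whole argument reduces to three clean algebraic bounds and the stated inequality drops out.
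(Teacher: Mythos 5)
Your proposal is correct and follows essentially the same route as the paper's proof: expand $h_e(\EE_s)=-c(e)$ via the rank-one update of Theorem \ref{rankOneUpdate}, bound the linear-in-weight term using the stability constraint $\Wc(e)<\big(\tau\, r_e(\frac{\pi}{2}M_\N-\tau L_s)\big)^{-1}$ from Theorem \ref{thm-stable} together with $c_1<0$, discard the third term after showing its numerator is a nonnegative quadratic form and its denominator is negative, and drop the positive $2\Wc(e)^{-1}$ from the first term's denominator. You are in fact slightly more careful than the paper in two spots — making the first-term bound explicit and justifying the convention $r_e(AL_o^{\dagger}A)=(\chi_i-\chi_j)^{\T}A^{\dagger}L_oA^{\dagger}(\chi_i-\chi_j)\geq 0$ (the paper only argues positivity of $r_e\big((\frac{\pi}{2}M_\N-\tau L_s)^2\big)$) — but the argument is the same.
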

				
				\begin{proof}
					From definition of $h_e$ in \eqref{eq:h_e} we have
					\begin{eqnarray}
					& & \hspace{-1.8cm} h_e({\EE}_s)
					~= ~\frac{r_e(L_s L_o^{\dagger}L_s)}{2\Wc(e)^{-1}+2 r_e(L_s)} - c_1 \tau ^2 \Wc(e) \nonumber\\
					& & \hspace{1cm} +\frac{2 \tau}{\pi}\frac{r_e\big((\frac{\pi}{2}M_{\N}-\tau L_s) L_o^{\dagger} (\frac{\pi}{2}M_{\N}-\tau L_s)\big)}{-(\Wc(e)\tau)^{-1}+r_e(\frac{\pi}{2}M_{\N}-\tau L_s)}.\label{Contrib}
					\end{eqnarray}
					In addition, Theorem \ref{thm-stable} states that ${\Wc(e) < \big(\tau \: r_e\big(\frac{\pi}{2}M_{\N}-\tau L_s\big)\big)^{-1}}$, and therefore, by multiplying both sides of the previous inequality by the non-negative constant $-c_1 \tau^2$ we obtain
					\begin{eqnarray}\label{second-term}
					- c_1 \tau ^2 \Wc(e) \leq - c_1 \tau \Big( \: r_e\big(\frac{\pi}{2}M_{\N}-\tau L_s\big)\Big)^{-1}.
					\end{eqnarray}
					Furthermore, stability of the network ensures $\frac{\pi}{2}M_n-\tau L_s \succeq 0$  from which we infer that $\big(\frac{\pi}{2}M_n-\tau L_s\big)^2 \succeq 0$ and consequently, $r_e\big((\frac{\pi}{2}M_{\N}-\tau L_s)^2\big)>0$ for every $e \in {\EE}_c$. Moreover, using Theorem \ref{thm-stable}, we can argue that 
					\begin{eqnarray}\label{third-term}
					\frac{2 \tau}{\pi}\frac{r_e\big((\frac{\pi}{2}M_{\N}-\tau L_s)L_o^{\dagger}(\frac{\pi}{2}M_{\N}-\tau L_s)\big)}{-(\Wc(e)\tau)^{-1}+r_e(\frac{\pi}{2}M_{\N}-\tau L_s)}\leq 0
					\end{eqnarray}
					A proof follows from combining inequalities \eqref{second-term} and \eqref{third-term} with identity \eqref{Contrib}.
				\end{proof}
				
				In literature, other variants of greedy algorithms such as Random Greedy are used to maximize a submodular non-monotone problems. Even though our performance measure is not a supermodular set function, our simulations show that Random Greedy works well and in some cases slightly outperform the Simple Greedy. Furthermore, their consistent outcome can be interpreted as a positive sign that both algorithms work fine. Based on Theorem \ref{thm-stable}, the augmented time-delay linear consensus network from Algorithm 2 is stable and has at most $k$ new links.

				\begin{algorithm}[t]
					\caption{Network Growing via Random Greedy}\label{randGreedy}
					\begin{algorithmic}[1]
						\BState Initialize: 
						\State \hspace{1.2 cm} ${\EE}_s = \varnothing$
						\State \hspace{1.2 cm}  $\LK = 0$
						\State \hspace{1.2 cm} add $2k-1$ dummy members${^\star}$ to ${\EE}_c$
						\BState \textbf{for} $i=1\mbox{ to }k$ do:
						\State \hspace{1.2 cm} $M_i = \displaystyle\argmax_{\substack{M_i \subset {\EE}_c \setminus {\EE}_s , \mid M_i\mid = k \\  \Wc(e) < w_s(L+\LK;e),\HS  \forall e \in M_i}} ~\displaystyle\sum_{e \in M_i} h_e({\EE}_s)$
						\State \hspace{1.2 cm} {\bf repeat} choose $e_i$ randomly  and uniformly from $M_i$
						\State  \hspace{1.2 cm} ${\EE}_s \gets {\EE}_s \cup \{e_i\}$
						\State  \hspace{1.2 cm} $\LK = \LK + \Wc(e_i)b_{e_i}b_{e_i}^{\T}$
						\BState return ${\EE}_s$
					\end{algorithmic}
					\vspace{0.2cm}
					{\small $^\star$These members are edges with zero weight.}
				\end{algorithm}

				\subsection{Time Complexity Analysis}
				We provide a time complexity analysis based on the fastest state-of-the-art algorithms in the literature.
				First, we need to find the pseudo-inverse for $L$ and $\left(\frac{\pi}{2}M_{\N}- \tau L\right)$ which has complexity of $\mathcal{O}(n^3)$ and then, we calculate $L^{\dagger} L_o L^{\dagger}$ and $(\frac{\pi}{2}M_{\N}-\tau L)^{\dagger} L_o (\frac{\pi}{2}M_{\N}-\tau L)^{\dagger}$ which needs $\mathcal{O}(n^3)+\mathcal{O}(n^3+\numRowsOfC n^2)$. For all other steps, using the Sherman-Morrison formula \cite{hager1989updating} for rank-one update, we can find the update for $r_e(L)$, $r_e(L L_o^{\dagger} L)$, $r_e\left(\frac{\pi}{2}M_{\N}-\tau L\right)$ and $r_e\left((\frac{\pi}{2}M_{\N}-\tau L) L_o^{\dagger}(\frac{\pi}{2}M_{\N}-\tau L)\right)$ for all $e \in {\EE}_c$ in $\mathcal{O}(n^2)$. Then, finding the contribution for each link takes constant time for each link. In conclusion, the first step needs $\mathcal{O}(n^3)$ and the rest of the steps take $\mathcal{O}(n^2)$ which is less time comparing to the method of \cite{summers2015topology} which considers network design in absence of time-delay and the essence of their algorithms are similar to ours. The algorithm in \cite{summers2015topology} despite using the Sherman-Morrison formula, eventually needs $\mathcal{O}(n)$ to find the contribution of each link in each step and since we have $\mathcal{O}(n^2)$ candidate links, their algorithm has time complexity of  $\mathcal{O}(n^3)$ in all steps.  The Random Greedy algorithm needs $\mathcal{O}(|{\EE}_c|\log_2k )$ more arithmetic operations than the Simple Greedy in each step, since we have to find the top $k$ contributing links.
				
				\section{Improving Coherency by Adjusting Feedback Gains}\label{sec-VI}
				The second possible way to improve the performance of network \eqref{eq:system} is by adjusting link weights in the underlying graph of the network. This option is domain specific and depends on the underlying dynamics of the network and practical relevance of the problem.    
				In absence of time-delay, the optimal re-weighting problem can be equivalently cast as the effective resistance minimization problem \cite{ghosh2008minimizing}.
				
				As it can be inferred from the  previous section, although the problem of re-weighting the coupling weights is a convex optimization problem, using the  approximate performance measure greatly speeds up the rate of finding the optimal solution. Let denote the total weight of the links in the initial network by $W_{\mathrm{total}}$.  Then, the following semidefinite programming finds the Laplacian matrix for the optimal network in terms of the approximate performance measure:
				
				\vspace{.1cm}
				\hspace{-.5cm}\begin{tabularx}{0.504\textwidth}{l X r}
					\mbox{minimize\hspace{-0.3cm}}   & $\Tr \big[L_o X_1+\frac{4}{\pi}L_o X_2+c_1 \tau L_o \hat{L} \big]$& \\
					\mbox{subject to:\hspace{-0.3cm}} & $\displaystyle \hat{L} = \sum_{e \in {\EE}} \hat{w}(e) b_e b_e^{\T}$ \\
					& $\displaystyle\sum_{e\in {\EE}} \hat{w}(e) \leq W_{\mathrm{total}}$ \\
					&$\hat{w}(e) \geq 0$ \hspace{0.35cm} for all $e \in {\EE}$& \tagarray \label{reweightProb}\\
					& $\begin{bmatrix}
					X_1 & \hspace{-0.2cm}I\\
					I & \hspace{-0.2cm}\tau \hat{L} +\frac{1}{n}J \end{bmatrix}\hspace{-0.1cm} \succeq\hspace{-0.1cm} 0 $, \hspace{-0.2cm} $\begin{bmatrix}
					X_2 & I\\
					I & \frac{\pi}{2}I_n-\tau \hat{L}  \end{bmatrix} \hspace{-0.1cm}\succeq\hspace{-0.1cm} 0$, 
				\end{tabularx}
				where the optimization variables are matrices $X_1, X_2 \in \mathbb{R}^{n \times n}$ and nonnegative continuous variable  $\hat{w}(e)$ as coupling weights for all $e \in {\EE}$. 
				
				\begin{remark}\label{reweigh}
					{\BL For the consensus network \eqref{eq:system},} the resulting network from solving the network reweighing problem \eqref{reweightProb} always have the same or better performance (with respect to $\rhoot)$ than the original network.   
				\end{remark}
				
				It is also possible to improve the  performance measure of the network (\ref{eq:system}) by reweighting the links  while keeping ratio of weight of every two link unchanged.
				The following theorem elaborates more on this approach.
				
				\vspace{0.1cm}
				\begin{theorem}\label{coherenceKOptimization}
					Suppose that for the consensus network (\ref{eq:system}), { rows of the output matrix $C$ span the disagreement space.} Our objective is to improve performance of the network by scaling weights of all links by a constant $\kappa \in (0,\frac{\pi}{2\lambda_{\N}\tau})$. Then, there exists a unique $\kappa^* > 0$ such that 
					\begin{eqnarray*}
						\rhoo(\kappa^*L;\tau) ~ \le ~\rhoo(\kappa L;\tau),
					\end{eqnarray*}
					for all  $\kappa \in (0,\frac{\pi}{2\lambda_{\N}\tau})$. Moreover, the optimal $\kappa^*$ belongs to the interval $[\frac{z^*}{\lambda_n},\frac{z^*}{\lambda_2}]$, where $\lambda_2$ and $\lambda_n$ are the second smallest and the largest eigenvalues of $L$.
				\end{theorem}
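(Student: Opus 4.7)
The plan is to reduce the problem to a one-variable strictly convex optimization by invoking the spectral formula of Theorem \ref{coherenceMainTheorem}, and then localize the minimizer by exploiting the known shape of $f_\tau$ from Theorem \ref{hardLimit}. First I would write the performance of the scaled network in the form
\begin{equation*}
g(\kappa) \;:=\; \rhoo(\kappa L;\tau) \;=\; \sum_{i=2}^{n} \Delta_i\, f_\tau(\kappa\lambda_i),
\end{equation*}
where $\Delta_i = q_i^{\T}L_o q_i$ and $L_o=C^{\T}C$. Because the rows of $C$ span the disagreement subspace, $L_o$ is positive definite on that subspace and hence $\Delta_i>0$ for every $i\ge 2$. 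The admissible range $\kappa \in (0,\pi/(2\lambda_n\tau))$ ensures that each argument $\kappa\lambda_i$ lies inside the effective domain $(0,\pi/(2\tau))$ of $f_\tau$.

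Next I would establish existence and uniqueness of $\kappa^*$. Theorem \ref{coherenceConvexity} (and the stronger pointwise inequality derived in its proof) shows that $f_\tau$ is strictly convex on $(0,\pi/(2\tau))$, so each map $\kappa\mapsto f_\tau(\kappa\lambda_i)$ is strictly convex on the admissible interval, and therefore $g$ is a positive linear combination of strictly convex functions, hence itself strictly convex. Checking the boundary behaviour, $f_\tau(\kappa\lambda_i)\to+\infty$ as $\kappa\downarrow 0$ (from the $\tfrac{1}{2\kappa\lambda_i}$ factor), and the $i=n$ term forces $g(\kappa)\to+\infty$ as $\kappa\uparrow \pi/(2\lambda_n\tau)$ (since $1-\sin(\kappa\lambda_n\tau)\to 0$ faster than $\cos(\kappa\lambda_n\tau)$). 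Continuity together with strict convexity and blow-up at both ends yields a unique interior minimizer $\kappa^*$.

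To localize $\kappa^*$ I would differentiate and use the characterization of $z^*$ from Theorem \ref{hardLimit}. Recall that $f_\tau(\mu)$ has its unique minimum at $\mu=z^*/\tau$, and is strictly decreasing for $\mu<z^*/\tau$ and strictly increasing for $\mu>z^*/\tau$. Writing
\begin{equation*}
g'(\kappa) \;=\; \sum_{i=2}^{n} \Delta_i\, \lambda_i\, f_\tau'(\kappa\lambda_i),
\end{equation*}
I would observe that whenever $\kappa<z^*/(\lambda_n\tau)$ we have $\kappa\lambda_i<z^*/\tau$ for every $i\ge 2$, so each $f_\tau'(\kappa\lambda_i)<0$ and thus $g'(\kappa)<0$. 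Symmetrically, if $\kappa>z^*/(\lambda_2\tau)$ then $\kappa\lambda_i>z^*/\tau$ for every $i\ge 2$, so $g'(\kappa)>0$. The unique critical point $\kappa^*$ must therefore lie in the interval $[z^*/(\lambda_n\tau),\, z^*/(\lambda_2\tau)]$, which is what the theorem asserts (modulo the $\tau$ in the denominators, which is implicit from the stability scaling).

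I do not expect a genuine obstacle here: the only nonroutine point is to confirm $\Delta_i>0$ from the spanning hypothesis on $C$, and to verify strict monotonicity of $f_\tau$ on each side of $z^*/\tau$, both of which follow directly from results already proved earlier in the paper. The rest is a standard convexity plus sign-of-derivative argument.
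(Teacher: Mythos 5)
Your proposal is correct and follows essentially the same route as the paper: convexity of $\kappa\mapsto\rhoo(\kappa L;\tau)$ inherited from Theorem \ref{coherenceConvexity} gives uniqueness, and the localization via the sign of $g'(\kappa)$ is exactly the paper's ``all scaled eigenvalues lie on one side of $z^*/\tau$'' argument made explicit. Your blow-up-at-both-endpoints justification for existence and the verification that $\Delta_i>0$ are slightly more careful than the paper's terse ``non-monotonic and continuous'' claim, and you correctly flag that the stated interval should carry a factor of $\tau$ in the denominators.
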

				\begin{proof}
					For a fixed underlying graph, let ${h: (0,\frac{\pi}{2\lambda_{\N}\tau}) \to \mathbb{R}}$ be a function where
					\begin{eqnarray*}
						g(\kappa)=\rhoo(\kappa L;\tau).
					\end{eqnarray*}
					From Theorem \ref{coherenceConvexity}, it follows that $g(\kappa)$ is a strictly convex function of $\kappa$. In addition, since $g(\kappa)$ is not a monotonic function of $\kappa$ and is continuous on its domain, it must have an { attainable} minimum. Furthermore, by strict convexity of $g(\kappa)$, a unique positive $\kappa^*$ exists where  $g(\kappa)$ attains its minimum.  
					In addition, if we scale the weights by any $\kappa < \frac{z^*}{\lambda_n}$, all the new eigenvalues will be on the left hand side of the dashed line in Figure \ref{fig_1} and therefore $\kappa =\frac{z^*}{\lambda_n}$ is better than any $\kappa \in (0,\frac{z^*}{\lambda_n})$. Similarly, if we scale the weights by any $\kappa > \frac{z^*}{\lambda_2}$, all the new eigenvalues will be on the right hand side of the dashed line in Figure \ref{fig_1} and thus  $\kappa > \frac{z^*}{\lambda_2}$ will not be better than $\kappa =\frac{z^*}{\lambda_2}$. Finally, by convexity of $g(\kappa)$, we conclude that $\kappa^* \in [\frac{z^*}{\lambda_n},\frac{z^*}{\lambda_2}]$.
				\end{proof}
				
				\vspace{0.15cm}
				Since in the scaling method, we are not using rank-one update of the Laplacian matrix and we need to compute the spectrum of the underlying graph only once, we may use the $\rhoo$ for finding the $\kappa^*$. Thus, this method is the only approach in this paper that needs spectrum of the underlying graph, which can be computed in $\mathcal{O}(n^3)$. Knowing the Laplacian eigenvalues, $\kappa^*$ can be found by exploiting simple techniques such as golden search method, using which,  reaching any $\sigma$-neighborhood of $\kappa^*$ (i.e., finding  $\hat{\kappa}$ such that $|\hat{\kappa} -\kappa^*|<\sigma$) has computational complexity of  order $\mathcal{O}(n \log \frac{1}{\tau \sigma})$.

				\section{Improving Coherency by Feedback Sparsification}\label{sec-VII}
				Suppose that we are required to remove  some of the existing interconnection links in linear consensus network \eqref{eq:system}. Sparsification can potentially happen in practice for several legitimate reasons, including, when there is a budget constraint on communication cost or an enforced security and/or privacy protocol \cite{rezazadeh2018privacy} among the agents that limit each agent to communicate with certain number of neighbors. We assume that some of the Laplacian eigenvalues of the underlying graph are on the right hand side of the dashed line in Figure \ref{fig_1}, i.e., $\lambda_i > \frac{z^*}{\tau}$ for some $i \geq 2$, where under this condition, eliminating some of the existing links can improve performance of the network. {\BL Edge elimination as a method to improve convergence speed of  time-delay consensus networks was previously presented in \cite{koh2016achieving}, where authors measure the convergence rate through simulations in time domain.}
				
				The challenges of this approach are twofold. First, dropping links may break connectivity of the network. By the min-max theorem \cite{teschl2014mathematical}, dropping links from the underlying graph of the network \eqref{eq:system} does not increase Laplacian eigenvalues, and therefore, if largest eigenvalue of the Laplacian is less than $\frac{\pi}{2\tau}$, after dropping links it will remain less than $\frac{\pi}{2\tau}$. Therefore, a failure in connectivity increases number of connected components of the underlying graph. As a result in the disconnected network we will have connected components whose dynamics are decoupled and eigenvalues of each component will remain less than $\frac{\pi}{2\tau}$ as they are a subset of eigenvalues of the whole network. Therefore, each connected component will have its own consensus point. This implies that a failure in connectivity will result in boundlessness of the performance measure. Second, the sparsification problem is inherently combinatorial as we have to find a subset of the current interconnection links in the network and remove them. In the following, we propose remedies to these challenges.  
				
				To tackle the connectivity problem, we must ensure that the link that is being removed is not one of the cut-edges; in other words, removing that specific link would not increase number of connected components of the underlying graph. There exist bridge finding algorithms in an undirected graph, such as Tarjan's Bridge-finding algorithm, which runs in linear time \cite{tarjan}. However, since in each step of our greedy algorithm we have the effective resistance between every two nodes, we can effectively use this existing information to ensure that a selected candidate edge is not a cut-edge.  {\BL In order to avoid removing cut-edges, we must ensure that for a candidate edge $e = \{i,j\}$ the following condition holds
\[\omega(e) \neq \frac{1}{r_e(L)},\] 
where $r_e$ is the effective resistance between node $i$ and $j$\cite{klein1993resistance}}.

				\begin{algorithm}[t]
					\caption{Network Sparsification via Simple Greedy}\label{simpGreedySparse}
					\begin{algorithmic}[1]
						\BState Initialize: 
						\State \hspace{1.2 cm} ${\EE}_s = \varnothing$
						\State \hspace{1.2 cm} $\LK = 0$ 
						\BState \textbf{for} $i=1\mbox{ to }k$ do:
						\State \hspace{1.2 cm} ${\EE}_b = \big\{e ~\big|~ w(e) = r_{e}^{-1}\big(L- \LK \big)\big\}$
						\State \hspace{1.2 cm} $e_i = \displaystyle \argmax_{e \in {\EE} \setminus ({\EE}_s \cup {\EE}_b )} \hat{h}_e({\EE}_s)$
						\State \hspace{1.2 cm}  \textbf{if} $\hat{h}_{e_i}({\EE}_s)\leq 0$:
						\State \hspace{2.4 cm}  \textbf{break}
						\State \hspace{1.2 cm}  ${\EE}_s \gets {\EE}_s \cup \{e_i\}$
						\State  \hspace{1.2 cm} $\LK = \LK - w(e_i)b_{e_i}b_{e_i}^{\T}$
						\BState return ${\EE}_s$
					\end{algorithmic}
				\end{algorithm}
				%
				%

				In order to tackle combinatorial difficulty of sparsification, we utilize the following tailored greedy algorithm to sparsify the coupling graph of a given consensus network. In the algorithm below, ${\EE}_s$ is the set of the coupling links that are chosen to be removed. For an edge $e_i \in {\EE}_c \setminus {\EE}_s$, we denote contribution of removing that edge to performance of the network by the following quantity 
				\begin{eqnarray*}
					\hat{h}_{e_i}({\EE}_s) := \rhoot \Big(L- \hspace{-2mm}\sum_{e\in {\EE}_s}w(e) b_{e} b_{e}^{\T} ; \tau \Big)-\rhoot \Big(L- \hspace{-4.5mm}\sum_{e\in {\EE}_s\cup \{e_i\}}\hspace{-2mm}w(e) b_{e} b_{e}^{\T} ; \tau\Big),
				\end{eqnarray*}
				and in each step of our greedy, we choose an edge that is not a cut-edge and its removal has maximum contribution to the performance. Algorithm 3 summarizes all steps of our greedy method.

				We use $\rhoot$ in Algorithm 3 in order to take advantage of Sherman-Morrison formula and avoid costly eigen-decomposition. Therefore, the process of adding the first link requires computation of  $r_e(L)$, $r_e(L L_o^{\dagger} L)$, $r_e\left(\frac{\pi}{2}M_{\N}-\tau L\right)$ and $r_e\left((\frac{\pi}{2}M_{\N}-\tau L)L_o^{\dagger}(\frac{\pi}{2}M_{\N}-\tau L)\right)$, which can be accomplished with $\mathcal{O}(n^3+\numRowsOfC n^2)$ arithmetic operations. { Knowing the effective resistance between all nodes, by utilizing the aforementioned theorem, it takes $\mathcal{O}(1)$ operations to ensure that an edge is not a cut-edge. As a result, cut-edge verification takes $\mathcal{O}(n^2)$ operation.} For every other step of Algorithm 3, our method needs $\mathcal{O}(n^2)$ operations to  update effective resistance matrices and ensure connectivity by not eliminating a cut-edge.
				
				\begin{remark}
					Our notion of sparsification is basically different from spectral sparsification of  \cite{spielman2011graph}. In our approach, we only eliminate some of the coupling links without re-weighting the remaining links. In addition, our goal is not to create a sparse network with similar performance, but it is to achieve better performance by reshaping the spectrum of the underlying graph of the network.  
				\end{remark}

				\begin{figure}[t]
					\subfloat[\label{G1}]{
						\includegraphics[trim=-1cm 0 -2.5cm 1cm,width=0.13\textwidth]{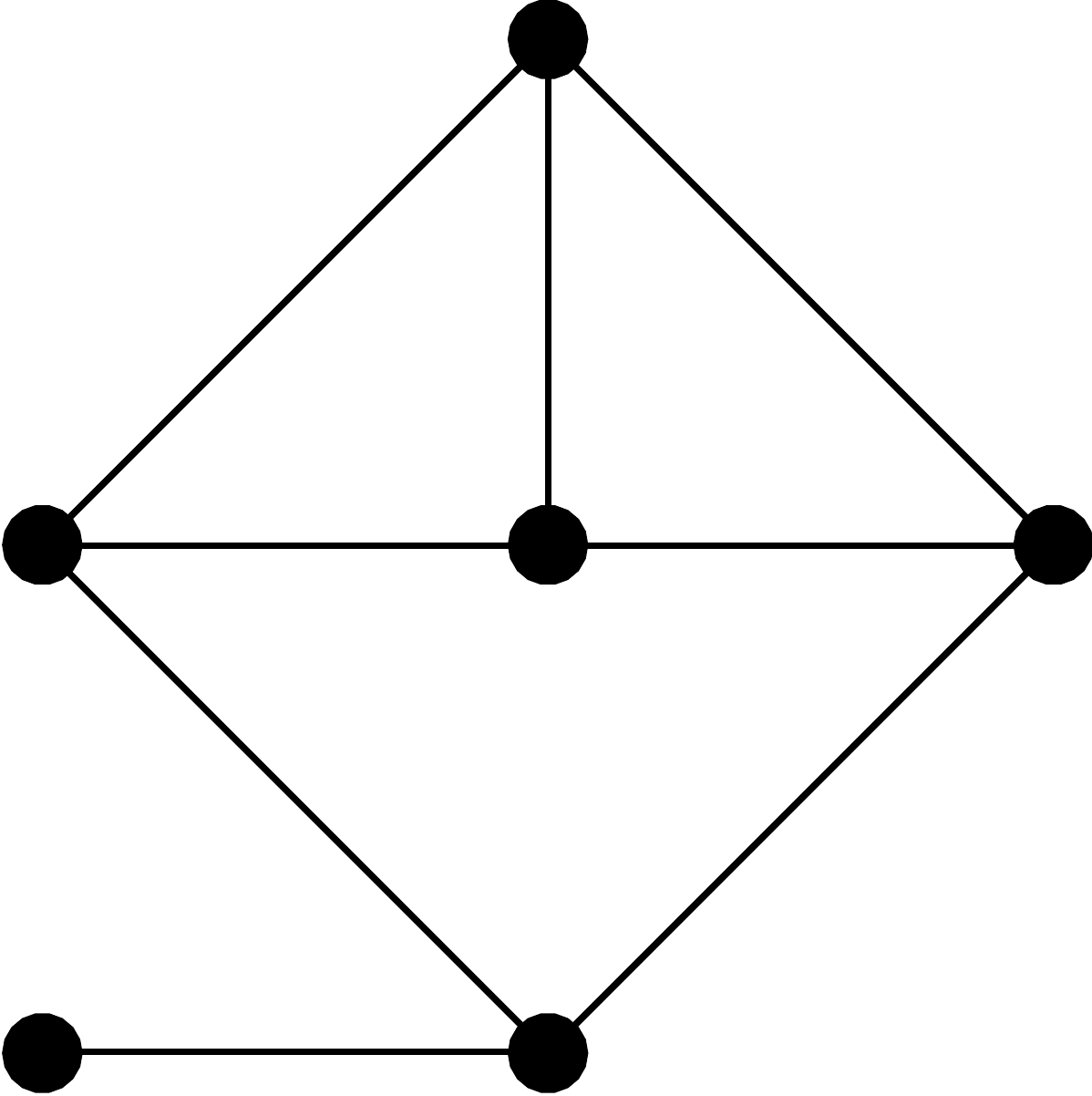}
					}
					\subfloat[\label{G2}]{
						\includegraphics[trim=-1cm 0 -2.5cm 1cm,width=0.13\textwidth]{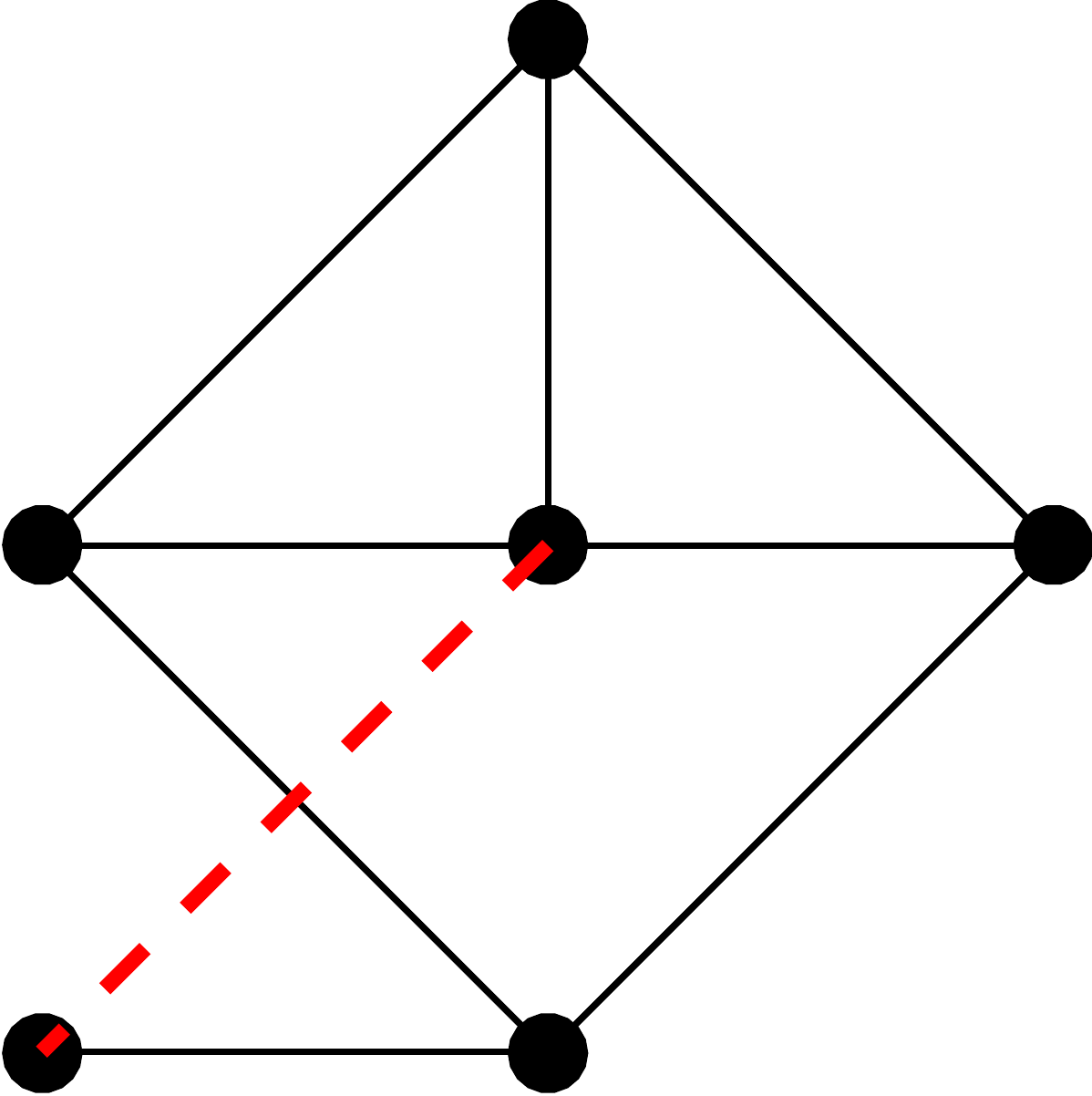}
					}
					\subfloat[\label{G3}]{
						\includegraphics[trim=-1cm 0 -2.5cm 1cm,width=0.13\textwidth]{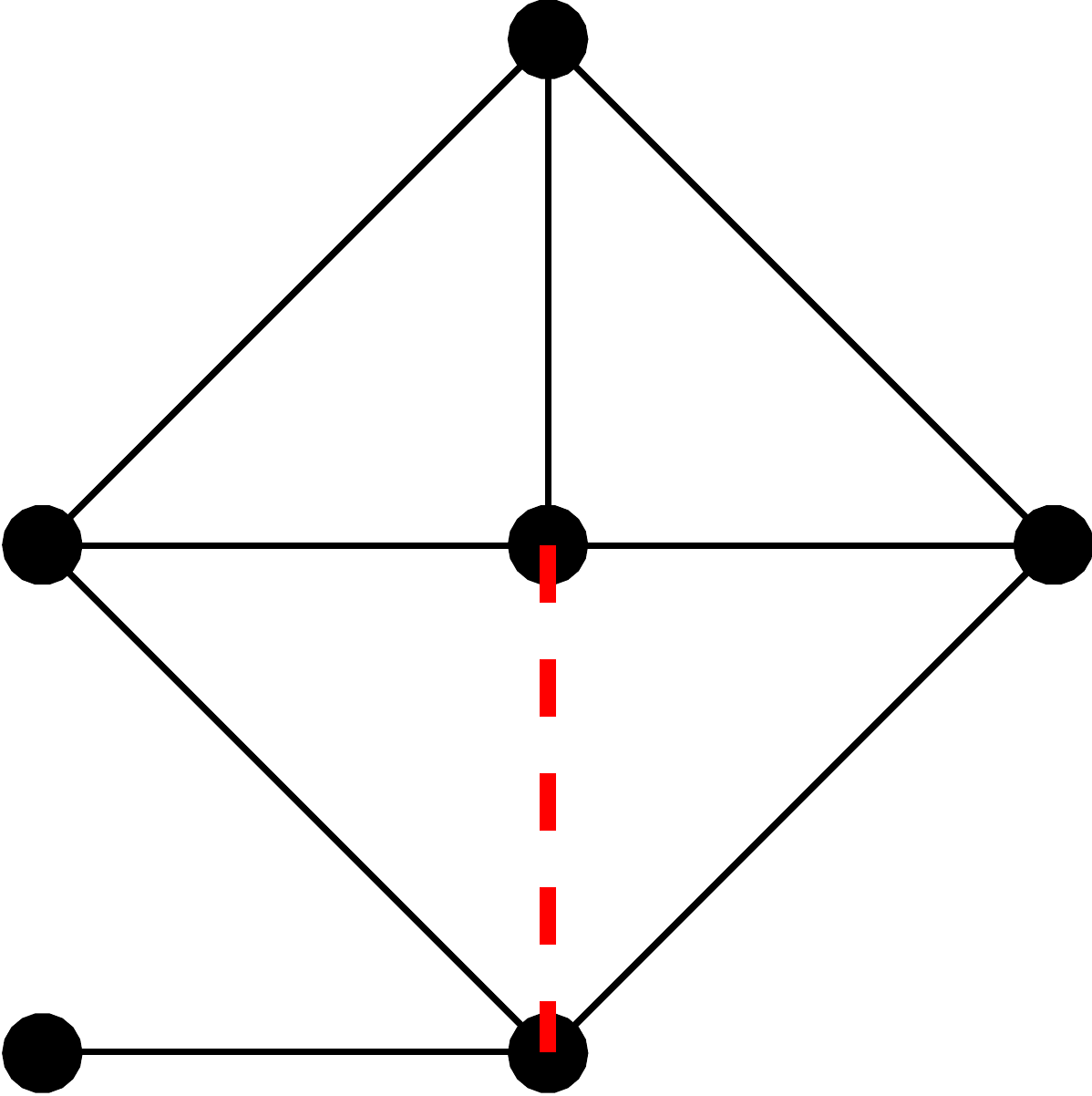}
					}
					\caption{(a) Graph 1, (b) Graph 2, and (c) Graph 3 are unweighted graphs. (b) and (c) are constructed by adding a link to (a).}
					\label{fig_3}
				\end{figure}

				\section{Sensitivity Analysis of the Feedback Structure}\label{sec-IX}
				As we discussed earlier, there are three ways to improve performance. For a given network, our remaining task is to determine which one of the proposed methods should be employed to improve performance of the network. Our reweighing procedure in Section \ref{sec-VI} never deteriorates network's performance, i.e., either it improves it or does not change the weights.
				\begin{theorem}
					Suppose that $e$ is the edge between node $i$ and mode $j$ in the coupling graph of network (\ref{eq:system}). Then, sensitivity of $\rhoot$ with respect to weight of link $e$ is equal to 
					\begin{align}
					\frac{d\rhoot(L;\tau)}{d{\wc}(e)} = &\frac{c_1\tau^2}{2}r_e(L_o^{\dagger}) -\frac{1}{2}r_e(L L_o^{\dagger} L)\nonumber\\
					&+\frac{2\tau^2}{\pi}r_e\big((\frac{\pi}{2}M_{\N}-\tau L) L_o^{\dagger}(\frac{\pi}{2}M_{\N}-\tau L)\big),\label{sensing}
					\end{align}
					where $c_1=-0.01$ is a constant defined in Lemma \ref{lemma-H-2-approx}. 
				\end{theorem}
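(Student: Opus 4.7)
My plan is to compute the derivative of $\tilde{\rho}_{ss}(L;\tau)$ term by term, using the fact that the Laplacian depends linearly on edge weights together with the standard formula for the derivative of a pseudo-inverse. Writing $L = \sum_{e' \in \EE} \wc(e')\, b_{e'} b_{e'}^{\T}$ where $b_{e'}=\chi_i-\chi_j$ for $e'=\{i,j\}$, we have
\begin{equation*}
\frac{dL}{d\wc(e)} = b_e b_e^{\T}.
\end{equation*}
Since the coupling graph is connected and remains connected under small positive perturbations of $\wc(e)$, the null space of $L$ (spanned by $\mathbf{1}_n$) is preserved and the identity $\frac{d L^{\dagger}}{d\wc(e)} = -L^{\dagger}\frac{dL}{d\wc(e)} L^{\dagger}$ is valid. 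An analogous identity applies to $\frac{\pi}{2}M_n-\tau L$, whose null space is also spanned by $\mathbf{1}_n$ under Assumption~\ref{C_ones} together with the stability condition in Theorem~\ref{thm-stable}.

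Next, I would differentiate the four terms in the expression
\begin{equation*}
\tilde{\rho}_{ss}(L;\tau) = \tfrac{1}{2}\Tr[L_o L^{\dagger}] + \tfrac{2\tau}{\pi}\Tr\big[L_o(\tfrac{\pi}{2}M_n-\tau L)^{\dagger}\big] + \tfrac{c_1\tau^2}{2}\Tr[L_o L] + \tfrac{c_0\tau}{4}\Tr[L_o]
\end{equation*}
separately. The last term is constant in $\wc(e)$ and contributes nothing. For the linear-in-$L$ term, a single application of $\frac{dL}{d\wc(e)}=b_e b_e^{\T}$ together with cyclicity of the trace and the definition $r_e(P)=b_e^{\T} P b_e$ produces the first summand of the theorem. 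For the two terms involving pseudo-inverses, I would plug the pseudo-inverse derivative identity into $\frac{d}{d\wc(e)}\Tr[L_o L^{\dagger}]$ and $\frac{d}{d\wc(e)}\Tr[L_o(\tfrac{\pi}{2}M_n-\tau L)^{\dagger}]$, noting the extra factor of $-\tau$ coming from the chain rule on $\tfrac{\pi}{2}M_n-\tau L$. Applying cyclicity of the trace and recognizing the resulting quadratic forms as effective-resistance-like quantities $r_e(\cdot)$ then yields the remaining two summands, with signs and $\tau$-factors matching the statement.

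The only step requiring care is the pseudo-inverse derivative: the textbook identity $\frac{dA^{-1}}{d\alpha}=-A^{-1}A'A^{-1}$ generalizes to pseudo-inverses only when the variation preserves the range and kernel of $A$. I would verify that both $L$ and $\tfrac{\pi}{2}M_n-\tau L$ satisfy this: their kernel is fixed at $\mathrm{span}(\mathbf{1}_n)$ independently of $\wc(e)$, and Assumption~\ref{assum-coupling-graph} together with stability ensures each is positive semidefinite of rank $n-1$ throughout the admissible range of $\wc(e)$. This is the main subtlety; once it is in place, the remainder is algebra built on one application of the Sherman--Morrison-style identity per term and the substitution $b_e^{\T}(\cdot)b_e = r_e(\cdot)$ to put the result into the $r_e$-notation used in the theorem.
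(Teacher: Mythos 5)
Your proposal follows essentially the same route as the paper: term-by-term differentiation of $\rhoot$ using $dL/d\wc(e)=b_eb_e^{\T}$ and the pseudo-inverse derivative identity, followed by cyclicity of the trace to produce the three nonvanishing terms; your explicit check that the kernels of $L$ and of $\frac{\pi}{2}M_{\N}-\tau L$ stay fixed at $\mathrm{span}(\mathbf{1}_{\N})$ is a detail the paper's proof takes for granted. One small caution: the paper defines $r_e(P)$ through the entries of $P^{\dagger}$, i.e.\ $r_e(P)=b_e^{\T}P^{\dagger}b_e$ rather than $b_e^{\T}Pb_e$, so the quadratic forms you obtain (e.g.\ $b_e^{\T}L^{\dagger}L_oL^{\dagger}b_e$ and $b_e^{\T}L_ob_e$) correspond to the stated $r_e(LL_o^{\dagger}L)$ and $r_e(L_o^{\dagger})$ only once that convention is applied.
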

				\begin{proof}
					Taking partial derivative of $\rhoot$ with respect to weight of edge $e$, we have 
					\begin{align}
					\frac{d\rhoot(L;\tau)}{d\wc(e)} = \frac{{d} }{{d}\wc(e)} \Big(
					\frac{1}{2} \Tr\big[L_o L^{\dagger} + \frac{4\tau}{\pi} L_o \big(\frac{\pi}{2}M_{\N}-\tau L \big)^{\dagger}\!\nonumber&\\
					+c_1 \tau^2 L_o L\big]\Big).&
					\label{sense}
					\end{align}
					Then, we find derivative of all terms in the right hand side of the equation above as follows
					\begin{align*}
					&\frac{d \Tr[L_o L^{\dagger}]}{d\wc(e)} = -\Tr[L_o L^{\dagger}(\chi_i-\chi_j)(\chi_i-\chi_j)^{\T}L^{\dagger}]=- r_e(L L_o^{\dagger} L),\nonumber\\
					&\frac{d \Tr[L_o (\frac{\pi}{2}M_{\N}-\tau L)^{\dagger}]}{d\wc(e)} = \tau \Tr[L_o (\frac{\pi}{2}M_{\N}-\tau L)^{\dagger}(\chi_i-\chi_j)\nonumber\\
					&(\chi_i-\chi_j)^{\T}(\frac{\pi}{2}M_{\N}-\tau L)^{\dagger}]
					= \tau r_e\big((\frac{\pi}{2}M_{\N}-\tau L)L_o^{\dagger}(\frac{\pi}{2}M_{\N}-\tau L)\big).\nonumber
					\end{align*}
					Substituting identities above in  \eqref{sense} we obtain
					\begin{align*}
					\frac{d\rhoot(L;\tau)}{d{\wc}(e)} = &\frac{c_1\tau^2}{2}r_e(L_o^{\dagger}) -\frac{1}{2}r_e(L L_o^{\dagger} L)\nonumber\\
					&+\frac{2\tau^2}{\pi}r_e\big((\frac{\pi}{2}M_{\N}-\tau L) L_o^{\dagger}(\frac{\pi}{2}M_{\N}-\tau L)\big).
					\end{align*}
				\end{proof}
				
				\begin{figure}[t]
					\centering{
						\psfrag{a}[c][c]{ \footnotesize{$\tau$}}        
						\psfrag{b}[c][c]{$\rhoo(L;\tau)$}
						\psfrag{c}[c][c]{ }
						\includegraphics[scale=0.42,trim={0 0 0 -30pt}]{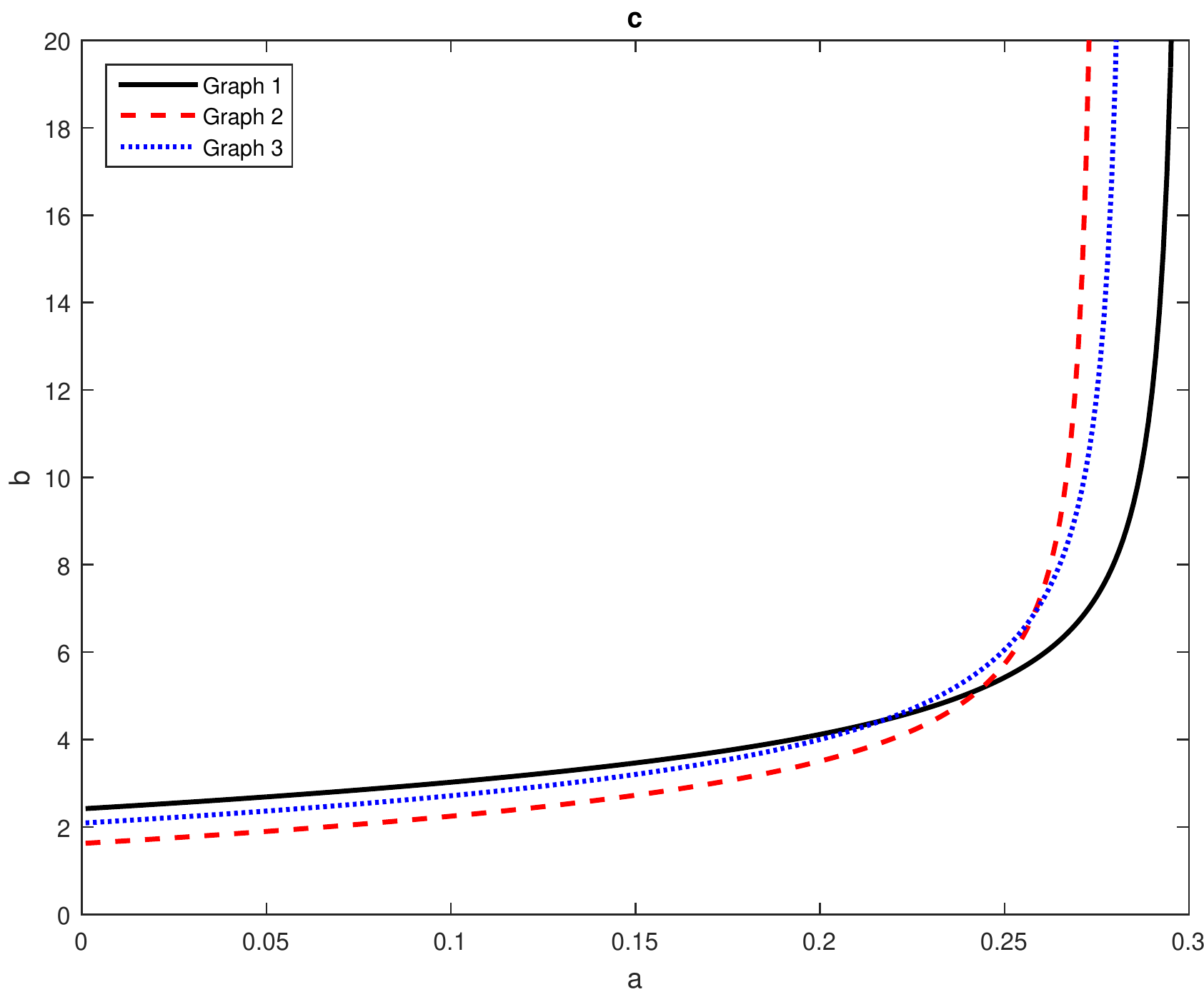}}
					\caption{Comparing performance  of three different topologies shown in Figure \ref{fig_3} as a function of time-delay.}
					\label{fig_4}
				\end{figure}

				Equality \eqref{sensing}
				turns out to be useful in distinguishing whether  sparsification or growing the underlying graph can be effective methods to improve the performance measure. Although, it is also possible to figure out whether removing or adding interconnection can improve the performance by finding the spectrum of the Laplacian matrix in each step, \eqref{sensing} is important since it does not need eigendecomposition and can be updated in each iteration in $\mathcal{O}(n^2)$. Therefore, if adding and removing connections was an option, adding interconnections will improve the performance only if $\frac{d\rhoot(L;\tau)}{d\wc(e)}<0$ for some edge $e \in {\EE}_c$. Similarly, sparsifying the network can be useful only if $\frac{d\rhoot(L;\tau)}{d\wc(e)}>0$ for some edge $e \in {\EE}$. Besides, we can use this approach as a heuristic for the rewiring problem. 
				
				When we are given a set of weightless candidate links, the contribution of each link to the performance cannot be evaluated. If size of the problem (i.e. number of nodes and size of candidate set) is small and we are not concerned  about sparsity of the solution, we can use SDP given by \eqref{optProb1SDP}-\eqref{cond6s} to find  optimal link weights. However, if sparsity is an issue and our objective is  to add at most $k$ new links, we can use identity \eqref{sensing} for adding new links. For the first link, the procedure includes finding $e_1 = \argmin_{e \in {\EE}_c}{\frac{d\rhoot(L;\tau)}{d\wc(e)}}$; this is the link for which the performance measure has the most sensitivity with respect to its weight. As it was discussed earlier, we must have  $\frac{d\rhoot(L;\tau)}{d\wc(e)} <0$. Otherwise, the procedure will be terminated as adding new edges cannot improve $\rhoot$. In addition, we can find the best weight for the selected edge $e_1$ by minimizing \eqref{purturbationVal} over $\Wc(e_1)$ subject to $0\leq\Wc(e_1)<w_s(L;e_1)$,  which can be done in constant time.  Then, we initialize $\EE_s = \{e_1\}$. In order to identify the $i^{\mathrm{th}}$ link, we set 
				\[e_i = \argmin_{e \in {\EE}_c\setminus{\EE}_s}{\frac{d\rhoot\big(L+\sum_{e \in \EE_s} \Wc(e) b_e b_e^{\T};\tau\big)}{d\wc(e)}}\] 
				and maximize $h_{e_i}(\EE_s)$ over $\Wc(e_i)$. Subsequently, we add $e_i$ to $\EE_s$ and procedure continues until we have added $k$ edges or 
				\[\frac{d\rhoot\big(L+\sum_{e \in \EE_s} \Wc(e) b_e b_e^{\T};\tau\big)}{d\wc(e)} \geq 0\] 
				for all $e \in \EE_c\setminus\EE_s$.

				\section{Numerical Examples}\label{sec-VIII}

				In this section, we consider the following numerical examples to demonstrate utility and veracity of our theoretical results, where the data for graph Laplacians of all examples can be downloaded from the following link:
				\begin{center}
					\url{http://www.lehigh.edu/~yag313/TimeDelayGraphs.zip}
				\end{center}
				\vspace{0.1cm}
				
				\begin{example}
					As a means to compare performance of a network in presence and absence of time-delay, we show that in presence of delay, adding a link in two different locations in the network has contrasting effect on performance of the network. Unweighted graphs in Figure  \ref{G3} and Figure  \ref{G2} are constructed by adding one link to distinct locations of the graph  shown in Figure  \ref{G1}. When $\tau=0.235$, performance measure of network with underlying graph of Figure  \ref{G2} is better than the original network and both are better than the network with underlying graph of Figure  \ref{G3}. Nevertheless, without the delay, networks with underlying graph Figure  \ref{G2} and Figure  \ref{G3} both perform better than the original network in terms of noise propagation quantified by $\mathcal{H}_2$-norm. In order to further clarify effect of connectivity in presence or absence of time-delay, in Figure  \ref{fig_4} we drew performance of the three aforementioned network as a function of time-delay. It is noteworthy that when $\tau<0.2$ consensus network with the coupling graph given in Figure  \ref{G3} has a better performance than a network with coupling graph given in Figure  \ref{G1}. Whereas, as the time-delay increases, network with underlying graph in Figure  \ref{G1} starts to outperform network with graph given in \ref{G3}. 
				\end{example}

				\begin{figure}[t]
					\centering
					\includegraphics[width=0.45\textwidth,trim={0.5cm 1cm 1cm 1.6cm}, clip=true]{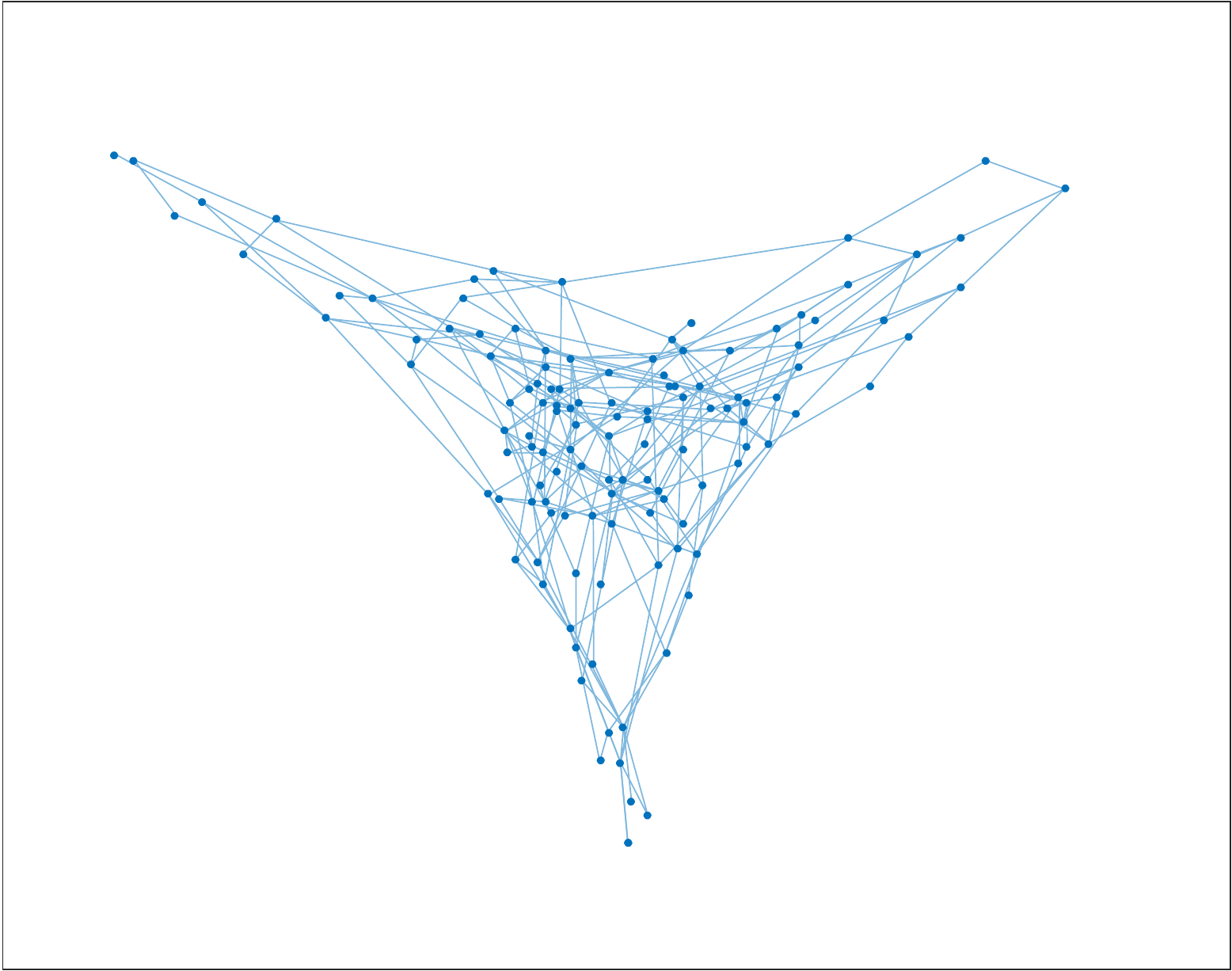}
					\caption{Arbitrary unweighted graph with 125 nodes and 250 edges}
					\label{G125}
				\end{figure}
				
				\vspace{0.1cm}
				\begin{example}\label{Ex:addingLinks}
					Consider the arbitrary network (\ref{eq:system}) with 125 nodes and initially 250 unweighted links given by Figure \ref{G125} in presence of $\tau = 0.017$ delay. We design an optimal topology for the network using SDP relaxation, Simple Greedy, and Random Greedy. Then we compare them with the hard limit to check how close each method can get to the theoretical lower bound of the solution. We add 7,500 new links using SDP method and 2439 new links using Simple Greedy algorithm. We see that network's square of $\HH_2$-norm performance is improved by $88.8 \%$, from 29.28 to 3.27 using Simple Greedy. From the result of Theorem \ref{hardLimit}, the value of the hard limit for the performance of the network is $3.237$ and we know that the global optimal for the problem is greater than the hard limit. It should be further noted that there exists only $1.2\%$ difference between hard limit and the new performance of the network generated by our Simple Greedy algorithm and even smaller gap for SDP. We observe that the result of Random Greedy can be different in each run of the algorithm. It is noteworthy that in the case of this example, although the final network generated by Random Greedy and Simple Greedy are very different, eventually the difference between performance of the network generated by them, is not very different, as it can be seen in Figure \ref{GreedySDP}.  Our simulations confirm that our proposed Simple Greedy performs near-optimal for generic time-delay linear consensus networks. In example \ref{simp-rand}, we construct a specific network that by which it is argued that Random Greedy may outperform Simple Greedy by a considerable margin.
				\end{example}

				\begin{figure}[t]
					\psfrag{x}[c][B][0.8]{\footnotesize{Number of new links (Greedy) \textbackslash Total weight of new links(SDP)}}
					\psfrag{z}[c][c]{$\rhoo$}
					\includegraphics[width=0.45\textwidth,trim={-2.5cm -1.2cm 0cm 0}, clip=true]{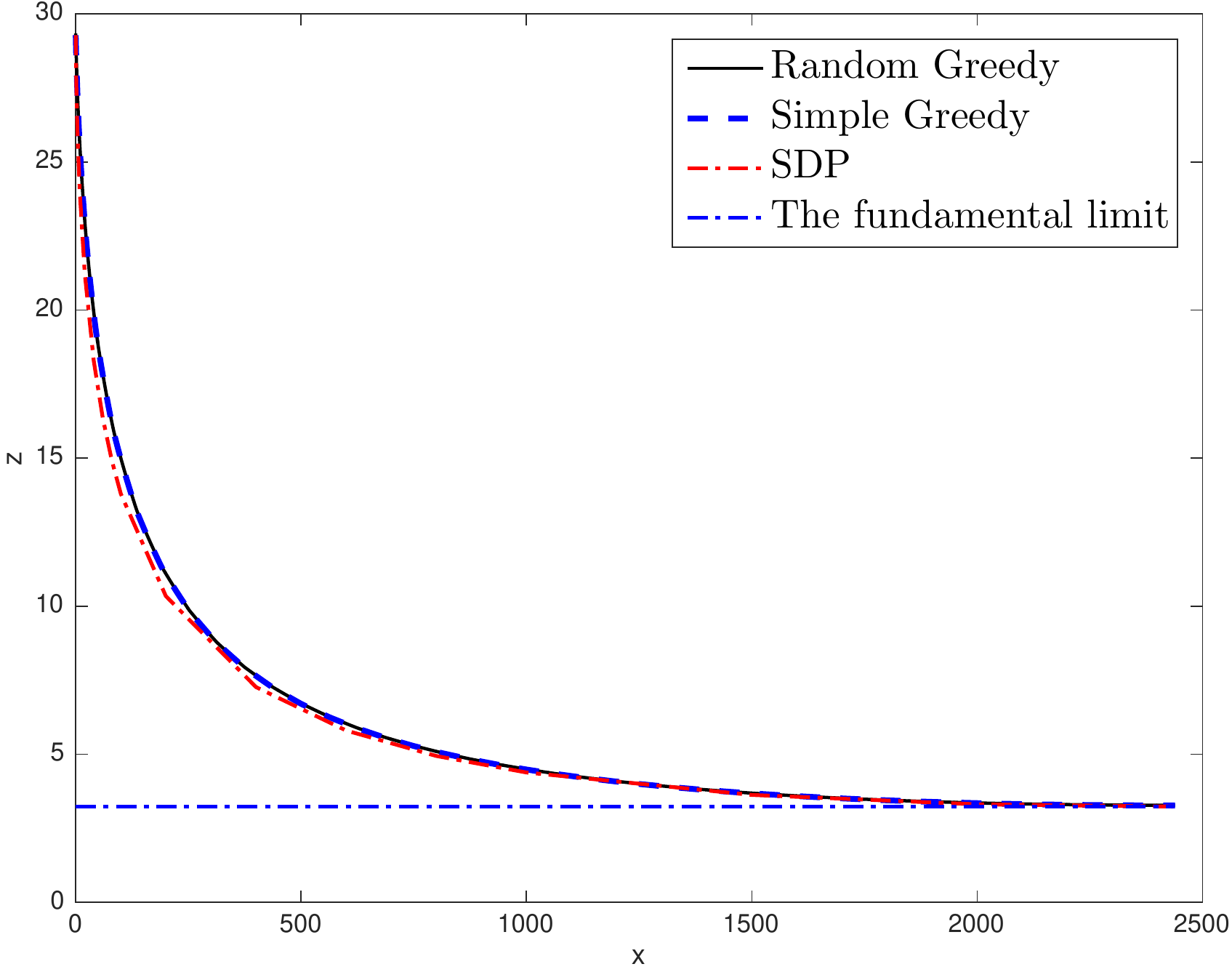}
					\caption{Improving performance of the network given in Figure \ref{G125} by adding new interconnections through SDP and Greedy algorithms }
					\label{GreedySDP}
				\end{figure}

				\begin{example}
					Here we want to evaluate the strength of establishing new interconnections using our greedy algorithm. To that end, we use our greedy algorithm to add edges to a randomly generated graph given in Figure \ref{greedyGraph} which has 10 nodes and 15 edges initially. Here we deal with $\tau = 0.05$ homogeneous time-delay. Moreover, we suppose that set of candidate edges are complement of the set of initial edges. In this example we intend to establish up to 16 new interconnections. As it is depicted in Figure \ref{fig_10}, the algorithm yields extremely good results. {Our simulations results assert that our suggested Simple Greedy provide  near-optimal solution to time-delay linear consensus networks with generic graph topologies.  }
				\end{example}
				
				\begin{figure}[t]
					\centering
					\includegraphics[width=0.4\textwidth,trim={0.5cm 1cm 1cm 1.2cm}, clip=true]{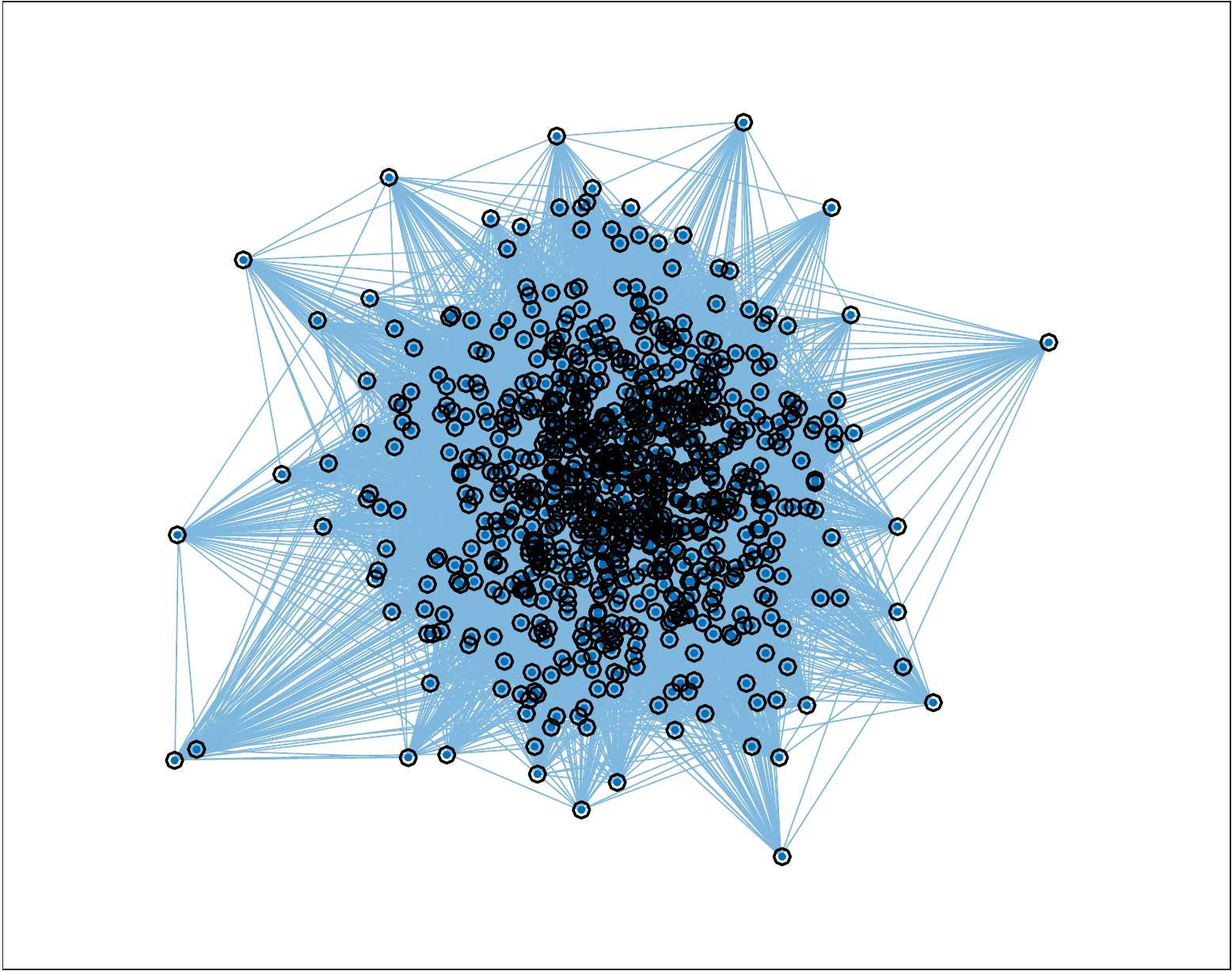}
					\caption{Arbitrary unweighted graph with 800 nodes and 20,000 edges}
					\label{G800}
				\end{figure}
				\begin{figure}[t]
					\psfrag{x}[c][B][0.8]{\footnotesize{Number of removed links by Simple Greedy for Sarsification}}
					\psfrag{z}[c][c]{$\rhoo$}
					\includegraphics[width=0.4\textwidth,trim={-2.5cm -1.2cm 0cm 0}, clip=true]{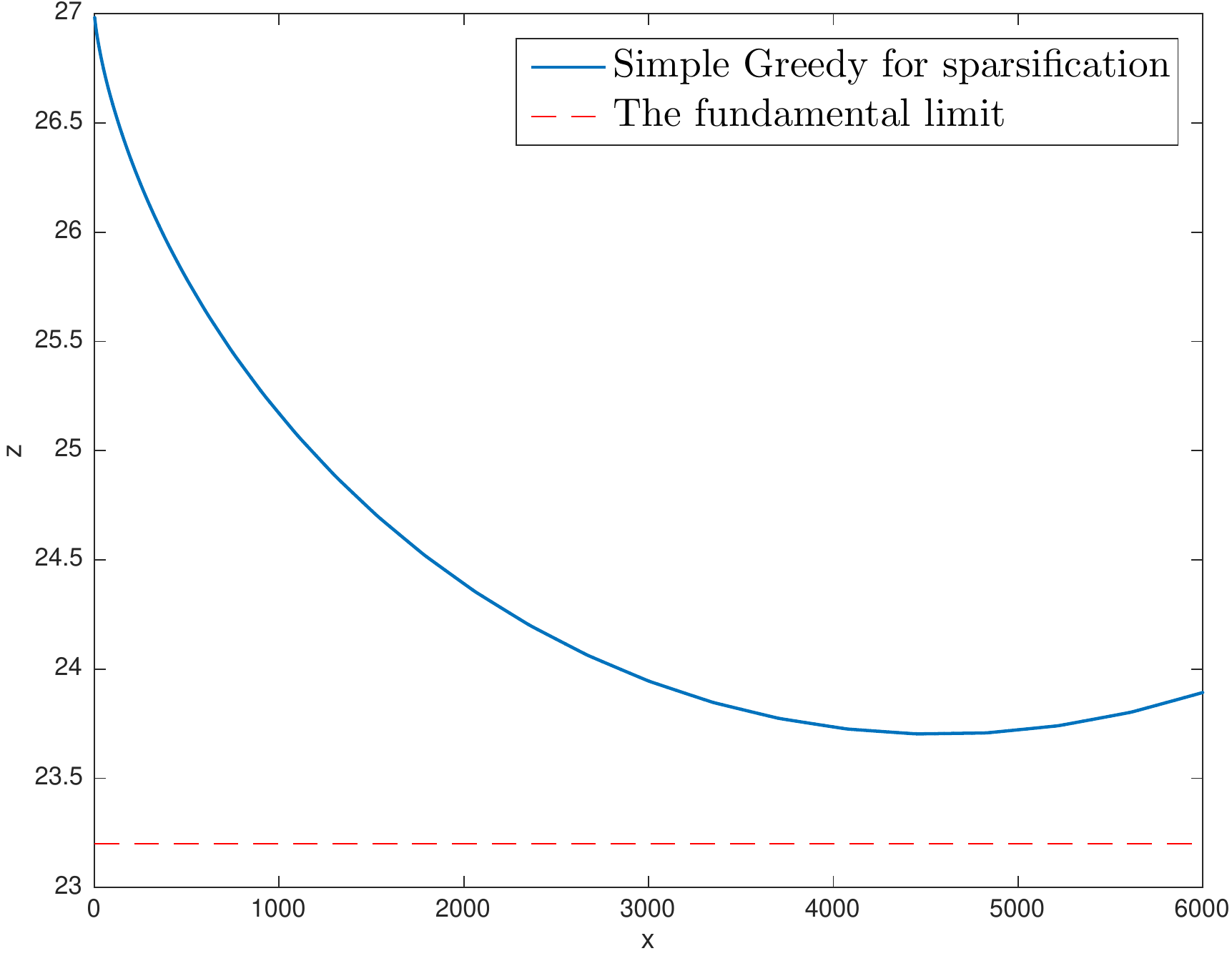}
					\caption{Improving performance of the network given in Figure \ref{G800} by removing interconnections through sparsification}
					\label{GreedySparse}
				\end{figure}

				\begin{example}\label{Ex:removingLinks}
					Let us consider linear consensus network (\ref{eq:system}) with $800$ nodes and initially $2 \times 10^4$ unweighted links given by Figure \ref{G800} in presence of $\tau = 0.019$ delay. Our goal is to remove links from the coupling graph of the network using our sparsification Algorithm 3 in order to improve the performance. We compare the best achieved performance with the hard limit, to see how close we can get to the lower bound of the solution. We removed $6 \times 10^3$ links by executing  Algorithm 3. We observe that the best  performance is achieved by removing $4461$ links and the network's square of $\HH_2$-norm performance is improved by $12\%$, reaching $23.7$, which was initially $27$. Using inequality \eqref{eq:funLimit1}, the hard limit for the performance of the network is $23.2$ and we know that the optimal solution for the problem is greater than or equal to the hard limit. It is noteworthy that there exists only $2.2\%$ difference between hard limit and the best achieved performance.
				\end{example}

				\begin{figure}[t]
					\centering
					\includegraphics[width=0.25\textwidth,trim={0.1cm 1cm 1cm 0.5cm}, clip=true]{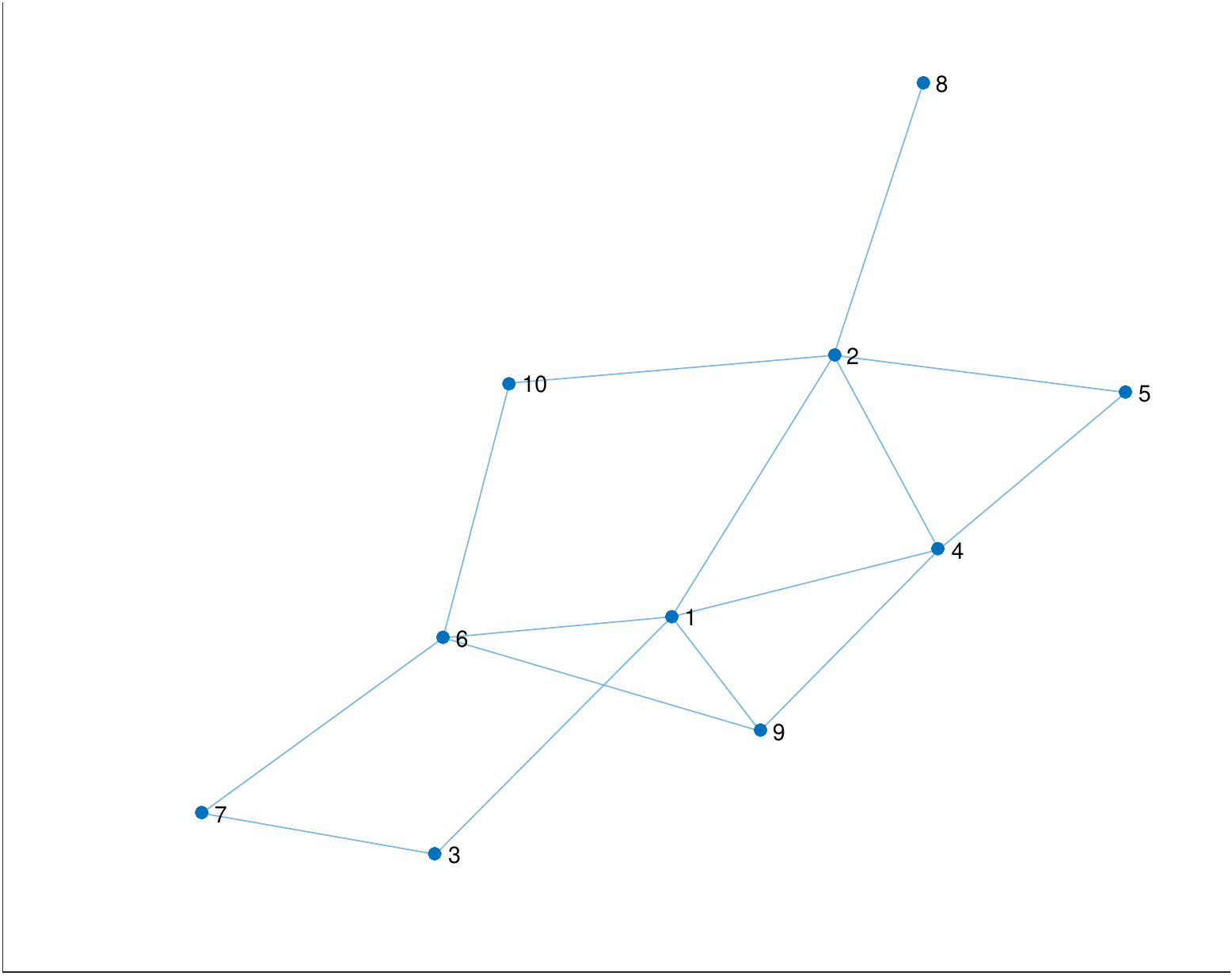}
					\caption{Arbitrary unweighted graph with 10 nodes and 15 edges}
					\label{greedyGraph}
				\end{figure}

				\begin{figure}[t]
					\centering
					\psfrag{x}[c][c]{\footnotesize{Number of new interconnections}}        
					\psfrag{z}[c][c]{$\rhoo$}
					\includegraphics[width=0.33\textwidth,trim={0cm 0 1.6cm 0.7cm}]{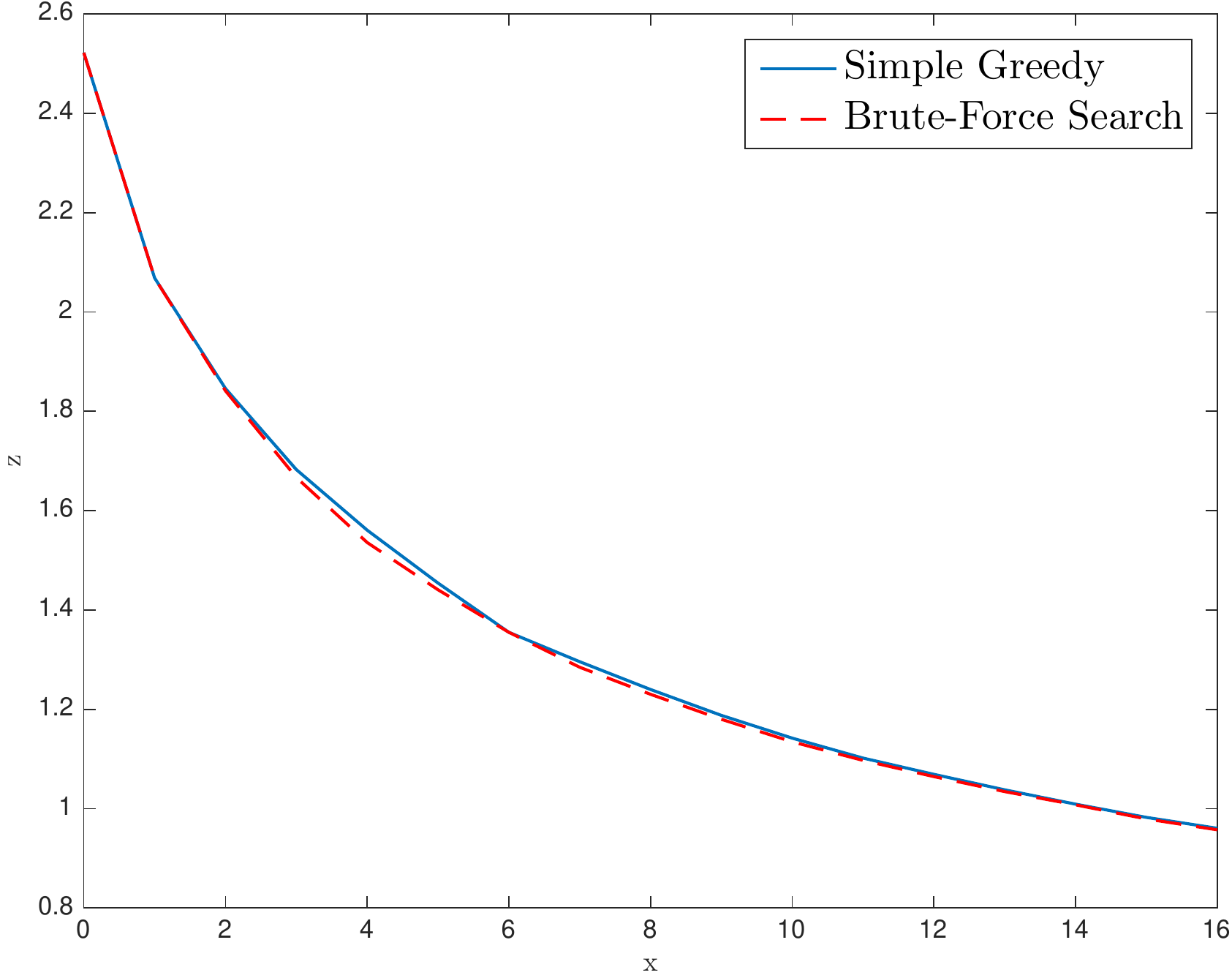}
					\caption{Improving performance of the network given in Figure \ref{greedyGraph} by adding new interconnections}
					\label{fig_10}
				\end{figure}

				\begin{figure}[t]
					\centering{
						\includegraphics[scale=0.38,trim={3cm 2.8cm 1cm 1.5cm},clip]{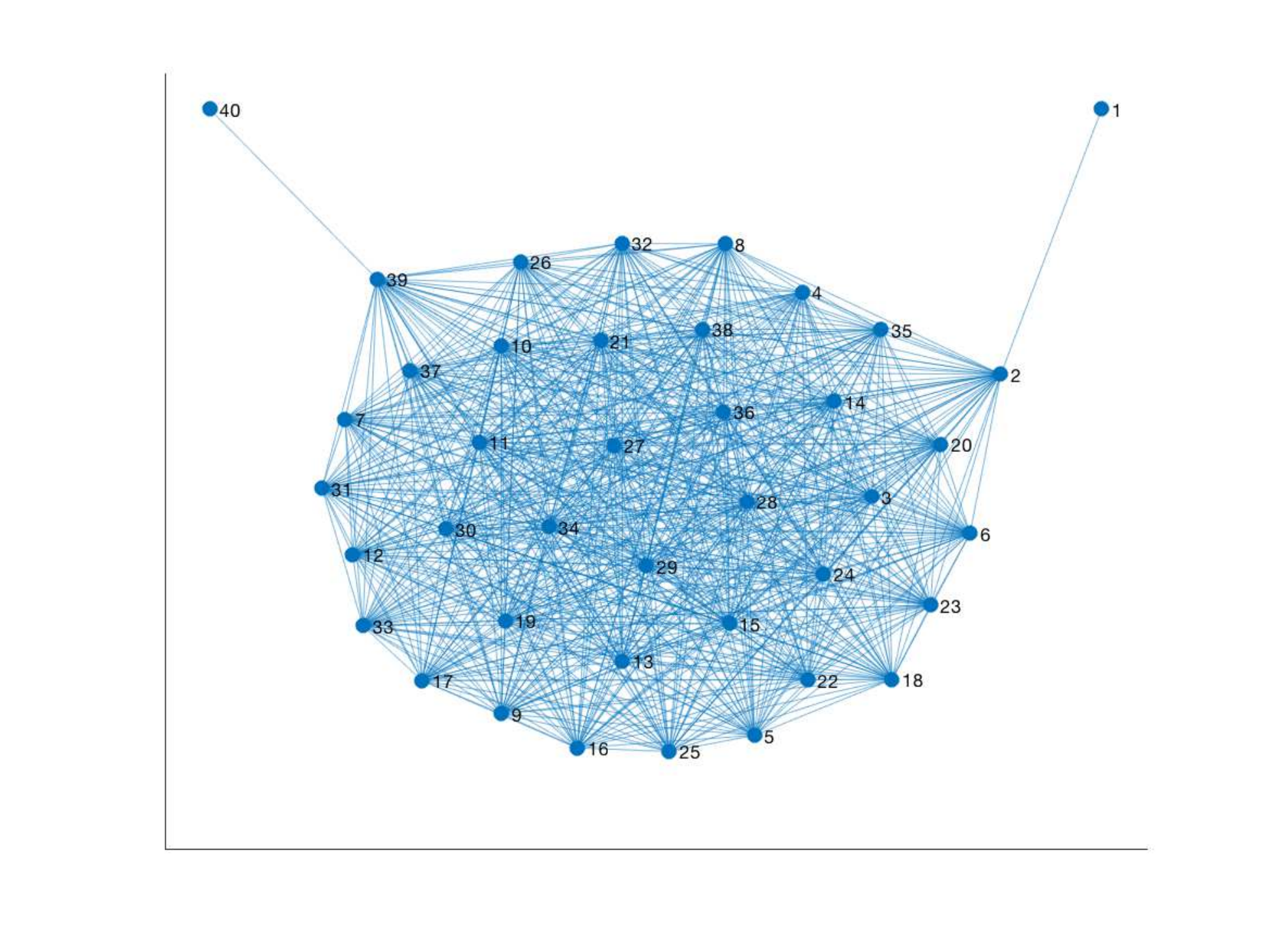}}
					\caption{An arbitrary graph with $40$ nodes with uniform weight $1$. Every subset of nodes $\{2, \dots, 39\}$ forms a clique, i.e., the subgraph induced by them is a complete graph.}
					\label{fig_fail}
				\end{figure}

				\begin{example}\label{simp-rand}
					Suppose that linear consensus network (\ref{eq:system}) with underlying graph in Figure  \ref{fig_fail} and  time-delay $\tau = 1.852 \times 10^{-2}$ is given. Our aim is to show that growing  the network  by Random Greedy sometimes outperforms the Simple Greedy algorithm. The set of candidate links is ${\EE}_c =  {\EE}_{{\mathcal K}_n} \setminus {\EE}$, where ${\EE}_{{\mathcal K}_n}$ is the set of all edges in complete graph on $n$ nodes ${\mathcal{K}}_{\N}$. Let us set $\varpi(e_1) = 40$ for $e_1 = \{1,40\}$ and  $\varpi(e) = 1$ for all other edges $e \in {\EE}_c \setminus \{e_1\}$. The performance measure of the initial network is $2.0686$. The value of the performance measure for the resulting network from Algorithm 1 is reduced to $1.5195$, while by applying Algorithm 2, the performance measure can be reduced up to $1.1038$ with high probability. In this specific example, we observe that the Simple Greedy improves the performance $47\%$ less than the Random Greedy. We would like to mention that if we set $\varpi(e_1)=1$, the Simple Greedy provides us with the optimal solution. 
				\end{example}

				\section{Discussion and Conclusion}
				We studied $\HH_2$-norm performance of noisy  first-order time-delay linear consensus networks from a spectral graph theoretical point of view. It is shown that this measure is convex with respect to weights and  increasing with respect to time-delay. { We propose  low-complexity methods to improve the performance of such networks, where the fastest of which has cubic time algorithm capable of generating sparse solutions. The design problems discussed in this paper can also be formulated as SDP problems, where solving such convex problem requires costly $\mathcal{O}(n^3+|\EE|^3)$ operations for each iteration.} This is why we have favored greedy algorithms to design time-delay linear consensus networks that offer significantly lower time-complexities.  
				
				The focus of this paper has been on time-delay linear consensus networks with state-space matrices $(-L,I,C,0)$, where $C$ is an arbitrary output matrix that are orthogonal to the vector of all ones.  Our methodology can also handle time-delay networks with state-space matrices $(-L,B,M_n,0)$. Using duality of controllability and observability. It is straightforward to show that $\mathcal{H}_2$-norm of the following three networks with state-space matrices $(-L,B,M_n,0)$, $(-L,M_n,B^{\T},0)$, and $(-L,I,B^{\T},0)$  are equal and they are bounded if vector of all ones is in the left nullspace of matrix $B$ and the time-delay is less than the time-delay margin.
				
				In this paper, we assumed that time-delay is uniform across the network. Performance of consensus networks with non-uniform delay is a possible direction to generalize our results.

				\section*{Appendix}
				
				The following definitions and results are used in our proofs in
				this appendix.

				\begin{definition}\label{DelayLyap}
					For a linear delayed system $\hat{G}$ with state-space representation:
					\begin{align*}\hat{G}:
					\begin{cases}
					\dot{\psi}(t)&=A \psi(t-\tau) +Bu(t),\\
					\phi(t)&=C \psi(t),
					\end{cases}
					\end{align*}
					where $u(t)$ is control input of the system. Let ${\Psi(t): \mathbb{R}_{+} \rightarrow \mathbb{R}^{n \times n}}$ be the fundamental matrix of the system with $\Psi(0)=I_{\N}$, then,
					\begin{align*}
					U(t) := \int_{s=0}^{+\infty}{\Psi^{\T}(s)C^{\T}C\Psi(s+t) \: ds}
					\end{align*}
					is called the delay Lyapunov matrix.
					Although there are a few definitions for delay Lyapunov matrix $U(t)$; here we have put energy functional definition into use. 
				\end{definition}

				\begin{theorem}[\cite{Jarlebring:2011}]\label{DelayH2}
					Suppose the time-delay system $\hat{G}$ is exponentially stable, then ${\HH}_2$-norm is
					\begin{align}\label{eq:H2normDelaySystem}
					\|\hat{G}\|_{{\HH}_2}^2&=\Tr\big[B^{\T} U(0) B \big]\\
					&=\Tr\big[C V(0) C^{\T} \big],
					\end{align}
					where $U(t)$, $V(t)$ are the unique solution to the delay Lyapunov equation:
					\begin{subequations}\label{eq:delayLyapunovMatrixeq1}
						\begin{align}
						\dot{U}(t)&=U(t-\tau)A \text{ for } t \in [0,h],\\\label{symmetry}
						U(t)&=U^{\T}(-t), \\
						-C^{\T}C&=U(-\tau)A+A^{\T}U^{\T}(-\tau),
						\end{align}
					\end{subequations}
					and its dual equation:
					\begin{subequations}
						\begin{align*}
						\dot{V}(t)&=V(t-\tau)A^{\T}\text{ for } t \in [0,h],\\
						V(t)&=V^{\T}(-t), \\
						-BB^{\T}&=V(-\tau)A^{\T}+AV^{\T}(-\tau).
						\end{align*}
					\end{subequations}
					
				\end{theorem}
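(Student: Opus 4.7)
The plan is to establish the first equality by the standard impulse-response interpretation of the $\mathcal{H}_2$-norm, then verify that the matrix $U(t)$ defined in Definition \ref{DelayLyap} actually satisfies the stated delay Lyapunov equation, and finally recover the second equality by duality.

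First I would write the impulse response of $\hat{G}$ using the fundamental matrix: the solution of $\dot\psi(t)=A\psi(t-\tau)+Bu(t)$ with zero initial history and input $u(t) = e_k\,\delta(t)$ is $\psi(t)=\Psi(t)Be_k$, so the impulse response is $h(t) = C\Psi(t)B$ (for $t\geq 0$). Because $\hat{G}$ is exponentially stable, $\Psi(t)$ decays exponentially, and Parseval's identity together with the Laplace-domain formulation of the $\mathcal{H}_2$-norm gives
\begin{align*}
\|\hat{G}\|_{\mathcal{H}_2}^2 \;=\; \int_0^{\infty}\!\Tr\bigl[h^{\T}(t)h(t)\bigr]\,dt \;=\; \int_0^{\infty}\!\Tr\bigl[B^{\T}\Psi^{\T}(t)C^{\T}C\Psi(t)B\bigr]dt.
\end{align*}
Pulling $B$ and $B^{\T}$ outside the integral via the cyclic property of the trace, and recognising the inner integral at $t=0$ as $U(0)$ from Definition \ref{DelayLyap}, yields $\|\hat{G}\|_{\mathcal{H}_2}^2 = \Tr[B^{\T}U(0)B]$.

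Next I would verify that $U(t)$ meets the three conditions in \eqref{eq:delayLyapunovMatrixeq1}. Differentiating under the integral sign and using $\dot\Psi(s)=A\Psi(s-\tau)$ on the right factor gives
\begin{align*}
\dot U(t) \;=\; \int_0^{\infty}\!\Psi^{\T}(s)C^{\T}C\,\dot\Psi(s+t)\,ds \;=\; \Big(\!\int_0^{\infty}\!\Psi^{\T}(s)C^{\T}C\,\Psi(s+t-\tau)\,ds\Big)A \;=\; U(t-\tau)A,
\end{align*}
which is the first identity. The symmetry $U(t)=U^{\T}(-t)$ follows by the substitution $s\mapsto s-t$ in the defining integral (valid for $t\in[-\tau,0]$ once the integral is analytically continued to negative arguments using the backward-in-time extension of $\Psi$). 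For the boundary condition I would integrate $\tfrac{d}{ds}\bigl[\Psi^{\T}(s)C^{\T}C\Psi(s)\bigr]$ from $0$ to $\infty$: the left side equals $-C^{\T}C$ by exponential stability and $\Psi(0)=I$, and on the right, splitting the derivative and applying the fundamental-matrix relation $\dot\Psi(s)=A\Psi(s-\tau)$ yields $A^{\T}U^{\T}(-\tau)+U(-\tau)A$. Combining these gives the third equation in \eqref{eq:delayLyapunovMatrixeq1}.

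The dual identity $\|\hat{G}\|_{\mathcal{H}_2}^2 = \Tr[CV(0)C^{\T}]$ would then follow by applying the same argument to the adjoint system $(-A^{\T}, C^{\T}, B^{\T})$ (whose $\mathcal{H}_2$-norm equals that of $\hat{G}$), with $V$ playing the role of $U$. The main obstacle is the boundary condition: the naive differentiation trick only produces $\dot U(t)=U(t-\tau)A$ for $t>0$, and to obtain the identity at $t=-\tau$ one must carefully justify the analytic continuation of $U(t)$ to $t\in[-\tau,0]$ via the symmetry relation and handle the delay shift so that both the boundary term at infinity vanishes by stability and the boundary term at $s=0$ produces exactly $C^{\T}C$. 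Making this integration-by-parts step rigorous, rather than the computation itself, is where the work lies.
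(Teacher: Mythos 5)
Your proposal is correct and follows essentially the same route as the paper, which only sketches this argument (substituting the impulse response $y(t)=C\Psi(t)B$ into the $\mathcal{H}_2$-norm definition and verifying that the delay Lyapunov matrix of Definition \ref{DelayLyap} satisfies \eqref{eq:delayLyapunovMatrixeq1}) and defers the details to \cite{Jarlebring:2011} and \cite{Kharitonov:2006}; your write-up simply carries out those verifications explicitly. One small point worth making explicit: when you pull $A$ out on the right in the computation of $\dot U(t)$, you are using the fact that the fundamental matrix of $\dot\psi(t)=A\psi(t-\tau)$ satisfies the delay ODE on both sides, i.e.\ $\dot\Psi(t)=A\Psi(t-\tau)=\Psi(t-\tau)A$, which is a standard but nontrivial property that should be cited or proved.
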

				
				\begin{proof}
					The idea of proof is by substituting $y(t)=C\Psi(t)B$ into Definition (\ref{eq:coherenceDef}). Further, by substituting $U(t)$ from Definition (\ref{DelayLyap}) into (\ref{eq:delayLyapunovMatrixeq1}), (\ref{eq:delayLyapunovMatrixeq1}) can be verified. See \cite{Jarlebring:2011},\cite{Kharitonov:2006} for more details.
				\end{proof}

				\begin{lemma}\label{lemma2}
					Let $\hat{U},\hat{V} : [0, \tau] \rightarrow \mathbb{R}^{n\times n}$ be solution of the following problem:
					\begin{align}\label{vectorProb}
					\frac{d}{dt}\begin{bmatrix}
					\Vect(\hat{U}(t))\\
					\Vect(\hat{V}(t))
					\end{bmatrix}
					=\hat{A}
					\begin{bmatrix}
					\Vect(\hat{U}(t))\\
					\Vect(\hat{V}(t))
					\end{bmatrix}
					\end{align}
					and satisfy boundary condition
					\begin{align}\label{vectorProbBound}
					M
					\begin{bmatrix}
					\Vect(\hat{U}(0))\\
					\Vect(\hat{V}(0))
					\end{bmatrix}
					+N
					\begin{bmatrix}
					\Vect(\hat{U}(\tau))\\
					\Vect(\hat{V}(\tau))
					\end{bmatrix}
					=
					\begin{bmatrix}
					-\Vect(C^{\T}C)\\
					0_{n^2}
					\end{bmatrix},
					\end{align}
					where 
					\[
					M=
					\begin{bmatrix}
					0_{n^2 \times n^2} & A^{\T} \otimes I_n\\
					I_{n^2} & 0_{n^2 \times n^2}
					\end{bmatrix},~~~~
					N=
					\begin{bmatrix}
					I_n  \otimes A^{\T} & 0_{n^2 \times n^2} & \\
					0_{n^2 \times n^2} & -I_{n^2} 
					\end{bmatrix},
					\]
					\begin{align*}
					\hat{A}=
					\begin{bmatrix}
					0_{n^2 \times n^2} & A^{\T} \otimes I_n\\
					-I_{n}\otimes A^{\T} & 0_{n^2 \times n^2}
					\end{bmatrix}.
					\end{align*}
					Then, 
					\begin{align*}
					U(t) = \frac{1}{2}
					\begin{cases}
					\hat{U}(t) + \hat{V}(\tau-t)\text{\:\: for \:\:}t \in [0, \tau],\\
					\hat{U}^{\T}(-t) + \hat{V}(\tau+t)\text{\:\: for \:\:}t \in [-\tau, 0),
					\end{cases}
					\end{align*}
					solves (\ref{eq:delayLyapunovMatrixeq1}).
				\end{lemma}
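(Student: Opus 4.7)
The plan is to de-vectorize the boundary value problem into coupled matrix equations and then verify each of the three equations making up the delay Lyapunov equation \eqref{eq:delayLyapunovMatrixeq1} by direct substitution into the piecewise formula for $U(t)$. Using the standard identity $\Vect(PXQ)=(Q^{\T}\otimes P)\Vect(X)$, the vectorized ODE \eqref{vectorProb} unfolds into the coupled matrix ODEs $\dot{\hat U}(t)=\hat V(t)A$ and $\dot{\hat V}(t)=-A^{\T}\hat U(t)$, and the boundary relation \eqref{vectorProbBound} separates into $\hat U(0)=\hat V(\tau)$ and $\hat V(0)A+A^{\T}\hat U(\tau)=-C^{\T}C$.

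The algebraic condition $-C^{\T}C=U(-\tau)A+A^{\T}U^{\T}(-\tau)$ is the cleanest to verify. Evaluating the formula at $t=-\tau$ gives $U(-\tau)=\tfrac12[\hat U^{\T}(\tau)+\hat V(0)]$ and $U^{\T}(-\tau)=\tfrac12[\hat U(\tau)+\hat V^{\T}(0)]$; forming $U(-\tau)A+A^{\T}U^{\T}(-\tau)$ produces the bracket $\hat V(0)A+A^{\T}\hat U(\tau)$ together with its transpose, which, using the second boundary condition and the symmetry of $C^{\T}C$, sum to $-C^{\T}C$.

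For the differential condition $\dot U(t)=U(t-\tau)A$ on $[0,\tau]$, I would differentiate the formula and apply the matrix ODEs to obtain $\dot U(t)=\tfrac12[\hat V(t)A+A^{\T}\hat U(\tau-t)]$, while $U(t-\tau)A=\tfrac12[\hat U^{\T}(\tau-t)+\hat V(t)]A$ follows from the formula on the negative branch. Their equality collapses to the identity $A^{\T}\hat U(s)=\hat U^{\T}(s)A$ on $[0,\tau]$; an analogous substitution for $U(t)=U^{\T}(-t)$ reduces to the pointwise symmetry of $\hat V(s)$ on $[0,\tau)$ and of $\hat U(0)$.

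The main obstacle is extracting these implicit symmetries from the BVP itself. My plan is to exploit a built-in involution of the problem: a direct check shows that $\bigl(\hat V^{\T}(\tau-\cdot),\,\hat U^{\T}(\tau-\cdot)\bigr)$ satisfies exactly the same ODE, that the first boundary condition is manifestly invariant under the role-swapping, and that the second is preserved after transposition precisely because $C^{\T}C$ is symmetric. Uniqueness of the BVP solution then forces the relation $\hat U(t)=\hat V^{\T}(\tau-t)$, which, combined with differentiating along the matrix ODEs, delivers the commutation identity and pointwise symmetries needed above and closes the verification of \eqref{eq:delayLyapunovMatrixeq1}.
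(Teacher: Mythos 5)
The paper's own ``proof'' of this lemma is a one-line citation to Kharitonov's Problem~8 and Plischke's thesis, so your self-contained verification is a genuinely different route, and most of it is sound: the de-vectorization is correct ($\dot{\hat U}=\hat V A$, $\dot{\hat V}=-A^{\T}\hat U$, with $\hat U(0)=\hat V(\tau)$ and $\hat V(0)A+A^{\T}\hat U(\tau)=-C^{\T}C$), your check of the algebraic condition $-C^{\T}C=U(-\tau)A+A^{\T}U^{\T}(-\tau)$ goes through exactly as written, your reductions of the other two conditions are accurate, and the involution $(\hat U,\hat V)\mapsto\bigl(\hat V^{\T}(\tau-\cdot),\hat U^{\T}(\tau-\cdot)\bigr)$ really does preserve the BVP, so that, \emph{granting} uniqueness of the BVP solution (which you assert but do not establish --- it requires the Lyapunov spectral condition, here supplied by exponential stability), you obtain $\hat U(t)=\hat V^{\T}(\tau-t)$.

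The gap is the very last step. The relation $\hat U(t)=\hat V^{\T}(\tau-t)$ does not yield the identities you still need: differentiating it along the matrix ODEs only reproduces the relation multiplied by $A$ on the right, and both the commutation identity $A^{\T}\hat U(s)=\hat U^{\T}(s)A$ and the pointwise symmetry of $\hat V(s)$ are, modulo that relation, equivalent to the single statement that $\hat U(s)$ is symmetric for \emph{every} $s\in[0,\tau]$. That statement is false for general $A$: the delay Lyapunov matrix itself furnishes the BVP solution via $(\hat U,\hat V)=\bigl(U(\cdot),U(\cdot-\tau)\bigr)$, and $U(t)$ is symmetric only at $t=0$ (already in the limit $\tau\to0$, symmetry of $U(0)+t\,U(0)A$ would force $U(0)A=A^{\T}U(0)=-\tfrac{1}{2}C^{\T}C$, which the Lyapunov equation does not provide). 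Consequently your verification cannot close in general; what the displayed formula actually returns on $[0,\tau]$ is $\tfrac{1}{2}\bigl(U(t)+U^{\T}(t)\bigr)$. This is harmless for the paper, which invokes the lemma only at $t=0$ --- where $U(0)$ is symmetric and the boundary condition $\hat U(0)=\hat V(\tau)$ gives $U(0)=\hat U(0)$ outright --- and only for the scalar system of Lemma~\ref{H2normOfScalerDelayedSystem}, where everything commutes. To finish your argument you must either add a hypothesis guaranteeing pointwise symmetry of $U(t)$ (e.g.\ $A$ symmetric and commuting with $C^{\T}C$, or the scalar case), or weaken the conclusion to the value at $t=0$, which is all that is ever used.
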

				\begin{proof}
					Problem (\ref{vectorProb}) under condition (\ref{vectorProbBound}) is a vectorization of the problem given in (\cite[Problem 8]{Kharitonov:2006}) which solves (\ref{eq:delayLyapunovMatrixeq1}). See \cite{Kharitonov:2006}, \cite[6.45 \& 6.40]{Plischke:2005} for more details.
				\end{proof}
				
				\begin{definition}
					A function $h$ on the set of $\N \times \N$ Hermitian matrices is called unitary invariant, provided
					\begin{eqnarray*}
						h(U^{-1}YU)=h(Y),
					\end{eqnarray*}
					for any unitary matrix $U \in \mathbb{R}^{\N \times \N}$.
				\end{definition}

				\begin{definition}
					A function $f: \mathbb{R}^{\N} \to \mathbb{R}$ is symmetric if for all permutation matrices $P \in \mathbb{R}^{\N \times \N}$,
					\begin{eqnarray*}
						f(Pz)=f(z).
					\end{eqnarray*}
				\end{definition}

				\begin{theorem}[\cite{Varberg:1973}]\label{Schur_H}
					A convex function $f$ is Schur-convex if it is symmetric.
				\end{theorem}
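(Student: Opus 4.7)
The plan is to invoke the Birkhoff--von Neumann theorem, which states that every doubly stochastic matrix lies in the convex hull of the permutation matrices. That is, for a doubly stochastic $D \in \mathbb{R}^{n \times n}$, there exist permutation matrices $P_1, \ldots, P_m$ and nonnegative weights $\alpha_1, \ldots, \alpha_m$ with $\sum_{i=1}^m \alpha_i = 1$ such that $D = \sum_{i=1}^m \alpha_i P_i$. Given this decomposition, the proof of Schur-convexity reduces to a one-line combination of convexity and symmetry.

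Concretely, for any $x \in \mathbb{R}^n$ I would write $Dx = \sum_{i=1}^m \alpha_i (P_i x)$ and then apply convexity of $f$ to obtain
\begin{equation*}
f(Dx) \;=\; f\!\left(\sum_{i=1}^m \alpha_i P_i x\right) \;\leq\; \sum_{i=1}^m \alpha_i\, f(P_i x).
\end{equation*}
Symmetry of $f$ then gives $f(P_i x) = f(x)$ for each $i$, so the right-hand side collapses to $\big(\sum_{i=1}^m \alpha_i\big) f(x) = f(x)$. This yields $f(Dx) \leq f(x)$ for every doubly stochastic $D$ and every $x$, which is exactly the definition of Schur-convexity given in Definition~\ref{SchurDef}.

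There is no serious obstacle here; the only nontrivial ingredient is the Birkhoff--von Neumann theorem, which is a standard tool and can be cited. Everything else is bookkeeping: convexity handles the convex combination, and symmetry handles the permutation actions. I would not need to introduce any additional notation or machinery beyond what the paper already uses.
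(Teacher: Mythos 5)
Your proposal is correct and follows essentially the same route as the paper: the paper cites a packaged corollary (that convexity gives $f(Dz)\leq f(P^*z)$ for some permutation matrix $P^*$), which is itself just the Birkhoff--von Neumann decomposition plus Jensen's inequality that you spell out explicitly, and then both arguments finish by applying symmetry to eliminate the permutation. Your version is merely more self-contained; no gap.
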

				
				\begin{proof}
					Since $f$ is convex, there is a permutation matrix $P^*$ such that,
					\begin{align*}
					f(Dz)\leq f(P^*z).
					\end{align*}
					for all doubly stochastic matrices $D$\cite[Corollary 8.7.4]{horn2012matrix}. Further, by considering that $f$ is symmetric,
					
					\begin{align*}
					f(Dz)\leq f(z),
					\end{align*}
					implying Schur-convexity of $f$. 
					For more details see \cite{Varberg:1973}.
				\end{proof}

				\begin{lemma}\label{H2normOfScalerDelayedSystem}
					For $0 \leq \tau<\frac{\pi}{2\lambda_{\N}}$, the integral
					\begin{eqnarray}\label{eq:integralComplex}
					\int_{-\infty}^{+\infty}{\frac{1}{\big(j\omega+\lambda e^{-j\tau\omega}\big)\big({-j\omega+\lambda e^{j\tau\omega}}\big)} ~d\omega},
					\end{eqnarray}
					is well defined and equals
					\begin{eqnarray*}
						\frac{\pi \cos(\lambda \tau)} {\lambda-\lambda \sin(\lambda \tau)}.
					\end{eqnarray*}
				\end{lemma}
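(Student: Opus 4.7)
The plan is to recognize the integrand as $|\hat{G}(j\omega)|^2$ for the scalar first-order time-delay system
\begin{equation*}
\hat{G}:\quad \dot{\psi}(t) = -\lambda\,\psi(t-\tau) + u(t), \qquad \phi(t) = \psi(t),
\end{equation*}
whose transfer function is $\hat{G}(s) = \big(s + \lambda e^{-\tau s}\big)^{-1}$. Under the assumption $\tau\lambda < \pi/2$, the characteristic quasi-polynomial $s + \lambda e^{-\tau s}$ has all roots in the open left half-plane, so $\hat{G}$ is exponentially stable; by Parseval, the integral \eqref{eq:integralComplex} then converges and equals $2\pi\,\|\hat{G}\|_{\mathcal{H}_2}^2$.

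Next I would invoke Theorem \ref{DelayH2} with the scalar data $A=-\lambda$, $B=C=1$ to express $\|\hat{G}\|_{\mathcal{H}_2}^2 = U(0)$, where $U:[-\tau,\tau]\to\mathbb{R}$ solves the scalar delay Lyapunov equation
\begin{equation*}
\dot{U}(t) = -\lambda\,U(t-\tau)\ \text{on}\ [0,\tau],\qquad U(t)=U(-t),\qquad -1 = -2\lambda\,U(-\tau).
\end{equation*}
The boundary condition immediately yields $U(\tau)=U(-\tau)=\tfrac{1}{2\lambda}$, and the symmetry $U(t-\tau)=U(\tau-t)$ for $t\in[0,\tau]$ recasts the delay equation as the functional equation $\dot{U}(t) = -\lambda\,U(\tau-t)$.

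To solve this functional equation, differentiate once more: $\ddot{U}(t) = \lambda\,\dot{U}(\tau-t) = -\lambda^2\,U(t)$, so $U$ is a linear combination $U(t) = a\cos(\lambda t) + b\sin(\lambda t)$ on $[0,\tau]$. Substituting this ansatz back into $\dot{U}(t)=-\lambda U(\tau-t)$ and matching the $\cos(\lambda t)$ and $\sin(\lambda t)$ coefficients yields the single consistent constraint $b\,(1+\sin(\lambda\tau)) = -a\cos(\lambda\tau)$. Enforcing $U(\tau)=\tfrac{1}{2\lambda}$ then fixes the constants and gives
\begin{equation*}
U(0) = a = \frac{1+\sin(\lambda\tau)}{2\lambda\cos(\lambda\tau)}.
\end{equation*}

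Finally, multiplying by $2\pi$ and using the identity $\tfrac{1+\sin(\lambda\tau)}{\cos(\lambda\tau)} = \tfrac{\cos(\lambda\tau)}{1-\sin(\lambda\tau)}$ (which follows from $1-\sin^2 = \cos^2$ and holds because $\tau\lambda<\pi/2$ keeps $\cos(\lambda\tau)$ strictly positive) produces the claimed value $\frac{\pi\cos(\lambda\tau)}{\lambda-\lambda\sin(\lambda\tau)}$. The main obstacle is closing the delay Lyapunov system in explicit form: the trick is that the symmetry of $U$ turns a delay equation into a functional equation whose differentiation reduces it to a plain second-order constant-coefficient ODE, after which the boundary condition uniquely pins down the constants.
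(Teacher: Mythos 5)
Your proof is correct and follows the same overall strategy as the paper---recasting the integral as ($2\pi$ times) the squared $\mathcal{H}_2$-norm of the scalar delay system $\dot\psi(t)=-\lambda\psi(t-\tau)+u(t)$ and evaluating it through the delay Lyapunov matrix $U(0)$ of Theorem~\ref{DelayH2}---but you execute the decisive computation by a genuinely different and more elementary means. The paper obtains $U(0)$ by invoking Lemma~\ref{lemma2}, the vectorized two-point boundary-value formulation of Kharitonov's delay Lyapunov equation, and simply reports the resulting value \eqref{eq:U0} without showing the calculation. You instead solve the scalar delay Lyapunov equation directly: the symmetry $U(t)=U(-t)$ converts $\dot U(t)=-\lambda U(t-\tau)$ into the reflection equation $\dot U(t)=-\lambda U(\tau-t)$ on $[0,\tau]$, one further differentiation yields $\ddot U=-\lambda^2 U$, and the algebraic condition $U(\pm\tau)=\tfrac{1}{2\lambda}$ together with coefficient matching pins down $U(0)=\tfrac{1+\sin(\lambda\tau)}{2\lambda\cos(\lambda\tau)}$, which the identity $\tfrac{1+\sin x}{\cos x}=\tfrac{\cos x}{1-\sin x}$ (valid here since $\cos(\lambda\tau)>0$) reconciles with the paper's \eqref{eq:U0}. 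Your route buys a hand-verifiable, self-contained derivation of the key constant---and your two matching constraints correctly collapse to one, as a quick check with $\cos^2=1-\sin^2$ confirms---at the cost of being specific to the scalar case, whereas Lemma~\ref{lemma2} covers general matrix data. The only cosmetic difference is the bookkeeping of the factor $2\pi$: the paper absorbs $\sqrt{2\pi}$ into the input matrix $B$ so that the integral equals $\|\hat G\|_{\mathcal{H}_2}^2$ directly, while you keep $B=1$ and carry the $2\pi$ through Parseval; both give the same value.
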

				
				\begin{proof}
					Integral (\ref{eq:integralComplex}), is square of ${\HH}_2$-norm of a system with the following transfer function
					\begin{eqnarray*}
						\hat{G}(s)=\frac{\sqrt{2\pi}}{s+\lambda e^{-\tau s}},
					\end{eqnarray*}
					for which we have the following state space representation
					\begin{eqnarray}\label{ssRep}
					\begin{cases}
					\dot{\psi}(t)&=-\lambda \psi(t-\tau) + \sqrt{2\pi} u(t),\\
					\phi(t)&=  \psi(t).
					\end{cases}
					\end{eqnarray}
					To find value of delay Lyapunov matrix, we apply Lemma \ref{lemma2} to the system (\ref{ssRep}), and we get
					\begin{eqnarray}\label{eq:U0}
					U(0)=\frac{\cos(\lambda \tau)}{2\big(\lambda-\lambda\sin(\lambda\tau)\big)}.
					\end{eqnarray}
					Moreover, by substituting $U(0)$ from equality (\ref{eq:U0}) into equation (\ref{eq:H2normDelaySystem}), we conclude that
					\begin{eqnarray}\label{H2Ghat}
					\|\hat{G}\|_{\HH_2}^2 =\frac{\pi \cos(\lambda \tau)} {\lambda-\lambda \sin(\lambda \tau)}.
					\end{eqnarray}
				\end{proof}
				
				\begin{lemma}\cite[2.3]{zhang2002spectral}\label{specRad}
					For an unweighted graph $\mathcal{G}$ with $\N$ nodes and at least one edge, we have
					$$\lambda_n(\mathcal{G}) \geq d_{\mbox{max}} +1,$$
					furthermore, if $\mathcal{G}$ is connected equality holds if and only if 
					$$d_{\mbox{max}} =\N -1.$$
				\end{lemma}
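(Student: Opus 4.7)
The plan is to prove the bound through the Rayleigh quotient characterization $\lambda_n(L) = \max_{x \neq 0} \frac{x^{\T} L x}{x^{\T} x}$ by exhibiting an explicit test vector supported on a vertex of maximum degree and its neighborhood. Let $v \in \V$ be a vertex with $d_v = d_{\max}$, write $d = d_{\max}$, and set
\begin{equation*}
x_v = d,\qquad x_u = -1 \text{ for } u \in N(v),\qquad x_w = 0 \text{ for } w \notin \{v\} \cup N(v).
\end{equation*}
Then $x^{\T} x = d^2 + d = d(d+1)$, and using the edge-sum form $x^{\T} L x = \sum_{\{i,j\}\in \EE}(x_i - x_j)^2$ I would decompose the contributions by edge type: the $d$ edges incident to $v$ each contribute $(d-(-1))^2 = (d+1)^2$; edges between two vertices in $N(v)$ contribute $0$; edges from a vertex of $N(v)$ to a vertex outside $\{v\} \cup N(v)$ contribute $1$; all remaining edges contribute $0$. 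This yields $x^{\T} L x \geq d(d+1)^2$, hence $\lambda_n \geq (d+1)^2 / (d+1) = d+1$, which is the desired lower bound.

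For the equality part, first I would handle the sufficient direction: assume $\mathcal{G}$ is connected and $d_{\max} = n-1$. Then $N(v) = \V \setminus \{v\}$ so no ``outside'' vertices exist, and the Rayleigh inequality above is already tight. I would then verify directly that $Lx = (d+1)x$ by computing, for each $u \in N(v)$, $(Lx)_u = d_u(-1) - d - \sum_{w \in N(u),\, w \neq v} x_w = -d_u - d + (d_u - 1) = -(d+1)$, and $(Lx)_v = d \cdot d - d \cdot (-1) = d(d+1)$, confirming the eigenpair and therefore equality. For the necessary direction, assume $\mathcal{G}$ is connected with $d_{\max} \leq n-2$. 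Then there exists $w \notin \{v\} \cup N(v)$, and by connectedness some path from $v$ to $w$ must leave $\{v\} \cup N(v)$ via an edge from a vertex of $N(v)$ to a vertex outside $\{v\} \cup N(v)$. That edge contributes a strictly positive term of at least $1$ to $x^{\T} L x$, giving the strict inequality $\lambda_n \geq \frac{d(d+1)^2 + 1}{d(d+1)} > d+1$.

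The main obstacle is the equality characterization rather than the bound itself: the Rayleigh quotient argument alone only supplies ``$\geq$'', so one must justify both that $x$ is an actual eigenvector when $d_{\max} = n-1$ (to exclude the possibility of a still larger eigenvalue tying with $d+1$) and that connectedness forces the ``outside'' contribution to be nonzero when $d_{\max} < n-1$. The former is handled by the direct componentwise computation of $Lx$ above; the latter follows from a short connectivity argument about the first edge leaving $\{v\} \cup N(v)$ along a $v$-to-$w$ path. Everything else is routine computation with the Laplacian quadratic form.
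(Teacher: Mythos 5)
The paper does not actually prove this lemma --- it is imported verbatim from \cite[2.3]{zhang2002spectral} with no argument given --- so your proof has to stand on its own. The lower bound and the ``only if'' half of the equality claim are correct as you present them: the test vector $x$ supported on a maximum-degree vertex $v$ and its neighborhood gives $x^{\T}Lx \ge d(d+1)^2$ with $x^{\T}x = d(d+1)$, hence $\lambda_n \ge d+1$; and when $d_{\max}\le n-2$, connectivity forces some edge from $N(v)$ to a vertex outside $\{v\}\cup N(v)$, contributing an extra $(-1-0)^2=1$ and making the Rayleigh quotient strictly exceed $d+1$. Both of those steps check out.

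The gap is in the ``if'' direction. Verifying $Lx=(d+1)x$ only shows that $d+1$ lies in the spectrum, i.e.\ it re-derives $\lambda_n\ge d+1$; it does not, as you assert, ``exclude the possibility of a still larger eigenvalue.'' Exhibiting an eigenvector for one eigenvalue says nothing about the others, so equality $\lambda_n=d+1$ does not follow from the eigenpair computation. What is missing is the universal upper bound $\lambda_n\le n$ for any graph on $n$ vertices, which together with $d_{\max}+1=n$ pins down $\lambda_n=n=d_{\max}+1$. The fix is short --- for instance
\begin{equation*}
x^{\T}Lx=\sum_{\{i,j\}\in\EE}(x_i-x_j)^2\;\le\;\sum_{i<j}(x_i-x_j)^2\;=\;n\,x^{\T}x-(\mathbf{1}_{\N}^{\T}x)^2\;\le\;n\,x^{\T}x,
\end{equation*}
or equivalently $\lambda_n(L_{\mathcal G})=n-\lambda_2(L_{\bar{\mathcal G}})\le n$ --- but as written the sufficiency direction is not established by what you prove.
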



				\bibliographystyle{IEEEtr}
				\bibliography{IEEE_TAC-August1-2018}

				\begin{IEEEbiography}[{\includegraphics[width=1in,height=1.25in,clip,keepaspectratio]{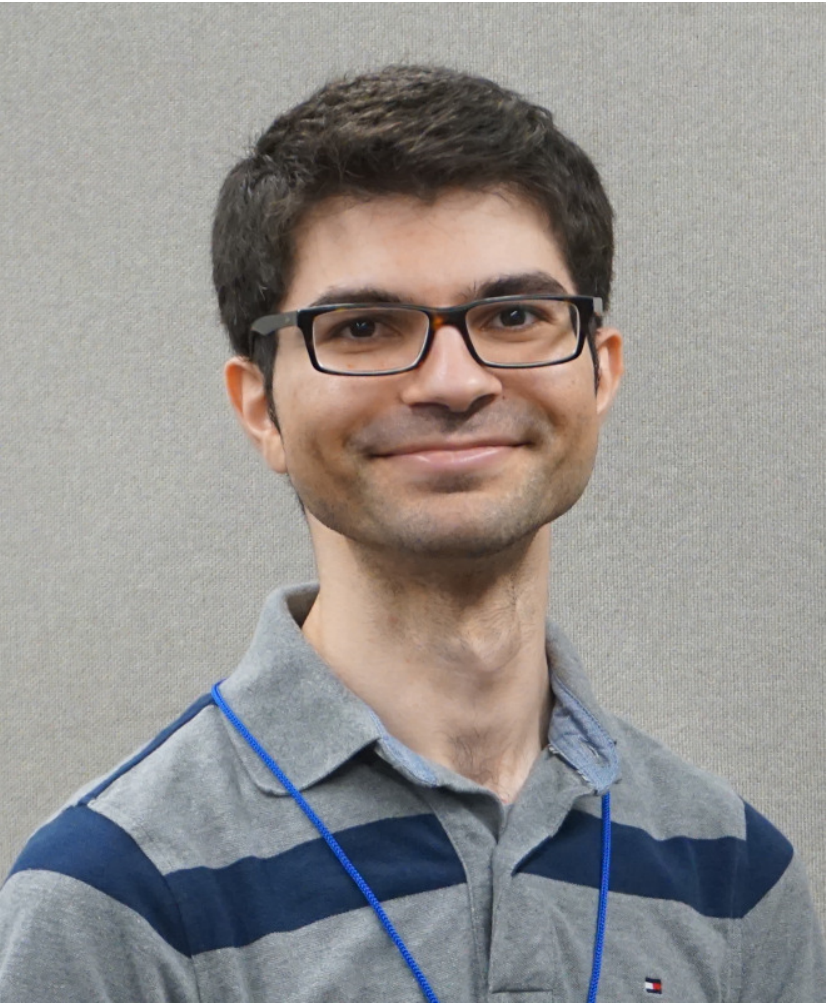}}]{Yaser Ghaedsharaf}
					 received his B.Sc. degree in Mechanical Engineering from Sharif University of Technology in 2013. He is currently pursuing a Ph.D. in the Department of Mechanical Engineering \& Mechanics at Lehigh University. He is the Runner-Up for the Best Student Paper Award in the 6th IFAC Workshop on Distributed Estimation and Control in Networked Systems in 2016 . Furthermore, he was a recipient of the Rossin College of Engineering Doctoral Fellowship in 2016, the RCEAS fellowship award in 2018, and the Mountaintop Research Fellowship in 2018. His research interests include machine learning, analysis and optimal design of networked control systems with applications in distributed control and cyber-physical systems, and robotics.
				\end{IEEEbiography}

				\vspace{-0.8cm} 
				\begin{IEEEbiography}[{\includegraphics[width=1in,height=1.25in,clip,keepaspectratio, trim= 0 0 0 0 ]{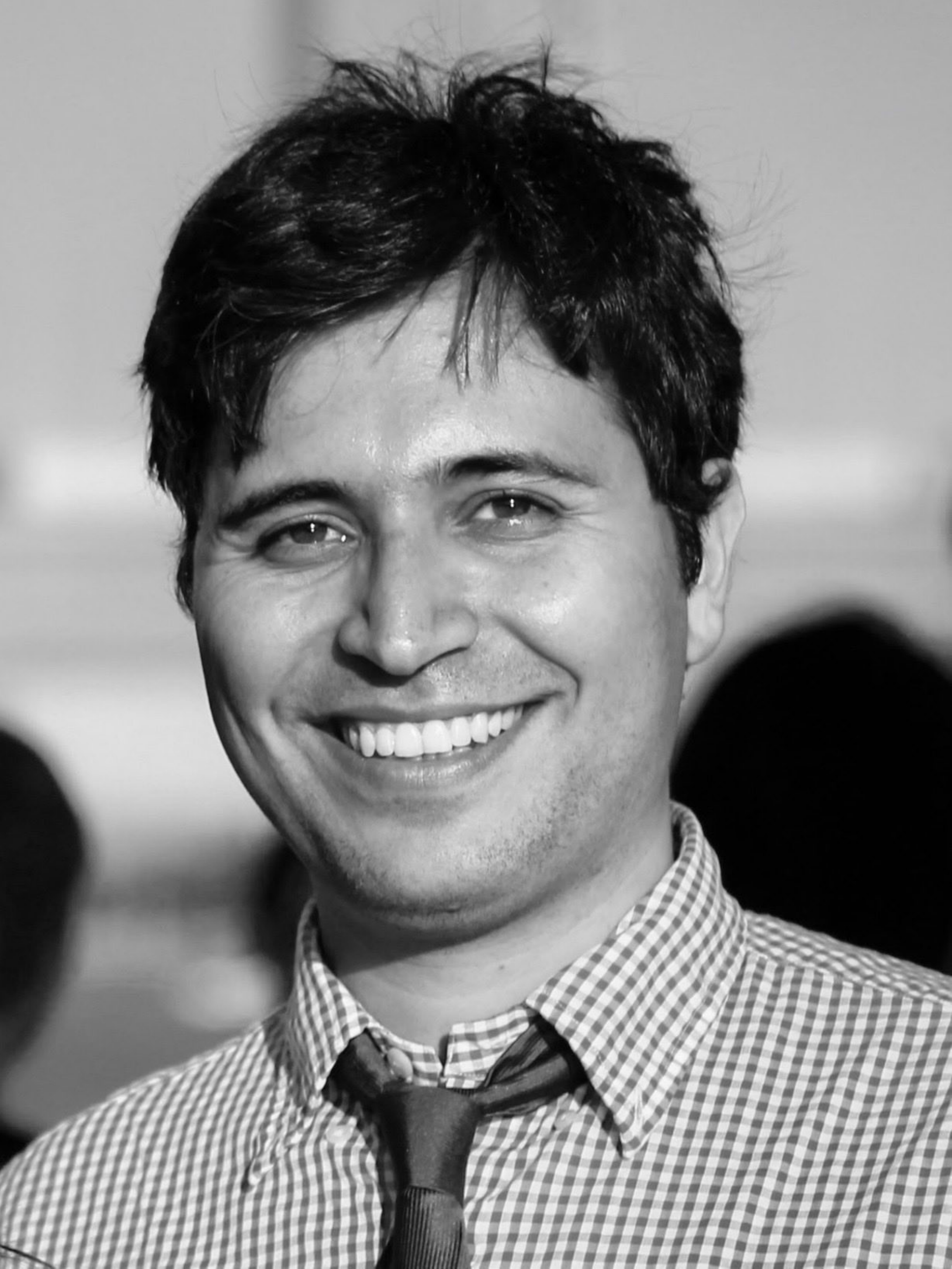}}]{Milad Siami} 
					(S'12-M'18) received his dual B.Sc. degrees in electrical engineering and pure mathematics from Sharif University of Technology in 2009, M.Sc. degree in electrical engineering from Sharif University of Technology in 2011, and M.Sc. and Ph.D. degrees in mechanical engineering from Lehigh University in 2014 and 2017 respectively. 
					From 2009 to 2010, he was a research student at the Department of Mechanical and Environmental Informatics at the Tokyo Institute of Technology, Tokyo, Japan. He is currently a postdoctoral associate in the Institute for Data, Systems, and Society at MIT. His research interests include distributed control systems, distributed optimization, and applications of fractional calculus in engineering. 
					Dr. Siami received a Gold Medal of National Mathematics Olympiad, Iran (2003) and the Best Student Paper Award at the 5th IFAC Workshop on Distributed Estimation and Control in Networked Systems (2015). Moreover, he was awarded  RCEAS Fellowship (2012), Byllesby Fellowship (2013), Rossin College Doctoral Fellowship (2015), and Graduate Student Merit Award (2016) at Lehigh University.
				\end{IEEEbiography}
				\vspace{-0.8cm}

\begin{IEEEbiography}[{\includegraphics[width=1in,height=1.25in,clip,keepaspectratio]{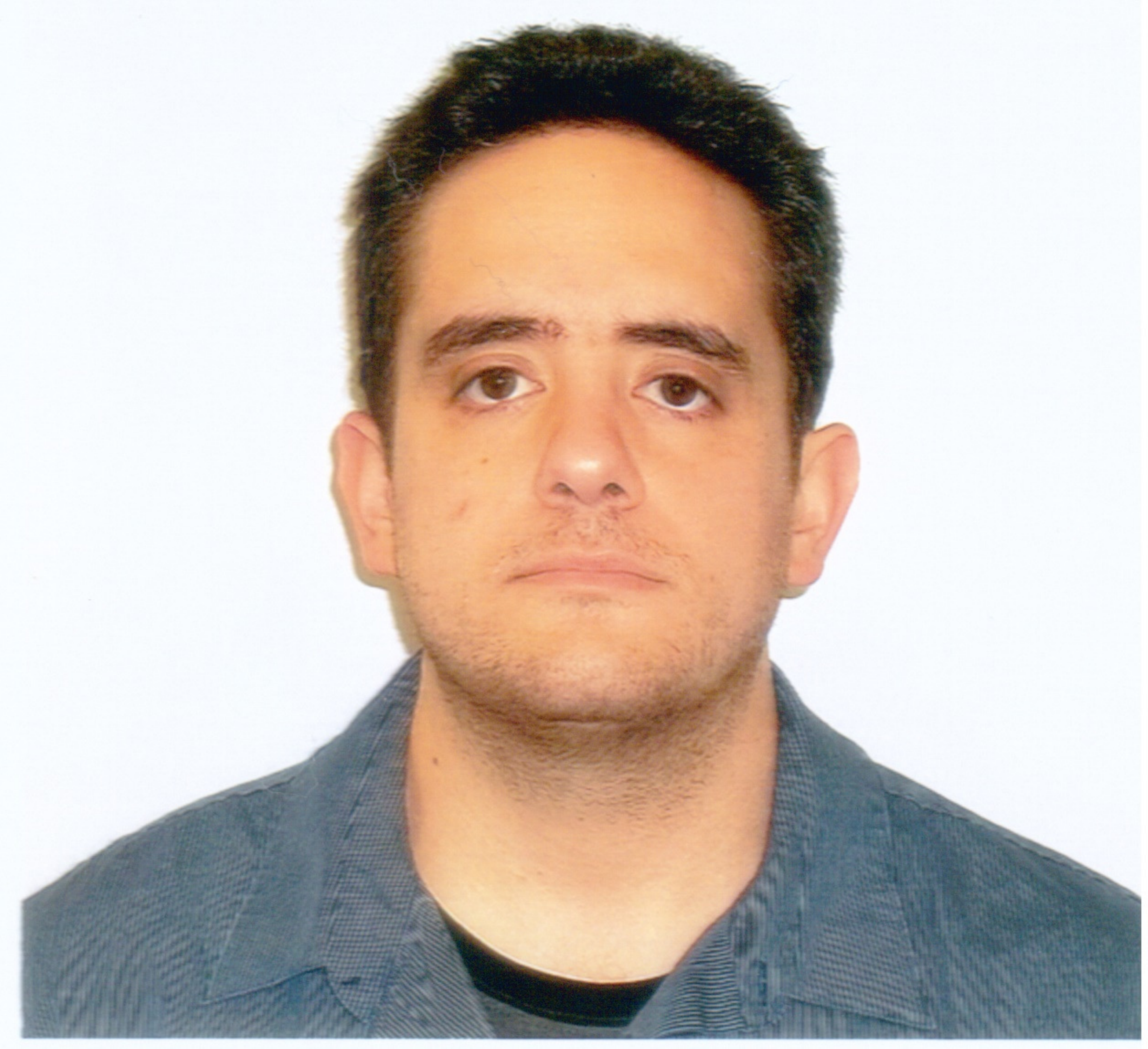}}]{Christoforos Somarakis}
					 received the B.S. degree in
Electrical Engineering from the National Technical University of Athens, Athens, Greece, in 2007 
and and the M.S. and Ph.D. degrees in applied mathematics from the University of Maryland at College Park, in 2012 and 2015, respectively. He is currently a Research Scientist in the Department of Mechanical Engineering 
and Mechanics at Lehigh University. His research interests include analysis and optimal design of networked control systems with applications in distributed control and cyber-physical systems.
				\end{IEEEbiography}

				\begin{IEEEbiography}[{\includegraphics[width=1in,height=1.25in,clip,keepaspectratio]{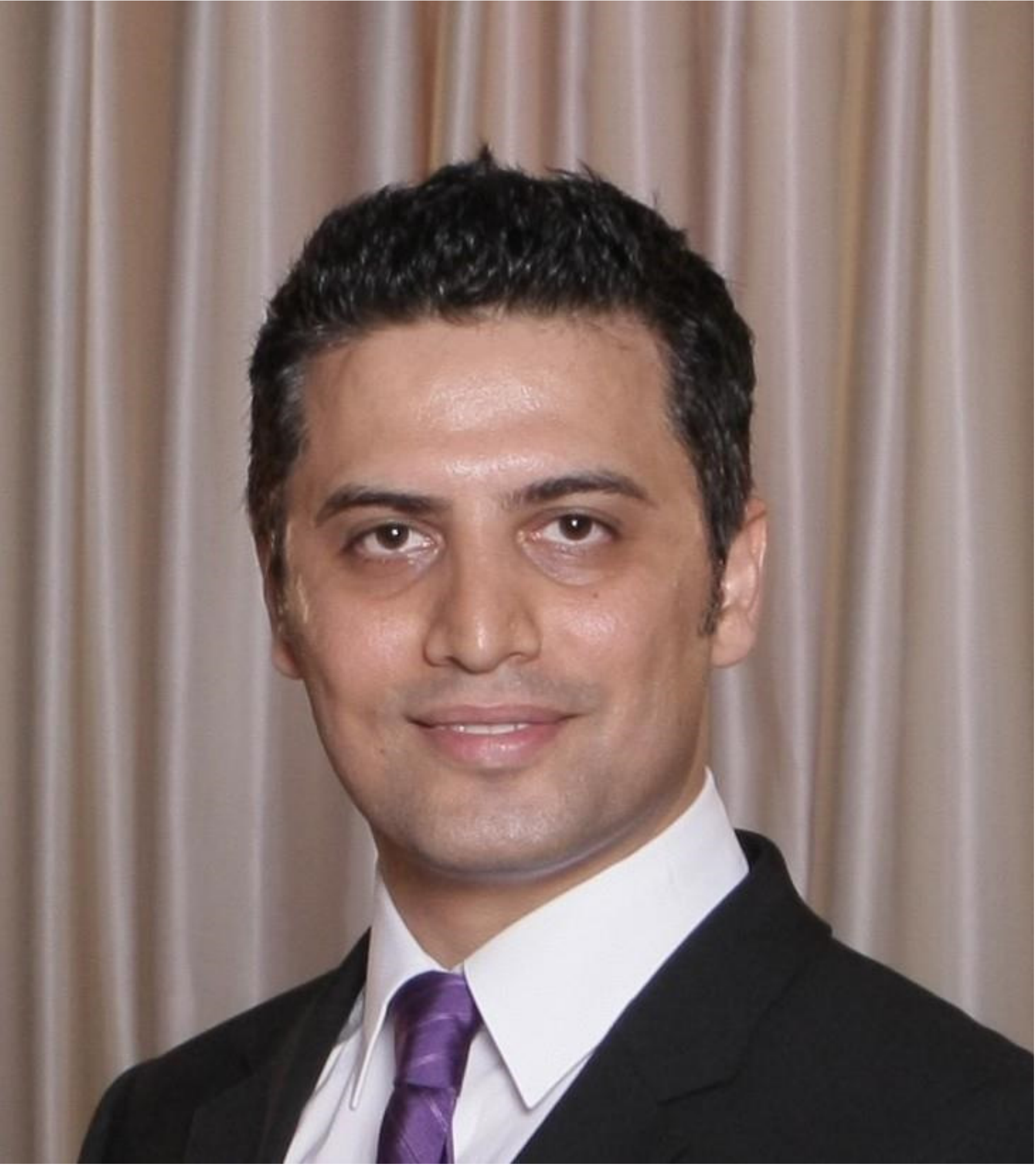}}]{Nader Motee}
					(S'99-M'08-SM'13) received his B.Sc. degree in Electrical Engineering from Sharif University of Technology in 2000, M.Sc. and Ph.D. degrees from University of Pennsylvania in Electrical and Systems Engineering in 2006 and 2007 respectively. From 2008 to 2011, he was a postdoctoral scholar in the Control and Dynamical Systems Department at Caltech. He is currently an  Associate Professor in the Department of Mechanical Engineering and Mechanics at Lehigh University. His current research area is distributed dynamical and control systems with particular focus on issues related to sparsity, performance, and robustness. He is a past recipient of several awards including the 2008 AACC Hugo Schuck best paper award, the 2007 ACC best student paper award, the 2008 Joseph and Rosaline Wolf best thesis award, a 2013 Air Force Office of Scientific Research  Young Investigator Program award (AFOSR YIP), a 2015 NSF Faculty Early Career Development (CAREER) award, and a 2016 Office of Naval Research Young Investigator Program award (ONR YIP).
				\end{IEEEbiography}
				
			\end{document}